\setlist{nosep}
\newtheorem{theorem}{Theorem}[section]
\newtheorem{lemma}[theorem]{Lemma}
\newtheorem{corollary}[theorem]{Corollary}
\newtheorem{proposition}[theorem]{Proposition}
\theoremstyle{definition}
\newcommand{\vecI}{I}
\newcommand{\sumI}{\left| \vecI \right|}
\newcommand{\makevec}[1]{#1}
\newcommand{\dimn}{N} % total number of bosonic modes
\DeclareMathOperator*{\argmin}{arg\,min}
\DeclareMathOperator{\Haf}{\textnormal{Haf}}
\begin{document}
	\title{Estimating the Percentage of GBS Advantage in Gaussian Expectation Problems\thanks{This work is supported by ReNewQuantum from the ERC Synergy program under grant agreement No. 810573, CLUSTEC from the Horizon Europe program No. 101080173,  QCI.DK from the Digital Europe program No. 101091659, TopQC2X from Danmarks Innovationsfond Instrument Grand Solutions Award No. 3200-00030B, and the Simons Foundation collaboration grant on New Structures in Low-Dimensional Topology.}}
	\author{Jørgen Ellegaard Andersen$^{1,2}$ \& Shan Shan$^{1}$ \\
		{\small $^{1}$Center for Quantum Mathematics, University of Southern Denmark \\
			$^{2}$ Danish Institute of Advanced Study, University of Southern Denmark} \\
	}
	\maketitle
	
	\begin{abstract}
		Gaussian Boson Sampling (GBS), which can be realized with a photonic quantum computing model, perform some special kind of sampling tasks. In \cite{shan2024companion}, we introduced algorithms that use GBS samples to approximate Gaussian expectation problems. We found a non-empty open subset of the problem space where these algorithms achieve exponential speedup over the standard Monte Carlo (MC) method. This speedup
		is defined in terms of the guaranteed sample size to
		reach the same accuracy $\epsilon$ and success probability $\delta$ under the $(\epsilon, \delta)$ multiplicative error approximation scheme.
		In this paper, we enhance our original approach by optimizing the average photon number in the GBS distribution to match the specific Gaussian expectation problem. We provide updated estimates of the guaranteed sample size for these improved algorithms and quantify the proportion of problem space where they outperform MC. Numerical results indicate that the proportion of the problem space where our improved algorithms have an advantage is substantial, and the advantage gained is significant. Notably, for certain special cases, our methods consistently outperform MC across nearly 100\% of the problem space. 
	\end{abstract}
	
	\section{Introduction}
The Gaussian expectation problem $\mathcal{I}_{\Haf}$, introduced in \cite{shan2024companion}, involves computing the integral
\begin{align}
	\mu_{\Haf} = \int_{\mathbb{R}^N} f(x) g(x) \dd x,
	\label{eq:I}
\end{align}
where
\begin{equation*}
	g(x) = (2\pi)^{-N/2} (\det B)^{-1/2} \exp(-\frac{1}{2}x^\intercal B^{-1} x)
\end{equation*}
is the Gaussian probability density function with zero mean and covariance matrix $B$ (real, symmetric and positive definite), and
\begin{align*}
	f(\makevec{x})= \sum_{k=0}^K \sum_{\sumI = k} a_{\vecI} \makevec{x}^{\vecI}, ~~~~ a_I \in \mathbb{R}
\end{align*}
is either a multivariate polynomial ($K < \infty$) or a formal power series ($K = \infty$) with 
\begin{gather*}
	\vecI = (i_1, \dots, i_\dimn), ~~
	\sumI = i_1 + i_2 \dots + i_N, ~~
	x^I = x_1^{i_1} \dots x_N^{i_N}.
\end{gather*}
Using Wick's theorem, $\mu_{\Haf}$ can be expressed as a weighted sum of matrix hafnians
\begin{align}
	\mu_{\Haf} = \sum_{k=0}^K \sum_{\sumI = 2k} a_{\vecI} \Haf(B_I),
	\label{eq:muhaf}
\end{align}
subject to certain integrability conditions for $K = \infty$ (details in \cite{shan2024companion}). The $(\epsilon, \delta)$-multiplicative error approximation problem $\mathcal{I}^{\times}_{\Haf}(\epsilon, \delta)$ is to find $e$ such that for any $0 < \epsilon, \delta < 1$,
\begin{equation}
	P\left(\lvert \mu_{\Haf} - e \rvert > \epsilon \vert \mu_{\Haf} \vert \right) < \delta, 
	\label{eq:adderror}
\end{equation}
provided that $\mu_{\Haf} \neq 0.$

An important family of special cases $\mathcal{I}_{\Haf^2}$, is defined by the integral
\begin{align}
	\mu_{\Haf^2} = \int_{\mathbb{R}^{2N}} {f}(p, q) {g}(p, q) \, \dd p \dd q, \label{eq:I2}
\end{align}
where
\begin{align}
	{g}(p, q) = (2\pi)^{-N} (\det B)^{-1} \exp(-\frac{1}{2} [p, q]^\intercal (B \oplus B)^{-1} [p, q]) \label{eq:I2a}
\end{align}
is the Gaussian probability density function with zero mean and covariance matrix $B \oplus B$,
and
\begin{align*}
	{f}(p, q) = \sum_{k=0}^K \sum_{\sumI = k} a_{\vecI} p^I q^I, ~~~~~a_I \in \mathbb{R}.% \label{eq:I2b}
\end{align*}
Again, using Wick's theorem, $\mu_{\Haf^2}$ can be written as 
\begin{align}
	\label{eq:muhafsq}
	\mu_{\Haf^2} = \sum_{k=0}^K \sum_{\sumI = 2k} a_{\vecI} \Haf(B_I)^2.
\end{align}
The $(\epsilon, \delta)$-multiplicative error approximation problem $\mathcal{I}^\times_{\Haf^2}(\epsilon, \delta)$ is defined similarly as in \eqref{eq:adderror}.

In \cite{shan2024companion}, we presented solutions to $\mathcal{I}^{\times}_{\Haf}(\epsilon, \delta)$ and $\mathcal{I}^\times_{\Haf^2}(\epsilon, \delta)$ using samples from a Gaussian Boson Sampling (GBS) device \cite{aaronson2011computational, hamilton2017gaussian, zhong2020quantum, madsen2022quantum}, provided that the eigenvalues of $B$ are strictly less than 1. We introduced the probability estimator (GBS-P) for solving $\mathcal{I}^{\times}_{\Haf}(\epsilon, \delta)$ and the importance estimator (GBS-I) for solving $\mathcal{I}^{\times}_{\Haf^2}(\epsilon, \delta)$. We proved that there exists a non-empty open subset of the problem space such that the GBS estimators achieve exponential speedup over the standard Monte Carlo (MC) method. Here, the speedup is defined in terms of the guaranteed sample size, as derived using Markov's inequality or Chebyshev's inequality, to reach the same accuracy $\epsilon$ and success probability $\delta$.

In this paper, we enhance our original methods by optimizing the average photon number in the GBS distribution to match the specific Gaussian expectation problem. To this end, we first introduce a partition by degree of $\mu_{\Haf}$, given by
\begin{align}
	\mu_{\Haf} = \mu_{\Haf, 0} + \mu_{\Haf, 1} + \mu_{\Haf, 2} + \dots + \mu_{\Haf, K},
	\label{eq:partition}
\end{align}
where
\begin{align*}
	\mu_{\Haf, k} = \sum_{\sumI = 2k} a_{\vecI} \Haf(B_I).
\end{align*}
For each degree $\mu_{\Haf, k}$, we define the corresponding multiplicative error approximation problem $\mathcal{I}^{\times}_{\Haf, k}(\epsilon, \delta)$. By De Morgan's Law and the union bound, solving $\mathcal{I}^{\times}_{\Haf}(\epsilon, \delta)$ reduces to solving each individual $\mathcal{I}^{\times}_{\Haf, k}(\epsilon, \delta)$ (see Lemma \ref{lem:kslice} for details). We focus on  $\mathcal{I}^{\times}_{\Haf, K}(\epsilon, \delta)$, with a slight abuse of notation where
\begin{align*}
	\mathcal{I}^{\times}_{\Haf}(\epsilon, \delta) = \mathcal{I}^{\times}_{\Haf, K}(\epsilon, \delta) ~~~~\textnormal{and}~~~~\mu_{\Haf} = \mu_{\Haf, K}.
\end{align*} 
The same partitioning strategy applies to $\mu_{\Haf^2}$, leading to the corresponding problem $\mathcal{I}^{\times}_{\Haf^2, K}(\epsilon, \delta)$. Again, by an abuse of notation, we set
\begin{align*}
	\mathcal{I}^{\times}_{\Haf^2}(\epsilon, \delta) = \mathcal{I}^{\times}_{\Haf^2, K}(\epsilon, \delta) ~~~~\textnormal{and}~~~~\mu_{\Haf^2} = \mu_{\Haf^2, K}.
\end{align*}

We refine our estimator for each degree $K$ problem by using samples from an optimized GBS distribution. The key idea is to select a distribution $P_{tB}$ where the average photon number equals $2K$. Precisely, let $0 < \lambda_N \leq \dots \leq \lambda_1 <1$ be the eigenvalues of $B$ and let $t \in (0, \lambda_1^{-1})$. The GBS distribution for the scalar-multiplied matrix $tB$ is given by
\begin{align*}
	P_{tB}(I) & = \frac{d_t}{I!} \Haf((tB)_I)^2, ~~~~ 	d_t= \prod_{n = 1}^N \sqrt{1 - t^2\lambda_n^2}.
\end{align*}
Here, $(tB)_I$ denotes the sub- or super-matrix of $tB$ determined by $I$.
We define the \textit{average photon number} of samples from $P_{tB}$ to be
\begin{align*}
	m_{tB} = \mathbb{E}_{I \sim P_{tB}}[\vert I \vert],
\end{align*}
and we want
\begin{align*}
	m_{tB} = 2K.
\end{align*}
Let $I_1, \dots, I_n$ be i.i.d samples from $P_{tB}$. The GBS-P estimator is defined as a weighted average of the discrete approximation of the probability distribution
\begin{align*}
	\mathcal{E}^{\textnormal{GBS-P}}_{n} = \sum_{\vert I \vert = 2K} a_I t^{-K} \sigma_I \sqrt{\tfrac{I!}{d_t}} \sqrt{\tfrac{S^{(I)}_n}{n}}
\end{align*}
where $S^{(I)}_n$ counts the occurrences of $I$ and $\sigma_I$ is the sign of $\Haf(B_I)$. Here, we assume that there is an efficient method for computing $\sigma_I$. We will return to this part in a future publication. For $\mathcal{I}^\times_{\Haf^2}(\epsilon, \delta)$, we define the GBS-I estimator to be
\begin{align*}
	\mathcal{E}^\textnormal{GBS-I}_n = \frac{1}{n} \sum_{i = 1}^n \frac{ I_i!}{d_t} a_{I_i} t^{-2K}.
\end{align*}

As in \cite{shan2024companion}, GBS-I is an unbiased estimator. Therefore, it follows directly from the weak law of large numbers (WLLN) that $\mathcal{E}^\textnormal{GBS-I}_n$ solves $\mathcal{I}^\times_{\Haf^2}(\epsilon, \delta)$ for a sufficiently large $n$. Using Chebyshev's inequality, we compute the guaranteed sample size of GBS-I to be
\begin{align*}
	n^{\textnormal{GBS-I}}_{\Haf^2} = \frac{	V^{\textnormal{GBS-I}}_{\Haf^2} }{\delta \epsilon^2 \mu_{\Haf^2}^2}, ~~~~ V^{\textnormal{GBS-I}}_{\Haf^2} = \frac{t^{-4K}}{d_t} \sum_{\vert I \vert = 2K} a^2_I I! \Haf(B_I)^2  - \mu_{\Haf^2}^2 .
\end{align*}
The estimator, GBS-P, on the other hand, is biased. So, we cannot apply the WLLN directly. We instead perform a leading-order asymptotic analysis of the squared estimation error as $n$ grows, using techniques very similar to the ones used in \cite{andersen2006asymptotics}.
\begin{theorem}
	\label{thrm:asymp}
	Let $\mathcal{E}^{\textnormal{GBS-P}}_n$ and $\mu_{\Haf}$ be defined as before, then the large $n$ leading order asymptotics of the squared estimation error is given by
	\begin{align*}
		\mathbb{E} \left[ \vert \mathcal{E}^{\textnormal{GBS-P}}_n  - \mu_{\Haf} \vert^2 \right] \sim \frac{1}{4n}\left( \frac{t^{-2K}}{d_t} \sum_{|I| = 2K } a^2_I I! - \mu_{\Haf} ^2\right).
	\end{align*}
\end{theorem}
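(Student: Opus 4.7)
The plan is to identify $\mu_{\Haf}$ as a functional of $P_{tB}$ that the estimator $\mathcal{E}^{\textnormal{GBS-P}}_n$ approximates through empirical frequencies, then use a Taylor expansion of the square root to reduce the problem to a multinomial covariance computation. Set $c_I := a_I t^{-K} \sigma_I \sqrt{I!/d_t}$ and $p_I := P_{tB}(I)$. Using the hafnian homogeneity $\Haf((tB)_I) = t^K \Haf(B_I)$ for $|I| = 2K$, we have $p_I = (d_t\, t^{2K}/I!)\,\Haf(B_I)^2$, so that $c_I \sqrt{p_I} = a_I \Haf(B_I)$. Summing over $|I| = 2K$ yields the identity
\begin{equation*}
\mu_{\Haf} = \sum_{|I| = 2K} c_I \sqrt{p_I},
\end{equation*}
and the error becomes
\begin{equation*}
\mathcal{E}^{\textnormal{GBS-P}}_n - \mu_{\Haf} = \sum_{|I| = 2K} c_I \bigl(\sqrt{\hat{p}_I} - \sqrt{p_I}\bigr),
\end{equation*}
where $\hat{p}_I := S^{(I)}_n/n$ and any index with $p_I = 0$ (equivalently $\Haf(B_I) = 0$) drops out of both the target and the estimator.

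Next I would apply the first-order Taylor expansion $\sqrt{\hat{p}_I} - \sqrt{p_I} = (\hat{p}_I - p_I)/(2\sqrt{p_I}) + R_I$, with remainder $R_I$ of order $(\hat{p}_I - p_I)^2 / p_I^{3/2}$. Squaring the error and taking expectation, the leading-order term is
\begin{equation*}
\sum_{I, J} \frac{c_I c_J}{4\sqrt{p_I p_J}} \, \operatorname{Cov}(\hat{p}_I, \hat{p}_J).
\end{equation*}
The multinomial covariance identity $\operatorname{Cov}(\hat{p}_I, \hat{p}_J) = (\delta_{IJ} p_I - p_I p_J)/n$ collapses this to
\begin{equation*}
\frac{1}{4n}\left[\sum_I c_I^2 - \Bigl(\sum_I c_I \sqrt{p_I}\Bigr)^2\right] = \frac{1}{4n}\left[\frac{t^{-2K}}{d_t}\sum_{|I| = 2K} a_I^2 I! - \mu_{\Haf}^2\right],
\end{equation*}
which is the claimed asymptotic after using $c_I^2 = a_I^2 t^{-2K} I!/d_t$ and the identity for $\mu_{\Haf}$ derived above.

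The main obstacle is justifying rigorously that the Taylor remainders contribute only at order $o(1/n)$. Since the index set $\{I : |I| = 2K\}$ is finite, moment bounds can be made uniform: $\mathbb{E}[(\hat{p}_I - p_I)^4] = O(n^{-2})$ and $\mathbb{E}[R_I (\hat{p}_J - p_J)] = O(n^{-3/2})$, so all cross products between linear and remainder terms, as well as remainder-squared terms, are of lower order than $1/n$. The delicate case is when some $p_I$ is small relative to $n$; the standard remedy, in line with the asymptotic framework of \cite{andersen2006asymptotics}, is to split the index set into a main part on which $p_I$ is uniformly bounded below and a sparse part whose contribution is controlled by Bernstein's inequality together with the deterministic bound $\sqrt{\hat{p}_I} \leq 1$. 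Combining these gives the stated equivalence.
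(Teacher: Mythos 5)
Your proposal is correct, but it reaches the result by a genuinely different route than the paper. The paper computes the covariance terms $e_{J,J'}=\mathbb{E}\bigl[(\sqrt{S^{(J)}_n/n}-\sqrt{p_J})(\sqrt{S^{(J')}_n/n}-\sqrt{p_{J'}})\bigr]$ exactly to leading order: it converts the binomial/trinomial sums into contour integrals via Cauchy's residue theorem (with the $\cot(\pi z)$ kernel) and then performs saddle-point analysis in one and two variables, obtaining $e_{J,J}\sim(1-p_J)/(4n)$ and $e_{J,J'}\sim-\sqrt{p_Jp_{J'}}/(4n)$ (Lemmas \ref{lem:ejj} and \ref{lem:ejjprime}), which are then resummed exactly as in your final display. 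You instead linearize $\sqrt{\hat p_I}$ around $\sqrt{p_I}$ (the delta method) and invoke the multinomial covariance identity $\operatorname{Cov}(\hat p_I,\hat p_J)=(\delta_{IJ}p_I-p_Ip_J)/n$, which reproduces the same leading term; your remainder control is sound because the index set $\{|I|=2K\}$ is finite and the relevant $p_I$ are fixed positive constants independent of $n$, so Cauchy--Schwarz with $\mathbb{E}[(\hat p_I-p_I)^4]=O(n^{-2})$ gives $O(n^{-3/2})$ cross terms, and the event where $\hat p_I$ falls below, say, $p_I/2$ (where the Lagrange remainder $\xi^{-3/2}$ is not controlled) has exponentially small probability and bounded integrand since $|\sqrt{\hat p_I}-\sqrt{p_I}|\le 1$. (Your phrase ``$p_I$ small relative to $n$'' is slightly off, since the $p_I$ do not vary with $n$; the only truly degenerate indices are those with $\Haf(B_I)=0$, which both your argument and the paper's lemmas implicitly exclude---if such an index had $a_I\neq 0$ the stated sum $\sum_{|I|=2K}a_I^2I!$ would have to be restricted accordingly.) What each approach buys: yours is considerably shorter and more elementary, essentially a standard central-limit/delta-method computation; the paper's heavier residue-plus-saddle-point machinery is chosen because the authors intend to extend the analysis to higher (indeed all) orders in $1/n$, as remarked after Theorem \ref{thrm:asymp}, which a first-order Taylor linearization does not directly supply.
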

\noindent
The proof of Theorem \ref{thrm:asymp} is provided in Section \ref{sec:algorithm}.
We remark here that it is an interesting problem to derive the asymptotic expansion to all orders in $n$. We will likely obtain divergent series, but the techniques developed in \cite{andersen2020resurgence, kontsevich2022analyticity, kontsevich2024holomorphic} can give exact analytic expression. We shall in future work return to this part.
Using Theorem \ref{thrm:asymp}, it's straightforward that
\begin{align*}
	\lim_{n \rightarrow \infty} \mathbb{E} \left[ \vert \mathcal{E}^{\textnormal{GBS-P}}_n  - \mu_{\Haf} \vert^2 \right] = 0,
\end{align*}
and therefore $\mathcal{E}^{\textnormal{GBS-P}}_n$ solves  $\mathcal{I}^{\times}_{\Haf}(\epsilon, \delta)$ for sufficiently large $n$. 
Using Markov's inequality, we can further derive an approximate guaranteed sample size, that is
\begin{align*}
	n^{\textnormal{GBS-P}}_{\Haf} = \frac{V^{\textnormal{GBS-P}}_{\Haf} }{\delta \epsilon^2 \mu_{\Haf}^2}, ~~~~ V^{\textnormal{GBS-P}}_{\Haf} \sim \frac{1}{4}\left( \frac{t^{-2K}}{d_t} \sum_{|I| = 2K } a^2_I I! -  \mu_{\Haf}^2\right).
\end{align*}

We compare the efficiency of the new GBS estimators with the standard Monte Carlo estimator, as defined in \cite{shan2024companion}. Recall that
\begin{align*}
	\mathcal{E}^{\textnormal{MC}}_n = \sum_{i=1}^n \frac{1}{n} f(X_i),
\end{align*}
where $X_1, \dots, X_n$ are i.i.d samples from the Gaussian distribution $g$. Since MC is also unbiased, $\mathcal{E}^{\textnormal{MC}}_n $ solves $\mathcal{I}^{\times}_{\Haf}(\epsilon, \delta)$ or $\mathcal{I}^{\times}_{\Haf^2}(\epsilon, \delta)$ for sufficiently large $n$. Furthermore,
Chebyshev's inequality yields the following guaranteed sample sizes:
\begin{align*}
	&n_{\Haf}^{\textnormal{MC}} = \frac{V^{\textnormal{MC}}_{\Haf} }{\delta \epsilon^2 \mu_{\Haf}^2}, ~~~~~ V^{\textnormal{MC}}_{\Haf} = \sum_{\substack{|I|, |I'| = 2K}}  a_I a_{I'} \Haf(B_{I+I'}) - \mu_{\Haf}^2 \\
	&n_{\Haf^2}^{\textnormal{MC}} = \frac{V^{\textnormal{MC}}_{\Haf^2} }{\delta \epsilon^2 \vert \mu_{\Haf^2} \vert^2}, ~~~~ V^{\textnormal{MC}}_{\Haf^2} = \sum_{\substack{|I|, |I'| = 2K}} a_I a_{I'} \Haf(B_{I + I'})^2 - \mu_{\Haf^2}^2.
\end{align*}

To estimate the percentage of the problem space where our improved GBS estimators outperform MC, we define
\begin{align}
	\begin{aligned}
		P^{\textnormal{GBS-P}}_{\Haf}(N, K) & = \frac{\int_{\mathcal{P}_{\Haf}(N, K)} H(n^{\textnormal{MC}}_{\Haf} - n^{\textnormal{GBS-P}}_{\Haf}) \, d\mu}{\int_{\mathcal{P}_{\Haf}(N, K)} 1 \, d\mu}, \\
		P^{\textnormal{GBS-I}}_{\Haf^2}(N, K) & = \frac{\int_{\mathcal{P}_{\Haf^2}(N, K)} H(n^{\textnormal{MC}}_{\Haf^2} - n^{\textnormal{GBS-I}}_{\Haf^2}) \, d\mu}{\int_{\mathcal{P}_{\Haf^2}(N, K)} 1 \, d\mu}.
	\end{aligned}
	\label{eq:pp}
\end{align}
Here, $\mathcal{P}_{\Haf}(N, K)$ is the subset of $a_I$'s and $B$ for which $n^{\textnormal{GBS-P}}_{\Haf}$ and $n^{\textnormal{MC}}_{\Haf}$ are well-defined, and $\mathcal{P}_{\Haf^2}(N, K)$ is the subset for which $n^{\textnormal{GBS-I}}_{\Haf^2}$ and $n^{\textnormal{MC}}_{\Haf^2}$ are well-defined. Furthermore, $d\mu$ denotes the volume form and $H$ is the heavyside step function. The precise definition of the space and volume form is provided in Section \ref{subsec:space}.
Numerical approximations of  $P^{\textnormal{GBS-P}}_{\Haf}(N, K)$ and $P^{\textnormal{GBS-I}}_{\Haf^2}(N, K)$ are made for various $N$ and $K$. A detailed description of the methods used for these computations is provided in Section \ref{subsec:emp}. 

\begin{figure}
	\centering
	\includegraphics[width=\linewidth]{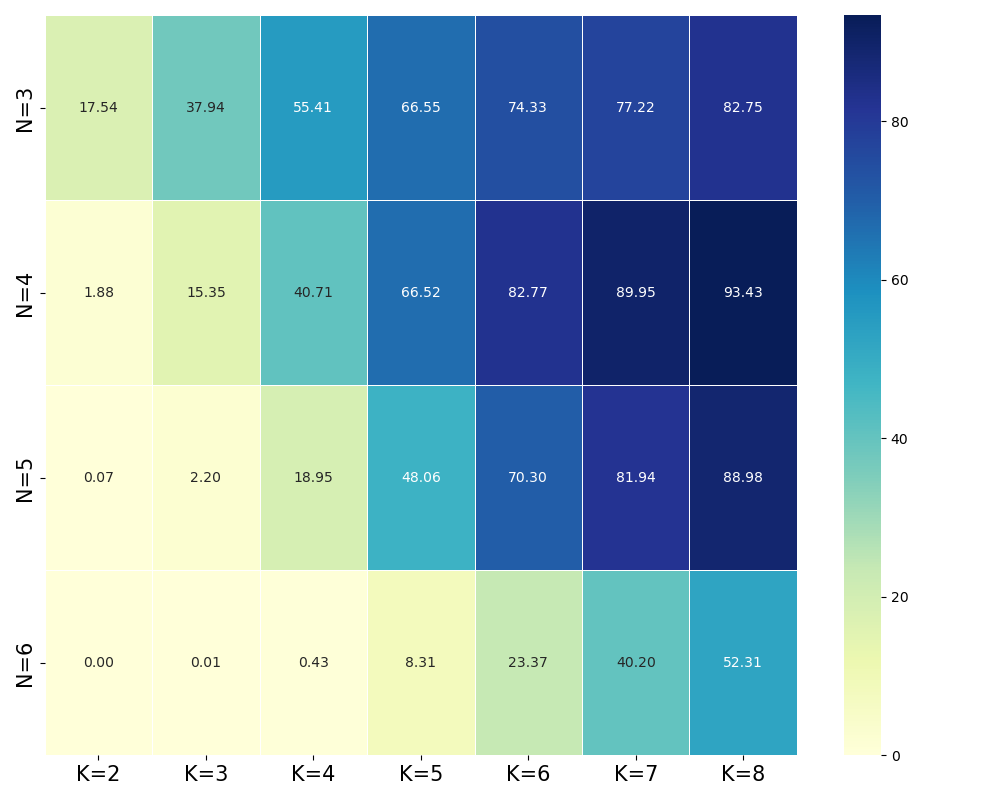}
	\caption{Percentage of the problem space where GBS-P is more efficient than the plain MC method for various $N$ and $K$.}
	\label{fig:heatmap_GBSP}
\end{figure}
Figure \ref{fig:heatmap_GBSP} reports $P^{\textnormal{GBS-P}}_{\Haf}(N, K) \times 100\%$ for $N = 3,\dots, 6$ and $K = 2, \dots, 8$. For fixed $N$, the percentage increases with $K$. This is consistent with Theorem 1.3 and Theorem 1.4 of \cite{shan2024companion}, confirming the growing advantage of GBS-P over MC for large $K$. For a fixed $K$, the percentage decreases with $N$, suggesting limited benefit of GBS-P for small $K$ and large $N$. However, the upper triangular portion of the figure shows a high percentage, indicating that GBS-P is effective when $N$ and $K$ are correlated. This aligns with Theorem 1.5 of \cite{shan2024companion}, where GBS-P provides an exponential speedup over MC with respect to $N$, provided $K \geq \zeta N^2$ for some positive constant $\zeta >0$. We also analyze the ratio $V^{\textnormal{MC}}_{\Haf} / V^{\textnormal{GBS-P}}_{\Haf}$ to quantify the efficiency gains. In Appendix B, we present histograms and  box plots of this ratio. When MC outperforms GBS-P, the advantage is typically around 10 to 20 times greater. When GBS-P outperforms MC, the efficiency gain depends on the value of $K$. When $K$ is large, the advantage is typically around 100 times greater. When $K$ is smaller, the advantage by GBS-P is often very limited. 

\begin{figure}
	\centering
	\includegraphics[width=\linewidth]{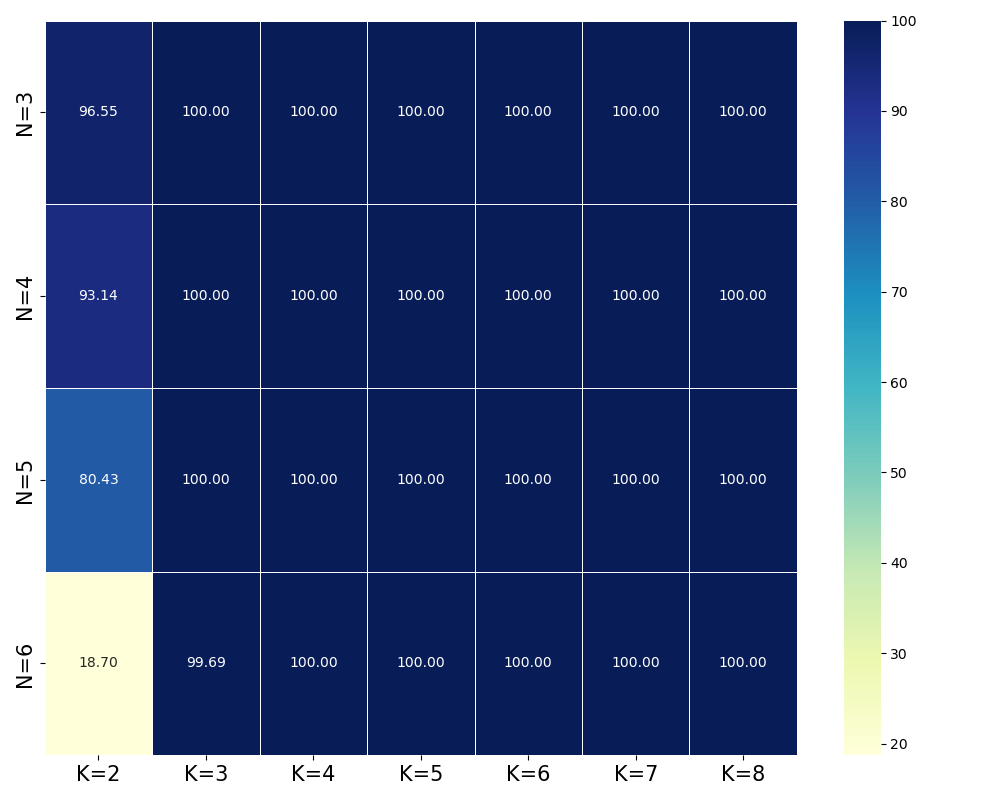}
	\caption{Percentage of the problem space where GBS-I is more efficient than the plain MC method for various $N$ and $K$.}
	\label{fig:heatmap_GBSI}
\end{figure}

Figure \ref{fig:heatmap_GBSI} reports $P^{\textnormal{GBS-I}}_{\Haf^2}(N, K) \times 100\%$ for $N = 3,\dots, 6$ and $K = 2, \dots, 8$. 
We observe that the majority of cases approach nearly 100\%, except for $K=2$, the percentage decreases to 18.70\% at $N = 6$. Additionally, 
the analysis of the efficiency gain suggests that GBS-I provides a larger efficiency gain than MC in all cases. For large $K$, the advantage can reach up to $10^{10}$ times greater. This behavior -- GBS-I outperforms MC for large $K$, mirroring the behavior seen in GBS-P-- suggests that both methods are particularly effective in capturing contributions from the higher-order terms in the formal power series expansion of $f$. The GBS-based estimators and MC provide complementary roles over different regimes of $K$.
Building on this insight, we propose a hybrid strategy for Gaussian expectation problems. Using the partition strategy in \eqref{eq:partition}, one can apply the standard MC method for problems with small $K$, where its simplicity and efficiency remain competitive. Then, switch to GBS estimators for large $K$. 

The rest of the paper is organized as follows. In Section \ref{sec:gbsdist}, we review the GBS sampling distribution and provide an explicit formula for the average photon number in terms of the eigenvalues of $B$. In Section \ref{sec:algorithm}, we describe the improved GBS methods in detail and prove Theorem \ref{thrm:asymp}. The precise definition of \eqref{eq:pp} and its numerical approximation algorithm is provided in Section \ref{sec:percentage}.

	\section{The GBS distribution and mean photon numbers}
\label{sec:gbsdist}
Let $B$ be a real, symmetric, positive definite matrix given as in \eqref{eq:I}. Suppose the eigenvalues $\lambda_1 \geq \dots \geq \lambda_N$ of $B$ are strictly less than 1. Then, GBS can be programmed to sample from the following distribution \cite{hamilton2017gaussian, kruse2019detailed}:
\begin{align}
	P_B(I) = \frac{d}{I!} \Haf(B_I)^2, ~~~~~ I = (i_1, \dots, i_N) \in \mathbb{N}^N
	\label{eq:gbsfirstappear}
\end{align} 
where
\begin{align}
	I! &= i_1! i_2! \dots i_N!, \notag \\
	d &= \prod_{n = 1}^N \sqrt{1 - \lambda_n^2}. \label{eq:ddef}
\end{align}
The $N$-tuples $I$'s correspond to the photon patterns output by the GBS device. 
For a specific pattern $I = (i_1, \dots, i_N)$, the total number of photons in $I$ is given by
\begin{align*}
	\vert I \vert = i_1 + i_2 + \dots + i_N.
\end{align*}
We are interested in the \textit{average photon number} of samples from $P_B$, that is
\begin{align*}
	m_B = \mathbb{E}_{I \sim P_B}[\vert I \vert].
\end{align*}
From Lemma \ref{lem:meanphoton}, $m_B$ can be computed explicitly from the eigenvalues of $B$, that is
\begin{align}
	m_B =\sum_{n = 1}^N \frac{\lambda_n^2}{1- \lambda_n^2}. \label{eq:meanphoton}
\end{align}

Consider $tB$ with $t \in (0, \lambda_1^{-1})$. It gives rise to a new GBS sampling distribution $P_{tB}$,
\begin{align}
	P_{tB}(I) & = \frac{d_t}{I!} \Haf((tB)_I)^2 \label{eq:gbsfirstappear2}
\end{align}
where
\begin{align*}
	d_t= \prod_{n = 1}^N \sqrt{1 - t^2\lambda_n^2}.
\end{align*}
Let $m_{tB}$ denote the average photon number of $P_{tB}$. In the next proposition, we show that $m_{tB}$ can be set to any positive real number by tuning $t$ over $ (0, \lambda_1^{-1})$.

\begin{proposition}
	\label{lem:meanphoton_t}
	For any $m > 0$, there exists $t \in (0, \lambda_1^{-1})$ such that $m_{tB} = m$.
\end{proposition}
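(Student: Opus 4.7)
The plan is to reduce the proposition to the intermediate value theorem applied to the explicit formula for the average photon number. First, I would invoke Lemma \ref{lem:meanphoton} with $tB$ in place of $B$: since the eigenvalues of $tB$ are $t\lambda_1, \ldots, t\lambda_N$ and the constraint $t \in (0, \lambda_1^{-1})$ ensures these remain in $(0,1)$, the lemma yields the closed-form expression
\begin{equation*}
    m_{tB} = \sum_{n=1}^N \frac{t^2 \lambda_n^2}{1 - t^2 \lambda_n^2}.
\end{equation*}

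Next, I would analyze the function $\phi : (0, \lambda_1^{-1}) \to \mathbb{R}$ defined by $\phi(t) = m_{tB}$. Each summand $t \mapsto t^2\lambda_n^2/(1 - t^2\lambda_n^2)$ is a continuous, strictly increasing function of $t$ on $(0, \lambda_1^{-1})$ (its derivative is $2t\lambda_n^2/(1-t^2\lambda_n^2)^2 > 0$), so $\phi$ itself is continuous and strictly increasing on this interval.

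It remains to identify the range. As $t \to 0^+$, every term vanishes, so $\phi(t) \to 0$. As $t \to (\lambda_1^{-1})^-$, the term corresponding to the largest eigenvalue $\lambda_1$ has denominator tending to $0^+$ while its numerator tends to $1$, so that single term already diverges to $+\infty$, and hence $\phi(t) \to +\infty$. By continuity and the intermediate value theorem, $\phi$ surjects onto $(0, \infty)$; strict monotonicity then gives a unique $t \in (0, \lambda_1^{-1})$ with $\phi(t) = m$ for any prescribed $m > 0$.

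There is no real obstacle here: the argument is entirely a one-variable calculus exercise once the formula from Lemma \ref{lem:meanphoton} is available. The only thing worth being careful about is to apply the lemma to $tB$ rather than to $B$, verifying that the spectral hypothesis of the lemma (eigenvalues in $(0,1)$) is preserved by the assumed range of $t$.
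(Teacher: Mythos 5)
Your proof is correct and follows essentially the same route as the paper: apply the eigenvalue formula of Lemma \ref{lem:meanphoton} to $tB$ and conclude surjectivity of $t \mapsto m_{tB}$ onto $(0,\infty)$. You merely spell out the continuity, monotonicity, and limiting behavior that the paper leaves as an unproved assertion of surjectivity, which is a welcome addition but not a different argument.
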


\begin{proof}
	Since $tB$ has eigenvalues $t\lambda_1 \geq \dots \geq t\lambda_N$, we compute that
	\begin{align}
		m_{tB} =\sum_{n=1}^N \frac{t^2\lambda_n^2}{1- t^2\lambda_n^2}. \label{eq:meanphoton2}
	\end{align}
	We further note that $m_{tB}$ as a function of $t$ is surjective onto $(0, \infty)$, and this completes the proof.
\end{proof}

\noindent
In practice, to find such a scalar factor $t$, a simple grid search over $(0, \lambda_1^{-1})$ will give satisfactory results. 

	\section{The GBS algorithms}
\label{sec:algorithm}
\subsection{The probability estimation method: GBS-P}
The first estimator, GBS-P, uses a direct probability estimation. We begin by assuming that there is an efficient method for computing the sign of the matrix hafnians. 
Let $ I_1, \dots, I_n $ be i.i.d. samples from $ P_{tB} $ where $ m_{tB} = 2K$. We further define
\begin{equation*}
	S^{(J)}_n = \sum_{l=1}^n 1_J (I_l), ~~~~~~~~~ 1_J(I) = \begin{cases}
		1 & I = J \\
		0 & I \neq J
	\end{cases}.
\end{equation*}
Next, take a weighted average of the square root of ${S^{(J)}_n}/{n}$ to define
\begin{equation*}
	\mathcal{E}^{\textnormal{GBS-P}}_{n} = \sum_{\vert J \vert = 2K} \alpha_J \sqrt{\tfrac{S^{(J)}_n}{n}},
\end{equation*}
where
\begin{align*}
	\alpha_J &=  a_J t^{-K} \sigma_J \sqrt{\frac{J!}{d_t}}, \\
	\sigma_J &= \textnormal{sign}(\Haf(B_J)).
\end{align*}

We now prove Theorem \ref{thrm:asymp}. 

\begin{proof}[Proof of Theorem \ref{thrm:asymp}]
	Let $p_J = P_{tB}(J) = \frac{d_t}{J!}\Haf((tB)_J)^2$.
	Then,
	\begin{align*}
		\mu_{\Haf} & = \sum_{\vert J \vert = 2K} a_J \Haf(B_J) \\
		& = \sum_{\vert J \vert = 2K} a_J t^{-K} \Haf((tB)_J) \\
		& = \sum_{\vert J \vert = 2K} \alpha_J \sqrt{p_J}.
	\end{align*}
	For notational convenience, we define
	\begin{align}
		e_{J, J'} =  \mathbb{E} \left[ \left( \sqrt{\frac{S^{(J)}_n}{n}} - \sqrt{p_J}\right) \left( \sqrt{\frac{S^{(J')}_n}{n}} - \sqrt{p_{J'}}\right) \right]. \label{eq:ejjdef}
	\end{align}
	Then,
	\begin{align}
		& \mathbb{E} \left[ \vert \mathcal{E}^{\textnormal{GBS-P}}_n  - \mu_{\Haf} \vert^2 \right] \notag \\ 
		= &  \sum_{|J|, |J'|  = 2K } \alpha_J\alpha_{J'}  \mathbb{E} \left[ \left( \sqrt{\tfrac{S^{(J)}_n}{n}} - \sqrt{p_J}\right) \left( \sqrt{\tfrac{S^{(J')}_n}{n}} - \sqrt{p_{J'}}\right) \right] \notag \\
		= & \sum_{|J|, |J'|  = 2K } \alpha_J\alpha_{J'} e_{J, J'} \notag \\
		=&\sum_{\substack{|J|, |J'|  = 2K \\ J \neq J'}} \alpha_J\alpha_{J'} e_{J,J'} + \sum_{|J| = 2K } \alpha_J^2  e_{J, J} \notag  \\
		\sim & \frac{1}{4n} \left(\sum_{\substack{|J|, |J'|  = 2K \\ J \neq J'}} -\alpha_J\alpha_{J'}\sqrt{p_Jp_{J'}} + \sum_{|J| = 2K } \alpha_J^2 (1-p_J) \right) \label{eq:asympgbsp} \\
		= &  \frac{1}{4n} \left( \sum_{|J| = 2K } \alpha_J^2 - \vert  \mu_{\Haf} \vert ^2\right) \notag \\
		= & \frac{1}{4n}\left( \frac{t^{-2K}}{d_t} \sum_{|J| = 2K } a^2_J J! - \vert  \mu_{\Haf} \vert ^2\right) \notag
	\end{align}
	The leading order asymptotics \eqref{eq:asympgbsp} follows from 
	Lemmas \ref{lem:ejj} and \ref{lem:ejjprime}, presented near the end of this section.
\end{proof}

\begin{theorem}
	The GBS-P estimator $\mathcal{E}^{\textnormal{GBS-P}}_n$ solves $\mathcal{I}^\times_{\Haf}(\epsilon, \delta)$ with
	\begin{align*}
		n \geq \frac{	V^{\textnormal{GBS-P}}_{\Haf} }{\delta \epsilon^2 \mu_{\Haf}^2} =:  n^{\textnormal{GBS-P}}_{\Haf}
	\end{align*}
	where
	\begin{align*}
		V^{\textnormal{GBS-P}}_{\Haf} \sim \frac{1}{4}\left( \frac{t^{-2K}}{d_t} \sum_{|J| = 2K } a^2_J J! - \vert  \mu_{\Haf} \vert ^2\right).
	\end{align*}
\end{theorem}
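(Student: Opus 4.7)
The plan is to reduce the claim directly to Markov's inequality applied to the squared error, and then substitute in the asymptotic expression supplied by Theorem \ref{thrm:asymp}. Concretely, the starting point is the observation that the event $\{|\mathcal{E}^{\textnormal{GBS-P}}_n - \mu_{\Haf}| > \epsilon |\mu_{\Haf}|\}$ coincides with the event $\{|\mathcal{E}^{\textnormal{GBS-P}}_n - \mu_{\Haf}|^2 > \epsilon^2 \mu_{\Haf}^2\}$, so Markov's inequality applied to the nonnegative random variable $|\mathcal{E}^{\textnormal{GBS-P}}_n - \mu_{\Haf}|^2$ gives
\begin{equation*}
P\bigl(|\mathcal{E}^{\textnormal{GBS-P}}_n - \mu_{\Haf}| > \epsilon |\mu_{\Haf}|\bigr) \leq \frac{\mathbb{E}\bigl[|\mathcal{E}^{\textnormal{GBS-P}}_n - \mu_{\Haf}|^2\bigr]}{\epsilon^2 \mu_{\Haf}^2}.
\end{equation*}

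Next, I would invoke Theorem \ref{thrm:asymp} to replace the numerator by its large-$n$ leading-order asymptotic form $V^{\textnormal{GBS-P}}_{\Haf}/n$, where $V^{\textnormal{GBS-P}}_{\Haf}$ is exactly the quantity $\tfrac{1}{4}\bigl(\tfrac{t^{-2K}}{d_t}\sum_{|J|=2K} a_J^2 J! - \mu_{\Haf}^2\bigr)$ appearing in the theorem. Substituting and imposing that the right-hand side of Markov's bound be at most $\delta$ yields the sufficient condition $n \geq V^{\textnormal{GBS-P}}_{\Haf}/(\delta \epsilon^2 \mu_{\Haf}^2) = n^{\textnormal{GBS-P}}_{\Haf}$, which is precisely the claim.

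The main subtlety is that Theorem \ref{thrm:asymp} delivers an asymptotic equivalence ($\sim$) rather than a uniform two-sided bound, so the stated sample size $n^{\textnormal{GBS-P}}_{\Haf}$ is really an asymptotic guarantee: for any fixed $\eta>0$, there exists $n_0$ such that for all $n \geq n_0$ the squared-error expectation is bounded above by $(1+\eta)V^{\textnormal{GBS-P}}_{\Haf}/n$, and then the Markov argument goes through with an arbitrarily small inflation of the constant. In the write-up I would state this caveat explicitly, mirroring the language already used in the introduction (``approximate guaranteed sample size''), and note that the $\sim$ in the definition of $V^{\textnormal{GBS-P}}_{\Haf}$ is inherited from Theorem \ref{thrm:asymp} and not an additional approximation. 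I do not anticipate any genuine obstacle beyond this bookkeeping, since the variance-like quantity is manifestly nonnegative (being the leading coefficient of a squared error) and $\mu_{\Haf} \neq 0$ is part of the standing assumption in the definition of $\mathcal{I}^\times_{\Haf}(\epsilon,\delta)$.
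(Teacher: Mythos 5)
Your proposal is correct and follows essentially the same route as the paper's own proof: Markov's inequality applied to the squared error, with the numerator replaced by the leading-order asymptotics from Theorem \ref{thrm:asymp}, then solving for $n$. Your explicit caveat that the guarantee is only asymptotic (inherited from the $\sim$) is in fact slightly more careful than the paper's write-up, which simply carries the $\sim$ through the chain of inequalities and relies on the ``approximate guaranteed sample size'' language from the introduction.
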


\begin{proof}
It follows from Theorem \ref{thrm:asymp} that $\mathcal{E}^{\textnormal{GBS-P}}_n$ solves $\mathcal{I}^\times_{\Haf}(\epsilon, \delta)$ for large enough $n$, since
\begin{align*}
	\lim_{n \rightarrow \infty} \mathbb{E} \left[ \vert \mathcal{E}^{\textnormal{GBS-P}}_n  - \mu_{\Haf} \vert^2 \right] = 0.
\end{align*}
To determine the required size of $n$, we use Markov's inequality. 
We observe that
\begin{align*}
	P(\vert \mathcal{E}^{\textnormal{GBS-P}}_n  - \mu_{\Haf} \vert > \epsilon \vert \mu_{\Haf}  \vert ) 
	& = P(\vert \mathcal{E}^{\textnormal{GBS-P}}_n  - \mu_{\Haf} \vert^2 > \epsilon^2 \vert \mu_{\Haf}  \vert^2 ) \\
	& < \frac{\mathbb{E} \left[ \vert \mathcal{E}^{\textnormal{GBS-P}}_n  - \mu_{\Haf} \vert^2 \right]}{\epsilon^2 \mu_{\Haf}^2  } \\
	& \sim \frac{1}{n} \frac{1}{\epsilon^2 \mu_{\Haf}^2  } \frac{1}{4}\left( \frac{t^{-2K}}{d_t} \sum_{|J| = 2K } a^2_J J! - \vert  \mu_{\Haf} \vert ^2\right).
\end{align*}
This completes the proof.
\end{proof}

We refer to $n^{\textnormal{GBS-P}}_{\Haf}$ as the \textit{guaranteed sample size} for GBS-P to solve $\mathcal{I}^\times_{\Haf}(\epsilon, \delta)$. 
Finally, we show that the optimal $t$ that minimizes $n^{\textnormal{GBS-P}}_{\Haf}$ is indeed the one that satisfies $m_{tB} = 2K$. 

\begin{proposition}
	\label{coro:optimalt}
	A real value $t_0$ is the unique solution to
	\begin{align*}
		t_0 = \argmin_{t\in (0, \lambda_1^{-1})} \mathbb{E} \left[ \vert \mathcal{E}^{\textnormal{GBS-P}}_n  - \mu_{\Haf} \vert^2 \right]
	\end{align*}
	if and only if $m_{t_0B} = 2K$.  
\end{proposition}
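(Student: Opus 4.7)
The plan is to reduce the minimization of the squared estimation error to minimizing a simple scalar function of $t$. By Theorem \ref{thrm:asymp}, the large-$n$ leading-order asymptotic of $\mathbb{E}[|\mathcal{E}^{\textnormal{GBS-P}}_n - \mu_{\Haf}|^2]$ equals
\[
\frac{1}{4n}\left(g(t)\cdot S - \mu_{\Haf}^2\right), \qquad g(t):=\frac{t^{-2K}}{d_t}, \qquad S:=\sum_{|I|=2K} a_I^2\, I!,
\]
in which the quantities $S$, $\mu_{\Haf}$, and $n$ are all $t$-independent. Hence the problem collapses to finding $\argmin_{t\in(0,\lambda_1^{-1})} g(t)$.

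Next I would compute the critical-point equation via the logarithmic derivative. Taking
\[
\log g(t) = -2K\log t - \tfrac{1}{2}\sum_{n=1}^N \log(1 - t^2\lambda_n^2),
\]
and differentiating gives
\[
\frac{g'(t)}{g(t)} = -\frac{2K}{t} + \sum_{n=1}^N \frac{t\lambda_n^2}{1 - t^2\lambda_n^2}.
\]
Multiplying $g'(t)=0$ by $t$ and using formula \eqref{eq:meanphoton2} for $m_{tB}$ yields exactly
\[
m_{tB} = \sum_{n=1}^N \frac{t^2\lambda_n^2}{1 - t^2\lambda_n^2} = 2K,
\]
which is the desired characterization of critical points.

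It then remains to verify that such a critical point is indeed the unique global minimizer on $(0,\lambda_1^{-1})$. Each summand in \eqref{eq:meanphoton2} is strictly increasing in $t$, so $t\mapsto m_{tB}$ is strictly increasing and the equation $m_{tB}=2K$ admits at most one solution; existence follows from Proposition \ref{lem:meanphoton_t}. To confirm this unique critical point is a minimum rather than a maximum, I would observe that $g(t)\to\infty$ as $t\to 0^+$ (because of the $t^{-2K}$ factor) and $g(t)\to\infty$ as $t\to\lambda_1^{-1-}$ (because $d_t\to 0$), so by continuity $g$ attains an interior global minimum, which must be our unique critical point.

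The main obstacle is largely notational rather than mathematical: one has to be careful that ``$\argmin$'' in the statement is understood in the sense of the large-$n$ leading-order asymptotic from Theorem \ref{thrm:asymp}, since for fixed finite $n$ the exact expectation may contain subleading $t$-dependent corrections. With that interpretation fixed, the argument above is essentially a direct calculus exercise, and the key insight that ties everything together is that the logarithmic derivative of $g(t)$ miraculously reproduces (up to a factor of $t$) the mean photon number formula \eqref{eq:meanphoton2}.
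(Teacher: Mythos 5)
Your proposal is correct and follows essentially the same route as the paper: reduce, via the leading-order asymptotic, to minimizing $t^{-2K}/d_t$, obtain the critical-point equation $m_{tB}=2K$ from the logarithmic derivative, and conclude uniqueness from the strict monotonicity of $t\mapsto m_{tB}$. The only cosmetic difference is that you certify the minimum via the blow-up of $t^{-2K}/d_t$ at both endpoints, whereas the paper checks the sign of the derivative on either side of $t_0$; your explicit remark that the $\argmin$ is taken of the leading-order asymptotics is a fair (and arguably cleaner) reading of the same step the paper makes implicitly.
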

\begin{proof}
	Since $\alpha_J =  a_J t^{-K} \sigma_J \sqrt{\frac{J!}{d_t}}$, we have
	\begin{align}
		\mathbb{E} \left[ \vert \mathcal{E}^{\textnormal{GBS-P}}_n  - \mu_{\Haf} \vert^2 \right] = \frac{t^{-2K}}{{d_t}} \sum_{|J|, |J'|  = 2K } a_J  a_{J'} \sigma_J \sigma_{J'}  \sqrt{J!} \sqrt{J'!} e_{J, J'} \label{eq:gbspmina}
	\end{align}
	where $e_{J, J'}$ is defined as in \eqref{eq:ejjdef}.
	To minimize \eqref{eq:gbspmina} over $t \in (0, \lambda_1^{-1})$, it is equivalent to compute
	\begin{align}
		t_0 = \argmin_{t\in (0, \lambda_1^{-1})} \gamma(t)  \label{eq:gbspminb}
	\end{align}
	where
	\begin{align*}
		\gamma(t) = \log \frac{t^{-2K}}{{d_t}} = \log \frac{t^{-2K}}{\prod_{n=1}^N \sqrt{1 - t^2 \lambda_n^2}} =  -2K \log t - \frac{1}{2} \sum_{n=1}^N \log (1 - t^2 \lambda_n^2).
	\end{align*}
	If $t_0$ is a solution to \eqref{eq:gbspminb}, then it must be a critical point of $\gamma(t)$. In other words,
	\begin{align}
		\gamma'(t_0) = -\frac{2K}{t_0} + t_0 \sum_{n=1}^N \frac{\lambda_n^2}{1 - t_0^2 \lambda_n^2} = 0,
	\end{align}
	and therefore
	\begin{align*}
		m_{t_0B} =\sum_{n=1}^N \frac{t_0^2 \lambda_n^2}{1 - t_0^2 \lambda_n^2} = 2K.
	\end{align*}
	To prove the other direction, we first note that if $0< t < t_0$, then
	\begin{align*}
		\sum_{n=1}^N \frac{t^2 \lambda_n^2}{1 - t^2 \lambda_n^2} = \sum_{n=1}^N \frac{1}{\frac{1}{t^2 \lambda_n^2} - 1} < \sum_{n=1}^N \frac{1}{\frac{1}{t_0^2 \lambda_n^2} - 1} = m_{t_0}B = 2K,
	\end{align*}
	and therefore
	\begin{align*}
		\gamma'(t) t = -2K + \sum_{n=1}^N \frac{t^2 \lambda_n^2}{1 - t^2 \lambda_n^2} < 0
	\end{align*}
	Hence, $\gamma'(t) < 0$. Similarly, we can check that $\gamma'(t) > 0$ for all $t_0 < t < \lambda_1^{-1}$. Therefore, $t_0$ is the unique minimal solution in $(0, \lambda_1^{-1})$.
\end{proof}

Now we state and prove the lemmas required by the proof of Theorem \ref{thrm:GBS-Icov}. 

\begin{lemma}
	\label{lem:ejj}
	Let  $p_J = P_{tB}(J)$ and let $e_{J, J}$ be defined as in \eqref{eq:ejjdef}. Then the large $n$ leading order asymptotics of  $e_{J, J}$ is
	\begin{align*}
		e_{J, J} \sim \frac{1-p_J}{4n}.
	\end{align*}
\end{lemma}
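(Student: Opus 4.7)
Observe that $S_n^{(J)}$ counts the number of occurrences of the outcome $J$ in $n$ i.i.d.\ draws from $P_{tB}$, so $S_n^{(J)} \sim \mathrm{Binomial}(n, p_J)$. Writing $X_n = S_n^{(J)}/n$, we have $\mathbb{E}[X_n] = p_J$ and $\mathrm{Var}(X_n) = p_J(1-p_J)/n$. The quantity to analyse is
\begin{align*}
e_{J,J} \;=\; \mathbb{E}\!\left[(\sqrt{X_n} - \sqrt{p_J})^2\right].
\end{align*}
The plan is to extract the $1/n$ leading term by a delta-method style Taylor expansion of the square root around $p_J$, while handling the non-smoothness at the origin separately.

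First, I would use the algebraic identity
\begin{align*}
(\sqrt{X_n} - \sqrt{p_J})^2 \;=\; \frac{(X_n - p_J)^2}{(\sqrt{X_n} + \sqrt{p_J})^2},
\end{align*}
which is valid on $\{X_n \geq 0\}$, i.e.\ almost surely. The event $\{X_n = 0\}$ has probability $(1-p_J)^n$, which is exponentially small in $n$, and it contributes $p_J \cdot (1-p_J)^n = o(1/n)$ to $e_{J,J}$. On $\{X_n > 0\}$, I would split the expectation according to whether $X_n$ lies in a small neighbourhood of $p_J$, say $A_n = \{|X_n - p_J| < p_J/2\}$. Standard concentration for the binomial (e.g.\ Hoeffding) gives $P(A_n^c) = O(e^{-cn})$ for some $c>0$, so any bounded integrand restricted to $A_n^c$ contributes $o(1/n)$.

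On $A_n$, the denominator satisfies $(\sqrt{X_n} + \sqrt{p_J})^2 = 4 p_J + R_n$ where $R_n = O(|X_n - p_J|)$, so
\begin{align*}
\frac{1}{(\sqrt{X_n} + \sqrt{p_J})^2} \;=\; \frac{1}{4 p_J} + O(|X_n - p_J|).
\end{align*}
Multiplying by $(X_n - p_J)^2$ and taking expectation over $A_n$, the leading term yields
\begin{align*}
\frac{1}{4 p_J}\,\mathbb{E}[(X_n - p_J)^2 \mathbf{1}_{A_n}] \;=\; \frac{1}{4 p_J} \cdot \frac{p_J(1-p_J)}{n} + o(1/n) \;=\; \frac{1-p_J}{4n} + o(1/n),
\end{align*}
while the remainder contributes $O(\mathbb{E}|X_n - p_J|^3) = O(n^{-3/2}) = o(1/n)$ by the third absolute moment bound for the binomial. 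Combining the three regimes ($X_n = 0$, $A_n^c$, and $A_n$) gives the claimed asymptotics.

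The main obstacle is the non-smoothness of $\sqrt{\cdot}$ at the origin: a naive Taylor expansion is invalid where $X_n$ is small. The careful split into $\{X_n = 0\}$, $A_n^c$, and $A_n$, combined with the exponential concentration of the binomial and the exact third moment bound, is what makes the leading-order asymptotics rigorous.
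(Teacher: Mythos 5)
Your proposal is correct, but it takes a genuinely different route from the paper. The paper converts the binomial sum defining $e_{J,J}$ into a contour integral via Cauchy's Residue Theorem (with a $\cot(\pi z)$ kernel), applies Stirling's approximation, and then performs a saddle point analysis at $x_0 = p_J$, with several technical lemmas needed to show the vertical contour pieces and the cotangent error terms are negligible; the vanishing of $\rho$ to second order at the saddle forces them to go to higher order in the stationary phase expansion to extract the $(1-p_J)/(4n)$ term. You instead exploit that $S_n^{(J)}$ is exactly $\mathrm{Binomial}(n,p_J)$ and run a delta-method argument: the identity $(\sqrt{X_n}-\sqrt{p_J})^2 = (X_n-p_J)^2/(\sqrt{X_n}+\sqrt{p_J})^2$, a three-way split into $\{X_n=0\}$, the Hoeffding-negligible tail $A_n^c$, and the bulk $A_n$, then the expansion of the reciprocal denominator around $4p_J$ with the exact variance $p_J(1-p_J)/n$ and a third-moment bound $O(n^{-3/2})$ for the remainder. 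This is rigorous and considerably more elementary; each ingredient (boundedness of the integrand off $A_n$, the Lipschitz control of the denominator on $[p_J/2,3p_J/2]$, the moment bounds) is standard, and the argument only implicitly requires $p_J>0$, an assumption the paper also makes tacitly. What the paper's heavier machinery buys is an exact contour-integral representation of $e_{J,J}$, which the authors explicitly intend to push to an all-order asymptotic expansion (and to connect with resurgence-type techniques) in future work, and which parallels the analysis they reuse in the two-dimensional case of Lemma \ref{lem:ejjprime}; your approach delivers the leading order more cheaply but would need the analogous bivariate (multinomial) delta-method computation, with covariance $-p_Jp_{J'}/n$, to recover Lemma \ref{lem:ejjprime}.
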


\begin{proof}
	We start by computing that
	\begin{align*}
		e_{J, J} & = \sum_{m=0}^{n}\left( \sqrt{\tfrac{m}{n}} - \sqrt{p_J} \right)^2  P\left(\tfrac{S^{(J)}_n}{n} = m\right).
	\end{align*}
	For convenience, let
	\begin{align*}
		t_m = \left( \sqrt{\tfrac{m}{n}} - \sqrt{p_J} \right)^2  P\left(\tfrac{S^{(J)}_n}{n} = m\right).
	\end{align*}
	Then, of course, 
	\begin{align*}
		e_{J, J}  = \sum_{m=0}^{n} t_m.
	\end{align*}
	Note that
	\begin{align*}
		P\left(\tfrac{S^{(J)}_n}{n} = m\right) = \binom{n}{m} p_J^m (1-p_J)^{n-m}.
	\end{align*}
	At $m = 0$ or $m = n$, we obtain
	\begin{align*}
		t_0 &= p_J(1-p_J)^n \\
		t_n &= (1-\sqrt{p_J})^2 p_J^n.
	\end{align*}
	Both decay exponentially in $n$. For the rest of the terms, we compute that
	\begin{align}
		e'_{J, J}
		&= \sum_{m=1}^{n-1} t_m \notag \\
		& = \frac{1}{n} \sum_{m=1}^{n-1} \left( \sqrt{m} - \sqrt{n p_J} \right)^2 \binom{n}{m} p_J^m (1-p_J)^{n-m}  \notag \\
		& = \frac{1}{n} (1-p_J)^{n} \Gamma(n+1) \sum_{m=1}^{n-1} \frac{\left( \sqrt{m} - \sqrt{n p_J} \right)^2}{\Gamma(m+1) \Gamma(n-m+1)} \left(\frac{p_J} {1-p_J} \right)^m. \label{eq:cauchysum}
	\end{align}
	Here, \eqref{eq:cauchysum} is obtained from expanding the binomial coefficient and rewriting $n!  = \Gamma(n+1)$. 
	
	We now apply techniques similar to those in \cite{andersen2006asymptotics} for analyzing the values of colored Jones polynomials and quantum invariants, particularly using Cauchy's Residue Theorem and the asymptotic analysis of oscillatory integrals.
	We start by applying Cauchy's Residue Theorem to the sum in \eqref{eq:cauchysum}. Let $\gamma_n$ be a counterclockwise closed rectangular path on the complex plane, as illustrated in Figure \ref{fig:gamma_n}. 
	Let us further define $h \in \mathcal{M}(\mathbb{C}_+)$, where $\mathbb{C}_+ = \{ z \in \mathbb{C} \mid \textnormal{Re}(z)>0 \}$, such that
	\begin{align*}
		h(z) = \frac{\left(\sqrt{z} - \sqrt{np_J } \right)^2}{\Gamma(z+1) \Gamma(n-z+1)} \left(\frac{p_J} {1-p_J} \right)^z \cot(\pi z).
	\end{align*}
	The poles of $h(z)$ inside $\gamma_n$ are at $z = 1, 2 \dots, n-1$. Note that these poles are simple and the residue at each pole is given by
	\begin{align*}
		\text{Res}(h, m) = \frac{\left( \sqrt{m} - \sqrt{n p_J} \right)^2}{\Gamma(m+1) \Gamma(n-m+1)} \left(\frac{p_J} {1-p_J} \right)^m \frac{1}{\pi}.
	\end{align*}
	Cauchy's Residue theorem implies that
	\begin{align*}
		\int_{\gamma_n} h(z) \, dz & = 2\pi i  \sum_{m=1}^{n-1} \text{Res}(h, m) \\
		& = 2 i  \sum_{m=1}^{n-1} \frac{\left( \sqrt{m} - \sqrt{n p_J} \right)^2}{\Gamma(m+1) \Gamma(n-m+1)} \left(\frac{p_J} {1-p_J} \right)^m.
	\end{align*}
	Therefore, 
	\begin{align}
		e_{J, J}
		& =  \frac{1}{2in} (1-p_J)^{n} \Gamma(n+1) \int_{\gamma_n} h(z) \, dz. \label{eq:ejj2} 
	\end{align}
	
	\begin{figure}
		\centering
		\includegraphics[width=0.7\linewidth]{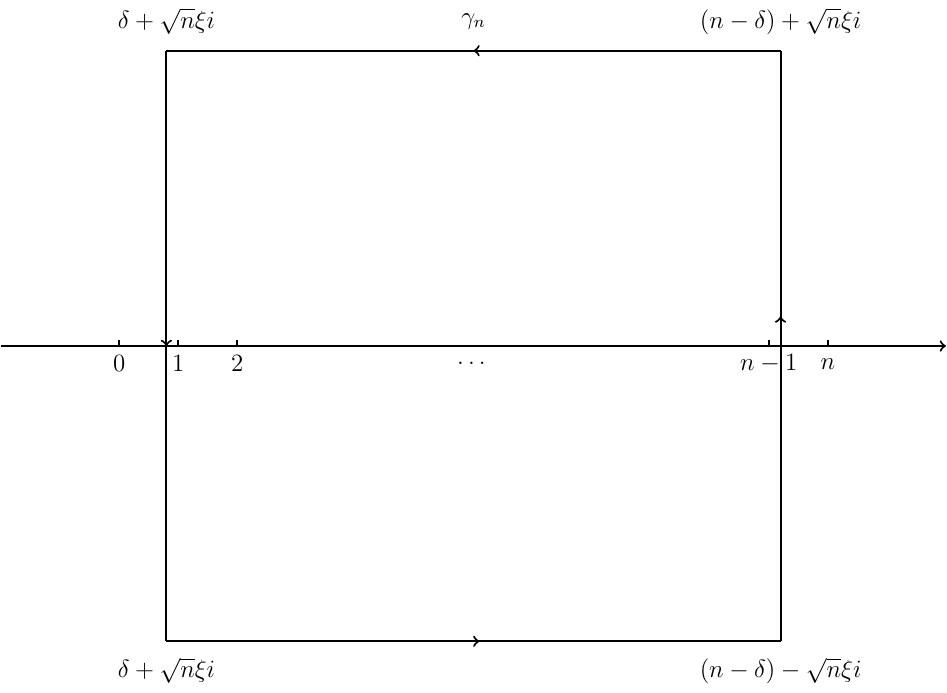}
		\caption{The rectangular contour $\gamma_n$ used in the proof of Lemma \ref{lem:ejj}. Here, $\delta$ and $\xi$ are small positive real numbers which are both less than 1.}
		\label{fig:gamma_n}
	\end{figure}

	We now compute the leading order asymptotics of the contour integral in \eqref{eq:ejj2}. Define
	\begin{align}
		I_n = \int_{\gamma_n} h(z) \, dz.
	\end{align}
	We first make a change of variable $y = \frac{z}{n}$. Let $\hat{\gamma}_n$ be the contour after the change of variable (see Figure \ref{fig:hat_gamma_n}). We then get
	\begin{align*}
		I_n & = n \int_{\hat{\gamma}_n} h(ny) \, dy \\
		& = n \int_{\hat{\gamma}_n}  \frac{\left(\sqrt{ny} - \sqrt{np_J } \right)^2}{\Gamma(ny+1) \Gamma(n-ny+1)}  \left(\frac{p_J} {1-p_J} \right)^{ny}  \cot(n \pi y) \, dy \\
		& =  n^2 \int_{\hat{\gamma}_n}  \frac{\left(\sqrt{y} - \sqrt{p_J } \right)^2}{\Gamma(ny+1) \Gamma(n-ny+1)} e^{\alpha n y} \cot(n \pi y) \, dy,
	\end{align*}
	where
	\begin{align}
		\alpha & = \log \frac{p_J}{1-p_J}. \label{eq:alpha}
	\end{align}
	
	\begin{figure}
		\centering
		\includegraphics[width=0.7\linewidth]{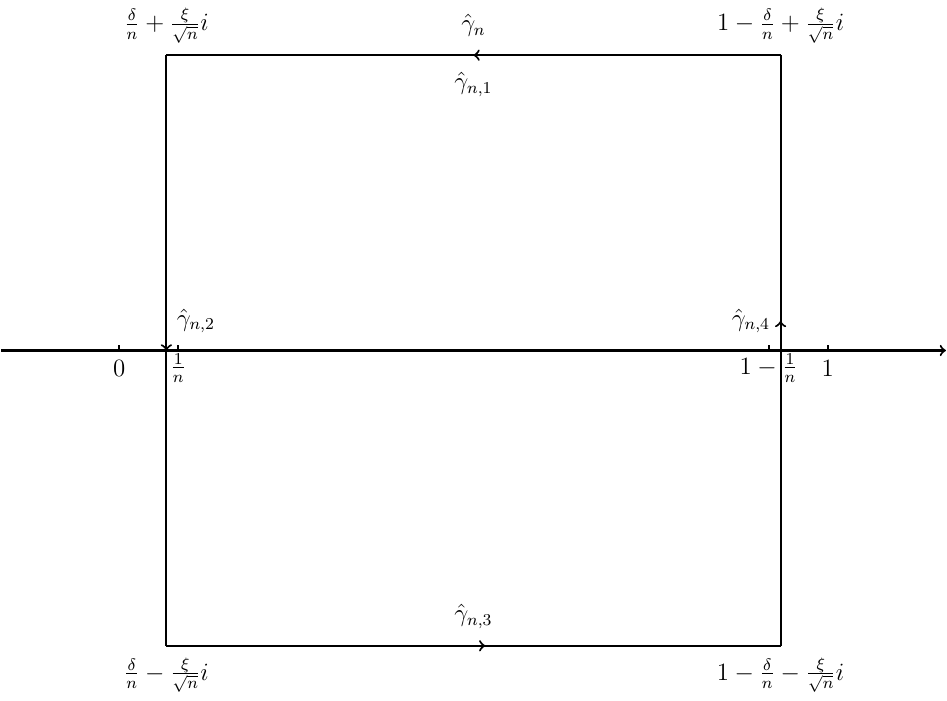}
		\caption{The rectangular contour $\hat{\gamma}_n$ is obtained from $\gamma_n$ via a change of variable.}
		\label{fig:hat_gamma_n}
	\end{figure}
	
	\noindent
	Recall the well-known Stirling's approximation for the Gamma function when $\textnormal{Re}(x) >0$. There exists $C>0$ such that
	\begin{align*}
		 \vert \Gamma(x + 1) - \sqrt{2\pi x} e^{x(\log x - 1)} \vert < C \frac{\vert \sqrt{2\pi x} \vert e^{\textnormal{Re}(x(\log x - 1))}}{\vert x \vert}.
		\end{align*}
	When computing the leading order asymptotics of $I_n$, we can replace $\Gamma(x + 1)$ with the following asymptotics
	\begin{align}
		\Gamma(x + 1) \sim \sqrt{2\pi x} e^{x(\log x - 1)}, \label{eq:gammaapprox}
	\end{align}
	since the error term decays with one more negative power of $n$, and thus can be ignored.
	Plugging \eqref{eq:gammaapprox} into $I_n$ yields
	\begin{align*}
		I_n 
		& \sim  n^2 \int_{\hat{\gamma}_n}  \frac{\left(\sqrt{y} - \sqrt{p_J } \right)^2}{2\pi \sqrt{ny} \sqrt{n(1-y)}} e^{\alpha ny - ny(\log ny - 1) -  n(1-y)(\log n(1-y) - 1)} \cot(n \pi y) \, dy \notag \\
		& = \frac{n}{2\pi} \int_{\hat{\gamma}_n}  \frac{\left(\sqrt{y} - \sqrt{p_J } \right)^2}{\sqrt{y(1-y)}} e^{\alpha ny - ny(\log ny - 1) -  n(1-y)(\log n(1-y) - 1)} \cot(n \pi y) \, dy  \notag \\
		& = \frac{n}{2\pi}  e^{-n \log n} \int_{\hat{\gamma}_n}  \frac{\left(\sqrt{y} - \sqrt{p_J } \right)^2}{\sqrt{y(1-y)}} e^{n[\alpha y - y(\log y - 1) -  (1-y)(\log (1-y) - 1)]} \cot(n \pi y) \, dy
	\end{align*}
	For convenience, let us define
	\begin{align*}
		\psi(y) &= \alpha y - y(\log y - 1) -  (1-y)(\log (1-y) - 1), \\
		\rho(y) & = \frac{\left(\sqrt{y} - \sqrt{p_J } \right)^2}{\sqrt{y(1-y)}},
	\end{align*}
	where $\alpha$ is given as in \eqref{eq:alpha}.
	Then,
	\begin{align}
		I_n &\sim \frac{n}{2\pi}  e^{-n \log n} \int_{\hat{\gamma}_n}  \rho(y) e^{n\psi(y)} \cot(n \pi y) \, dy. \label{eq:ejj4} 
	\end{align}
	
	We now turn to examine the integral
	\begin{align*}
		H_n = \int_{\hat{\gamma}_n}  \rho(y) e^{n\psi(y)} \cot(n \pi y) \, dy.
	\end{align*}
	Notice that the contour $\hat{\gamma}_n$ has four components: $\hat{\gamma}_{n,1}$, $\hat{\gamma}_{n,2}$, $\hat{\gamma}_{n,3}$ and $\hat{\gamma}_{n,4}$ given as in Figure \ref{fig:hat_gamma_n}, and we can evaluate the integral over each component. We define 
	\begin{align*}
		H_n^j = \int_{\hat{\gamma}_{n,j}}  \rho(y) e^{n\psi(y)} \cot(n \pi y) \, dy, ~~~~j = 1,2,3,4.
	\end{align*}
	Then, clearly, 
	\begin{align*}
		H = \sum_{j=1}^4 H_n^j.
	\end{align*}
	
	First, for $H_n^1$, notice that
	\begin{align*}
		H_n^1 = - \int_{-\hat{\gamma}_{n,1}}  \rho(y) e^{n\psi(y)} \cot(n \pi y) \, dy,
	\end{align*}
	where $-\hat{\gamma}_{n,1}$ denotes path with the opposite orientation of $\hat{\gamma}_{n,1}$ and $-\hat{\gamma}_{n,1}$ has the following parametrization
	\begin{align*}
		y = x + \frac{\xi}{\sqrt{n}} i, ~~~~ \frac{\delta}{n}\leq x \leq  1- \frac{\delta}{n}.
	\end{align*}
	From Lemma \ref{lem:cotan}, 
	\begin{align}
		\vert \cot(\pi n y) - (-i) \vert \leq \frac{2}{e^{2\pi n \textnormal{Im}(y)} - 1} = \frac{2}{e^{2\pi \xi \sqrt{n}} - 1}. \label{eq:cotan}
	\end{align}
	Then, we can replace $\cot(n\pi y)$ with following asymptotics
	\begin{align}
		\cot(n \pi y) =  \cot(n \pi x + \sqrt{n} \pi \xi i) \sim -i, 
		\label{eq:cotan2}
	\end{align}
	since the error term decays exponentially with $\sqrt{n}$. We thus obtain
	\begin{align}
		H_n^1 & \sim i \int_{-\hat{\gamma}_{n,1}}  \rho(y) e^{n\psi(y)} \, dy. \label{eq:starthn1}
	\end{align}
	Next, we continuously deform $\hat{\gamma}_{n,1}$ into a new curve as illustrated in Figure \ref{fig:deform}. Since the contribution over the vertical components can be handled together with $H_n^2$ and $H_n^4$, which we will soon prove negligible, we focus on the component over the real line. We then have
	\begin{align}
		\int_{-\hat{\gamma}_{n,1}}  \rho(y) e^{n\psi(y)} \, dy \sim \int^{1- \frac{\delta}{n}}_{\frac{\delta}{n}}  \rho(x) e^{n\psi(x)} \, dx. \label{eq:definite_inte}
	\end{align}
	
	\begin{figure}
		\centering
		\includegraphics[width=0.7\linewidth]{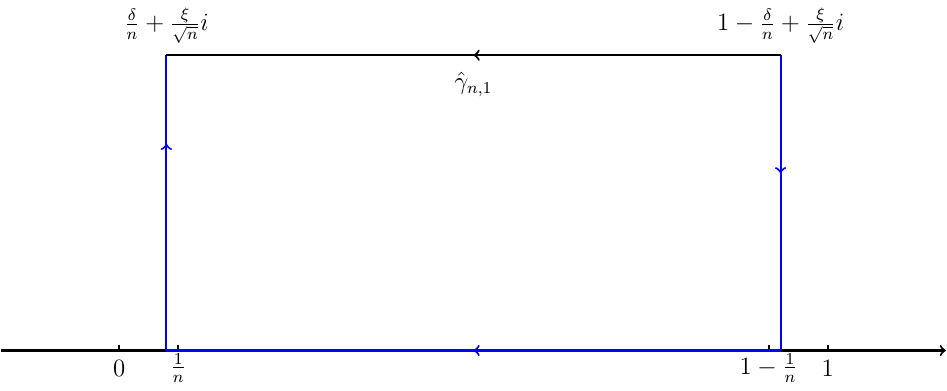}
		\caption{$\hat{\gamma}_{n,1}$ can be continuously deformed into a new curve (shown in blue) with the same orientation and end points. The new curve has two vertical components and a horizontal component contained completely within the real line.}
		\label{fig:deform}
	\end{figure}
	
	To analyze the asymptotic behavior of the real definite integral in \eqref{eq:definite_inte}, we apply the saddle point analysis, which says that the leading order asymptotics are determined by the behavior of the integrand near the critical points of $\psi(x)$. Towards this,
	we find the critical point $x_0$ by solving
	\begin{align*}
		\psi'(x) = \alpha - \log x + \log(1-x) = \log(\frac{p_J}{1-p_J}) - \log(\frac{x}{1-x}) =0,
	\end{align*}
	and we get
	$$x_0 = p_J.$$ 
	Note that $\gamma$ is real along $\left[ \frac{\delta}{n}, 1-\frac{\delta}{n} \right] \subseteq \mathbb{C}.$
	Further, $\psi'(x) >0$ for $x< x_0$, and  $\psi'(x) < 0$ for $x > x_0$. Thus, $\psi(x)$ decays as $x$ moves away from $x_0$. 
	We now expand $\rho(x) e^{n\psi(x)}$ around $x = x_0$. We write
	\begin{align}
		\int_{\frac{\delta}{n}}^{1- \frac{\delta}{n}} \rho(x) e^{n\psi(x)} \, dx = e^{n \psi(x_0)}\int_{\frac{\delta}{n}}^{1- \frac{\delta}{n}} \rho(x) e^{n (\psi(x)- \psi(x_0))} \, dx. \label{eq:integrand}
	\end{align}
	Since $\rho(x)$ vanishes to the second order at $x_0$, we have to go to higher orders in the standard stationary phase approximation to \eqref{eq:integrand}. 
	We thus introduce the following change of variable, where we write $\psi(x) - \psi(x_0)$ as the negative square of a formal power series. Precisely, let
	\begin{align}
		z(x) = \sum_{k = 1}^\infty c_k (x - x_0)^k. \label{eq:zpower1}
	\end{align}
	We require that
	\begin{align}
		-z(x)^2 = \psi(x) - \psi(x_0) =  \sum_{m =2}^\infty \frac{\psi^{(m)}(x_0)}{m!}(x -x_0)^m. \label{eq:zpower2}
	\end{align}
	The Taylor expansion starts only at $m = 2$, since the first derivative of $\psi$ vanishes at $x = x_0$. 
	Now we compute $z(x)^2$ explicitly using the power series expression and get
	\begin{align}
		z(x)^2 & = \sum_{k_1=1}^{\infty}  \sum_{k_2=1}^{\infty} c_{k_1} c_{k_2} (x-x_0)^{k_1 + k_2} \notag \\
		& = \sum_{m=2}^\infty \left(\sum_{k = 1}^{m-1} c_k c_{m-k} \right) (x-x_0)^{m} \label{eq:zpower2b}
	\end{align}
	Combining \eqref{eq:zpower2} and \eqref{eq:zpower2b}, we obtain
	\begin{align*}
		\left(\sum_{k = 1}^{m-1} c_k c_{m-k} \right) = - \frac{\psi^{(m)}(x_0)}{m!}
	\end{align*}
	and this allows us to compute the $c_k$'s explicitly. Particularly, for $m = 2$, we have
	\begin{align*}
		c_1^2 = - \frac{\psi^{(2)}(x_0)}{2!} = \frac{1}{2(1-x_0) x_0}.
	\end{align*}
	Therefore,
	\begin{align*}
		c_1 = \frac{1}{\sqrt{2(1-p_J) p_J}}.
	\end{align*}
	We now define the inversion series of \eqref{eq:zpower1} using
	\begin{align}
		x - x_0 = \sum_{l = 1}^\infty a_l z^l. \label{eq:zpower3}
	\end{align}
	Plugging \eqref{eq:zpower3} into \eqref{eq:zpower1}, we see that
	$a_l$'s must satisfy 
	\begin{align*}
		z & = \sum_{k = 1}^\infty c_k \left( \sum_{l = 1}^\infty a_l z^l \right)^k  = c_1 a_1 z + (c_1 a_2 + c_2 a_1^2) z^2 + (c_1a_3 + 2c_2 a_1 a_2 + c_3a_1^3) z^3 + \dots
	\end{align*}
	Hence, we must have
	\begin{gather*}
		c_1 a_1 = 1 \\
		c_1 a_2 + c_2 a_1^2 = 0 \\
		c_1a_3 + 2c_2 a_1 a_2 + c_3a_1^3 =0 
	\end{gather*}
	and so on and so forth. Specially, we can solve for $a_1$ such that
	\begin{align}
		a_1 = \sqrt{2(1-p_J) p_J}. \label{eq:c1a1}
	\end{align}
	We further note that
	\begin{align*}
		dx = \left(\sum_{l= 1}^\infty a_l l z^{l-1} \right) dz.
	\end{align*} 
	Then, apply the change of variable in \eqref{eq:zpower3} to \eqref{eq:integrand}, the integrand becomes
	\begin{align}
		\rho(x) e^{n (\psi(x) - \psi(x_0))} \, dx 
		& = \left[\rho(x_0) + \sum_{m=1}^\infty \frac{\rho^{(m)}(x_0)}{m!} \left( \sum_{l = 1}^\infty a_l z^l\right)^m \right] e^{-nz^2}   \left(\sum_{l= 1}^\infty a_l l z^{l-1} \right) \, dz \label{eq:dzdx}
	\end{align} 
	Now we organize the terms in \eqref{eq:dzdx} in ascending order of $z^le^{-nz^2}$. Since
	\begin{align*}
		\int_{-\infty}^\infty z^le^{-nz^2} \, dz = \left( \frac{1}{\sqrt{n}}\right)^{l+1} \int_{-\infty}^\infty z^le^{-z^2} \, dz,
	\end{align*}
	and since $\rho(x_0) = \rho'(x_0)=0$,
	\begin{align}
		\int_{\frac{\delta}{n}}^{1- \frac{\delta}{n}} \rho(x) e^{n\psi(x)} \, dx & \sim 
		e^{n \psi(x_0)}  \frac{a_1^3}{n} \frac{\rho^{(2)}(x_0)}{2!} \int_{-\infty}^{\infty}  z^2 e^{-z^2}  \, dz \notag \\
		& =  \frac{\sqrt{\pi}}{8} \left(\sqrt{\frac{2}{n}} \right)^3 (1-p_J) \frac{e^n}{(1-p_J)^n} \label{eq:integral_eval}
	\end{align}
	In \eqref{eq:integral_eval}, we have used
	\begin{gather*}
		\int_{-\infty}^{\infty}  z^2 e^{-z^2}=\sqrt{\pi}/2 \\ \frac{\rho^{(2)}(x_0)}{2!} = \frac{1}{4p_J \sqrt{(1-p_J)p_J}} \\
		e^{n \psi(x_0)} = \frac{e^n}{(1-p_J)^n}.
	\end{gather*}
	In summary, from \eqref{eq:starthn1}, \eqref{eq:definite_inte} and \eqref{eq:integral_eval}, we obtain
	\begin{align*}
		H_n^1 \sim i \frac{\sqrt{\pi}}{8} \left(\sqrt{\frac{2}{n}} \right)^3 (1-p_J) \frac{e^n}{(1-p_J)^n}.
	\end{align*}
	The computation for $H_n^3$ is completely analogous. 
	On $\hat{\gamma}_{n,3}$, we use $i$ (instead of $-i$) to approximate for $\cot(n \pi y)$. With an additional minus sign coming from the parametrization of $\hat{\gamma}_{n,3}$, which is in the opposite direction of $\hat{\gamma}_{n,1}$, we get the same leading order asymptotic for $H_n^3$. 
	
	The contribution from $H_n^2$ and $H_n^4$ can be ignored. To see this, we parametrize $\hat{\gamma}_{n,2}$ by
	\begin{align*}
		y = \frac{\delta}{n} - t i, ~~~~ -\frac{\xi}{\sqrt{n}} \leq t \leq \frac{\xi}{\sqrt{n}}.
	\end{align*}
	From the triangle inequality,
	\begin{align*}
		\vert H_n^2  \vert & \leq \int_{\hat{\gamma}_{n,2}}  \vert \rho(y) \vert e^{n\textnormal{Re}(\psi(y))}  \vert \cot(n \pi y) \vert \, dy \\
		& \leq \max_{\hat{\gamma}_{n,2}} \vert \rho(y) \vert e^{n\textnormal{Re}(\psi(y))}  \vert \cot(n \pi y) \vert \int_{\hat{\gamma}_{n,2} } 1 \, \vert dy \vert.
	\end{align*}
	Of course,
	\begin{align*}
		\int_{\hat{\gamma}_{n,2} } 1 \, \vert dy \vert = \frac{2\xi}{\sqrt{n}}.
	\end{align*}
	We focus on bounding the function $\vert \rho(y) \vert e^{n\textnormal{Re}(\psi(y))}  \vert \cot(n \pi y) \vert$. 
	It is enough to show that there exists positive constants $C_0$ and $\theta <1$ such that
	\begin{align}
		 \max_{\hat{\gamma}_{n,2}} \vert \rho(y) \vert e^{n\textnormal{Re}(\psi(y))}  \vert \cot(n \pi y) \vert \leq C_0 e^{\pi \sqrt{n} \xi} \frac{e^n \theta^n}{(1-p_J)^n}, \label{eq:triangleuseful1}
	\end{align}
	since \eqref{eq:triangleuseful1} clearly implies that $\vert H_n^2  \vert$ decays exponentially faster than $H_n^1$. Let $y \in \hat{\gamma}_{n,2}$. From Lemma \ref{lem:cotan2}, 
	\begin{align*}
		\vert \cot(n\pi y) \vert \leq  1 + \tfrac{2}{\sqrt{1-\cos(2\pi \delta)^2}}.
	\end{align*}
	We further compute that
	\begin{align*}
		\vert \rho(y) \vert \leq \frac{\left( \sqrt{\textnormal{Re}(y) } + \sqrt{p_J} \right)^2}{\sqrt{\textnormal{Re}(y) } \sqrt{\textnormal{Re}(1-y) }}= \frac{\left( \left( \frac{\delta^2 }{n^2} + \frac{\xi^2}{n} \right)^{\frac{1}{4}} + \sqrt{p_J}\right)^2}{\sqrt{\tfrac{\delta}{n}} \sqrt{1- \tfrac{\delta}{n}}}
	\end{align*}
	where the right hand side is bounded above by a polynomial of $n$. Then, to prove \eqref{eq:triangleuseful1}, it suffices to show that for $n$ large enough there exists positive constants $C_1$ and $\theta <1$ such that
	\begin{align}
		 e^{n\textnormal{Re}(\psi(y))} \leq C_1 e^{\pi \sqrt{n} \xi} \frac{e^n \theta^n}{(1-p_J)^n}. \label{eq:triangleuseful2}
	\end{align}
	From Lemma \ref{lem:realpart}, we get
	\begin{align*}
		\textnormal{Re}(\psi(y)) &\leq \alpha \textnormal{Re}(y) - \textnormal{Re}(y)\log(\textnormal{Re}(y)) - \textnormal{Re}(1-y)\log(\textnormal{Re}(1-y)) \\
		& \quad + \frac{\pi}{2} \vert \textnormal{Im}(y) \vert + \frac{\pi}{2} \vert \textnormal{Im}(1-y) \vert + 1.
	\end{align*}
	Since $\textnormal{Re}(y) = \frac{\delta}{n}$, 
	\begin{align*}
		\textnormal{Re}(\psi(y)) &\leq  \alpha \frac{\delta}{n} - \frac{\delta}{n}\log(\frac{\delta}{n}) - \left(1 - \frac{\delta}{n} \right) \log\left(1 - \frac{\delta}{n} \right) \\
		& \quad + \pi \frac{\xi}{\sqrt{n}} + 1.
	\end{align*}
	Hence,
	\begin{align*}
		e^{n\textnormal{Re}(\psi(y))} \leq \frac{e^{\alpha \delta + \pi \sqrt{n} \xi} e^n}{ \left( \frac{\delta}{n}\right)^{\delta} \left(1- \tfrac{\delta}{n} \right)^{n - \delta}},
	\end{align*}
	which clearly satisfies \eqref{eq:triangleuseful2}. Therefore, $H_n^2$ is negligible.
	Using the same kind of techniques, we establish that $H_n^{4}$ can also be ignored. 
	
	In total, we have
	\begin{align}
		H_n \sim  i  \frac{\sqrt{\pi}}{4} \left(\sqrt{\frac{2}{n}} \right)^3 (1-p_J) \frac{e^n}{(1-p_J)^n}. \label{eq:ejj5} 
	\end{align}
	Inserting \eqref{eq:ejj5} into \eqref{eq:ejj4}, we obtain
	\begin{align*}
		I_n \sim \frac{i}{2} \frac{1}{\sqrt{2\pi n}}
		  e^{-n \log n} (1-p_J) \frac{e^n}{(1-p_J)^n}.
	\end{align*}
	Finally, from \eqref{eq:ejj2}, we get
	\begin{align*}
		e_{J, J} & \sim \frac{1}{2in} (1-p_J)^{n} \Gamma(n+1) \\
		& \sim \frac{1-p_J}{4n}.
	\end{align*}
\end{proof}

\begin{lemma}
	\label{lem:ejjprime}
	Let $e_{J, J'}$ be defined as in \eqref{eq:ejjdef}
	with $J \neq J'$. Then the large $n$ leading order asymptotics of $e_{J, J'}$ is
	\begin{align}
		e_{J, J'} \sim -\frac{\sqrt{p_J p_{J'}} }{4n}. \label{eq:ejjasym}
	\end{align}
\end{lemma}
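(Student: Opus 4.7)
The plan is to extend the approach used in Lemma \ref{lem:ejj} to two dimensions, exploiting the fact that $(S^{(J)}_n, S^{(J')}_n, n - S^{(J)}_n - S^{(J')}_n)$ is multinomially distributed with parameters $(n; p_J, p_{J'}, q)$, where $q = 1 - p_J - p_{J'}$. First I would express $e_{J,J'}$ as a double sum over the joint multinomial mass function, discard the boundary terms (where $m_1 \in \{0, n\}$, $m_2 \in \{0, n\}$, or $m_1 + m_2 = n$), which decay exponentially, and then apply Cauchy's residue theorem separately in each variable by inserting $\cot(\pi z_1)\cot(\pi z_2)$ and integrating over the product $\gamma_n^{(1)} \times \gamma_n^{(2)}$ of two rectangular contours analogous to $\gamma_n$. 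After rescaling $y_i = z_i/n$ and applying Stirling's approximation to each Gamma factor, the integrand takes the form $\rho(y_1, y_2)\, e^{n\psi(y_1, y_2)}\cot(n\pi y_1)\cot(n\pi y_2)$ with
\begin{align*}
\psi(y_1, y_2) = y_1\log\tfrac{p_J}{y_1} + y_2\log\tfrac{p_{J'}}{y_2} + (1 - y_1 - y_2)\log\tfrac{q}{1 - y_1 - y_2} + 1,
\end{align*}
$\rho(y_1, y_2) \propto (\sqrt{y_1}-\sqrt{p_J})(\sqrt{y_2}-\sqrt{p_{J'}})/\sqrt{y_1\, y_2\, (1-y_1-y_2)}$, and a unique critical point at $(p_J, p_{J'})$.

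Next I would carry out a two-dimensional stationary-phase analysis mirroring the argument for $H_n^1$ in Lemma \ref{lem:ejj}: on the horizontal portions of the product contour the cotangents reduce to $\pm i$ up to exponentially small errors, the vertical portions contribute negligibly by the same style of triangle-inequality bounds using the lemmas on $\cot$ and $\mathrm{Re}(\psi)$, and the bulk of the contribution reduces to a real two-dimensional Gaussian evaluation near $(p_J, p_{J'})$. The key structural facts are that $\rho$ vanishes to first order in each of $y_1 - p_J$ and $y_2 - p_{J'}$, and that the Hessian of $-\psi$ at the saddle equals $\Sigma^{-1}$, the inverse of the multinomial covariance $\Sigma$ with $\Sigma_{11} = p_J(1-p_J)$, $\Sigma_{22} = p_{J'}(1-p_{J'})$, and $\Sigma_{12} = -p_J p_{J'}$. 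A change of variables reducing $\psi - \psi(p_J, p_{J'})$ to the normal form $-(z_1^2 + z_2^2)$, followed by evaluation of the leading Gaussian double integral, extracts the mixed second moment $\Sigma_{12}/n = -p_J p_{J'}/n$; combined with the linearization prefactor $1/(4\sqrt{p_J p_{J'}})$ coming from $\rho$, this yields the claimed leading asymptotic $-\sqrt{p_J p_{J'}}/(4n)$.

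The main obstacle is the bookkeeping in the 2D Morse-style reduction: the Hessian of $\psi$ is non-diagonal because of the multinomial coupling between $y_1$ and $y_2$, so the formal power series inversion used in Lemma \ref{lem:ejj} now involves two coupled variables, and the prefactors must be tracked through a linear diagonalization of the quadratic form. This introduces no qualitatively new analytic difficulty, only heavier bookkeeping. A conceptually transparent shortcut, less faithful to the style of Lemma \ref{lem:ejj}, is the delta method: linearizing $\sqrt{m_i/n} - \sqrt{p_i} \approx (m_i/n - p_i)/(2\sqrt{p_i})$ and using $\mathrm{Cov}(S^{(J)}_n, S^{(J')}_n) = -n p_J p_{J'}$ recovers the stated asymptotics immediately, with the $O(1/n^2)$ correction controlled by standard multinomial moment bounds.
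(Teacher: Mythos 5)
Your plan follows essentially the same route as the paper's proof: reduce to the interior part of the multinomial double sum (the corner and edge terms decaying exponentially), convert it to a two-dimensional contour integral with $\cot(\pi z_1)\cot(\pi z_2)$ via Cauchy's residue theorem in each variable, rescale and apply Stirling, and run a two-variable saddle-point analysis at $(p_J, p_{J'})$ where $\rho$, its first partials and pure second partials vanish, so only $\rho_{xy}$ contributes and the Gaussian integral extracts the covariance entry $-p_Jp_{J'}$, giving $-\sqrt{p_Jp_{J'}}/(4n)$ exactly as in the paper. The one substantive point your outline glosses over—and to which the paper devotes several auxiliary lemmas—is that Stirling's approximation for $\Gamma(n(1-x-y)+1)$ is not valid on the portion of the (rescaled) product contour where $\mathrm{Re}(x+y) \geq 1$, so that region must be split off and bounded separately using the reflection and recursion formulas for the Gamma function; this is precisely the heavier bookkeeping you anticipate rather than a flaw in the approach.
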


\begin{proof}
	We begin by noticing that
	\begin{align*}
		e_{J, J'}
		& = \sum_{\substack{m_1, m_2 = 0 \\m_1 + m_2 \leq n}} \left( \sqrt{\tfrac{m_1}{n}} - \sqrt{p_J} \right) \left( \sqrt{\tfrac{m_2}{n}} - \sqrt{p_{J'}} \right) P(S^{(J)}_n = m_1, S^{(J')}_n = m_2),
	\end{align*}
	and
	\begin{align*}
		& P(S^{(J)}_n = m_1, S^{(J')}_n = m_2) =  \binom{n}{m_1} \binom{n-m_1}{m_2} p_J^{m_1} p_{J'}^{m_2}  (1-p_J-p_{J'})^{n-m_1-m_2} \\
		& = \frac{(1-p_J-p_{J'})^{n} \Gamma(n+1) }{\Gamma(m_1+1) \Gamma(m_2 +1) \Gamma(n-m_1-m_2+1)} \left(\frac{p_J}{1-p_J - p_{J'}} \right)^{m_1} \left(\frac{p_{J'}}{1-p_J - p_{J'}} \right)^{m_2}.
	\end{align*}
	For simplicity of notation, let us define
	\begin{align*}
		&\alpha = \log(\frac{p_J}{1-p_J - p_{J'}}), \\
		&\beta = \log(\frac{p_{J'}}{1-p_J - p_{J'}}),
	\end{align*}
	and
	\begin{align}
		t_{m_1, m_2} = \frac{(1-p_J-p_{J'})^{n}  \Gamma(n+1)}{n} \, \frac{\left( \sqrt{m_1} - \sqrt{n p_J} \right) \left( \sqrt{m_2} - \sqrt{n p_{J'}} \right)  e^{\alpha m_1 + \beta m_2}}{\Gamma(m_1+1) \Gamma(m_2 +1) \Gamma(n-m_1-m_2+1)}. \label{eq:tm1m2}
	\end{align}
	Clearly, we have
	\begin{align*}
		e_{J, J'}
		& = \sum_{\substack{m_1, m_2 = 0 \\m_1 + m_2 \leq n}} 	t_{m_1, m_2}.
	\end{align*}
	
	We now decompose the sum into the following components according to their asymptotic behavior
	\begin{align*}
		e_{J, J'} =
		& t_{0,0} + t_{0, n} + t_{n, 0} + \sum_{\substack{m_1 = 0 \\1 \leq m_2 \leq n-1}} t_{m_1, m_2} + \sum_{\substack{m_2 = 0 \\1 \leq m_1 \leq n-1}} t_{m_1, m_2} + \sum_{(m_1, m_2) \in P_n} t_{m_1, m_2}
	\end{align*}
	where
	\begin{align*}
		P_n = \{ (m_1, m_2) \in \mathbb{Z}_+ \mid 1 \leq m_1 \leq n-2, 1 \leq m_2 \leq n-2, m_1 + m_2 < n \}.
	\end{align*}
	The terms $ t_{0,0}, t_{0, n}, t_{n, 0}$ decay exponentially with $n$. This follows from the facts that
	\begin{align*}
		t_{0,0} &= (1-p_J-p_{J'})^{n} \sqrt{p_J p_{J'}} \\
		t_{0,n} &= p_{J'}^n  \sqrt{p_J} ( \sqrt{p_{J'}} - 1) \\
		t_{n,0} &= p_{J}^n \sqrt{p_{J'}} ( \sqrt{p_J} - 1). 
	\end{align*}
	The terms $\displaystyle \sum_{\substack{m_1 = 0 \\1 \leq m_2 \leq n-1}} t_{m_1, m_2}$ and $\displaystyle \sum_{\substack{m_2 = 0 \\1 \leq m_1 \leq n-1}} t_{m_1, m_2}$ are effectively one-dimensional. By contour integral method similar to the proof of Lemma \ref{lem:ejjprime}, these terms also exhibit exponential decay in $n$. For details, see Lemma \ref{lem:ejjprimeplace1} and Corollary \ref{lem:ejjprimeplace2}. 
	The non-trivial contribution to the leading order asymptotics come from the two-dimensional sum  over $P_n$. The calculation proceeds similarly to previous cases except we now get two dimensional contour integrals. 
	This suggests that we define
	$$T_n = {\gamma}_n \times {\gamma}_n,$$
	where $\gamma_n$ is given as in Figure \ref{fig:gamma_n}.
	Clearly, 
	\begin{align*}
		T_n \subseteq \mathbb{C}_+^2 - \left( (\mathbb{C}_+ \times \mathbb{Z}_+) \cup (\mathbb{Z}_+ \times \mathbb{C}_+) \right).
	\end{align*}
	Consider $h \in \mathcal{M}(\mathbb{C}_+^2)$ such that
	\begin{align}
		h(x,y) = \frac{\left( \sqrt{x} - \sqrt{n p_J} \right) \left( \sqrt{y} - \sqrt{n p_{J'}} \right)  e^{\alpha x + \beta y}}{\Gamma(x+1) \Gamma(y +1) \Gamma(n-x-y+1)} \cot(\pi x) \cot(\pi y). \label{def:ejjprime_hfun}
	\end{align}
	and let
	\begin{align}
		h_n(x,y) = h(nx, ny). \label{def:ejjprime_hnfun}
	\end{align}
	Define
	\begin{align*}
		I_n = \int_{T_n} h(x,y) \dd x \dd y.
	\end{align*}
	Apply Cauchy's residue theorem one variable at a time, and we obtain that
	\begin{align*}
		I_n = & \int_{\gamma_n} \int_{\gamma_n} h(x,y) dx dy \notag \\
		= & {2i}\int_{\gamma_n} \sum_{m_2 =1}^{n-1}  \frac{\left( \sqrt{x} - \sqrt{n p_J} \right) \left( \sqrt{m_2} - \sqrt{n p_{J'}} \right)  e^{\alpha x + \beta m_2}}{\Gamma(x+1) \Gamma(m_2 +1) \Gamma(n-x-m_2+1)} \cot(\pi x) \\
		= & {(2i)^2} \sum_{m_1 =1}^{n-1} \sum_{m_2 =1}^{n-1} \frac{\left( \sqrt{m_1} - \sqrt{n p_J} \right) \left( \sqrt{m_2} - \sqrt{n p_{J'}} \right)  e^{\alpha m_1 + \beta m_2}}{\Gamma(m_1+1) \Gamma(m_2 +1) \Gamma(n-m_1- m_2+1)}
	\end{align*}
	Note that, for all $m_2 + m_2 > n$,
	\begin{align*}
		\frac{1}{\Gamma(n-m_1-m_2 + 1)} = 0.
	\end{align*}
	We thus get that
	\begin{align*}
		I_n = & {(2i)^2} \sum_{\substack{1 \leq m_1 \leq n- 1 \\m_2 = n-m_1}} \frac{\left( \sqrt{m_1} - \sqrt{n p_J} \right) \left( \sqrt{m_2} - \sqrt{n p_{J'}} \right)  e^{\alpha m_1 + \beta m_2}}{\Gamma(m_1+1) \Gamma(m_2 +1) \Gamma(n-m_1- m_2+1)}.
	\end{align*}
	Therefore, 
	\begin{align}
		e_{J, J'} \sim \sum_{(m_1, m_2) \in P_n} t_{m_1, m_2} = \frac{1}{(2i)^2} \frac{(1-p_J-p_{J'})^{n}  \Gamma(n+1)}{n} \, I_n. \label{eq:ejjprimelast}
	\end{align}
	
	It remains to compute the leading order asymptotics for $I_n$. Towards this, we would like to use Stirling's approximation to give asymptotics for $\Gamma(x+1)$, $\Gamma(y+1)$ and $\Gamma(n-x-y+1)$. However, there exists $(x,y)\in T_n$ such that $\textnormal{Re}(n-x-y) <0$, and hence Stirling's approximation does not apply. To address this, we divide $T_n$ into two parts based on the applicability of Stirling's approximation and apply different strategies to each part. 
	For convenience, we first perform a change of variable $x \mapsto nx$ and $y \mapsto ny$ to get
	\begin{align*}
		I_n = n^2 \int_{\hat{T}_n} h_n(x, y) \, dx dy.
	\end{align*}
	Here,
	\begin{align*}
		\hat{T}_n = \hat{\gamma}_n \times \hat{\gamma}_n,
	\end{align*}
	and $ \hat{\gamma}_n$ is defined as in Figure \ref{fig:hat_gamma_n}. 
	Since $\hat{T}_n$ consists of the following faces
	\begin{align*}
		\hat{T}_n = \bigcup_{i,j=1,2,3,4} \hat{\gamma}_{n, i} \times \hat{\gamma}_{n,j},
	\end{align*}
	we get that
	\begin{align*}
		\int_{\hat{T}_n} h_n(x, y) \, dx dy = \sum_{i = 1}^4  \sum_{j = 1}^4 \int_{ \hat{\gamma}_{n, i} \times \hat{\gamma}_{n,j}} h_n(x, y) \, dx dy.
	\end{align*}
	The non-trivial contribution to the leading order asymptotics comes from the integral over $\hat{\gamma}_{n, i} \times \hat{\gamma}_{n, j}$ where $i,j \in \{1,3\}$, and the rest are negligible as we will prove below.
	
	We focus on the integral over $\hat{\gamma}_{n, i} \times \hat{\gamma}_{n, j}$ with $i = j = 1$.
	First, let $\phi(x,y) \in \mathcal{M}(\mathbb{C}_+^2)$ be defined by
	\begin{align}
		\phi(x,y) = \frac{\left( \sqrt{nx} - \sqrt{n p_J} \right) \left( \sqrt{ny} - \sqrt{n p_{J'}} \right)  e^{\alpha nx + \beta ny}}{\Gamma(nx+1) \Gamma(ny +1) \Gamma(n-nx-ny+1)}. \label{def:ejjprime_phifun}
	\end{align}
	From Lemma \ref{lem:cotan}, we see that
	\begin{align*}
		\int_{\hat{\gamma}_{n, 1} \times \hat{\gamma}_{n, 1}} h_n(x, y) \, dx dy & = \int_{\hat{\gamma}_{n, 1} \times \hat{\gamma}_{n, 1}} \phi(x, y) \cot(n\pi x) \cot(n\pi y) \, dx dy \\
		& \sim (-i)^2 \int_{\hat{\gamma}_{n, 1} \times \hat{\gamma}_{n, 1}} \phi(x, y) \, dx dy,
	\end{align*}
	Let us further define
	\begin{align*}
		W_{n,0} &= \left\{(x,y) \in \mathbb{R}^2 \mid \frac{\delta}{n} \leq x,y \leq 1 - \frac{\delta}{n}  \right\}, \\
		W_{n,1} & = \left\{ (x,y) \in \mathbb{C}^2 \mid x = \frac{\delta}{n} + si, \, y = t + si, \right . \\ 
		& \hspace{1in} \left. \frac{\delta}{n} \leq t \leq 1 - \frac{\delta}{n}, \,  0 \leq s \leq \frac{\xi}{\sqrt{n}}  \right\}, \\
		W_{n,2} & = \left\{ (x,y) \in \mathbb{C}^2 \mid x = t + si, \, y = \frac{\delta}{n} + si, \right . \\ 
		& \hspace{1in} \left. \frac{\delta}{n} \leq t \leq 1 - \frac{\delta}{n}, \,  0 \leq s \leq \frac{\xi}{\sqrt{n}}  \right\}, \\
		W_{n,3} & = \left\{ (x,y) \in \mathbb{C}^2 \mid x = 1 -\frac{\delta}{n} + si, \, y = t + si, \right . \\ 
		& \hspace{1in} \left. \frac{\delta}{n} \leq t \leq 1 - \frac{\delta}{n}, \,  0 \leq s \leq \frac{\xi}{\sqrt{n}}  \right\}, \\
		W_{n,4} & = \left\{ (x,y) \in \mathbb{C}^2 \mid x = t + si, \, y = 1 -\frac{\delta}{n} + si, \right . \\ 
		& \hspace{1in} \left. \frac{\delta}{n} \leq t \leq 1 - \frac{\delta}{n}, \,  0 \leq s \leq \frac{\xi}{\sqrt{n}}  \right\},
	\end{align*}
	Since $\phi(x,y)$ is holomorphic for all $(x,y) \in \mathbb{C}$ such that $\textnormal{Re}(x) >0$ and $\textnormal{Re}(y)>0$, and since $\hat{\gamma}_{n, 1} \times \hat{\gamma}_{n, 1} $ is homologous to $W_n = \displaystyle \bigcup_{l=0}^4 W_{n,l}$ relative to their common boundary. 
	we obtain that
	\begin{align*}
		\int_{\hat{\gamma}_{n, 1} \times \hat{\gamma}_{n, 1}} \phi(x, y) \, dx dy = \sum_{l = 0}^4 \int_{W_{n,l}} \phi(x, y) \, dx dy.
	\end{align*}
	We further divide $W_{n,0}$ into a lower triangular piece $L^-_n$, a thin strip $U_n$ along the diagonal, and an upper triangular piece $L^+_n$. That is
	\begin{align*}
		W_{n,0} = L^-_n \cup U_n \cup L^+_n,
	\end{align*}
	where
	\begin{align*}
		L^-_n &=  \left\{(x,y) \in W_{n,0} \mid x + y \leq 1- \frac{1}{2n} \right\}, \\
		U_n &= \left\{(x,y) \in W_{n,0} \mid 1- \frac{1}{2n} < x + y \leq 1+ \frac{1}{n} \right\}, \\
		L^+_n &=  \left\{(x,y) \in W_{n,0} \mid x + y > 1+ \frac{1}{n} \right\}. \\
	\end{align*}
	Therefore, 
	\begin{align*}
		\int_{W_{n,0}} \phi(x, y) \, dx dy   = \int_{L^-_n} \phi(x, y) \, dx dy + \int_{U_n} \phi(x, y) \, dx dy + \int_{L^+_n} \phi(x, y) \, dx dy .
	\end{align*}
	
	Below we compute the leading order asymptotics of the integral over $L^-_n$.
	For $(x,y) \in L^-_n$, the Stirling's approximation of the Gamma function \eqref{eq:gammaapprox} yields
	\begin{align*}
		&\Gamma(nx+1) \Gamma(ny +1) \Gamma(n-nx-ny+1) \\
		& \hspace{.1in} \sim (\sqrt{2\pi})^3(\sqrt{n})^3 \sqrt{x y (1-x-y)} e^{n\log n - n} e^{n(x\log x + y \log y + (1-x-y)\log(1-x-y)}.
	\end{align*}
	We therefore obtain
	\begin{align}
		\int_{L^-_n} \phi(x, y) \, dx dy \sim e^{n-n\log n} \left(\frac{1}{2 \pi}\right)^{\frac{3}{2}} \frac{1}{\sqrt{n}} \int_{L^-_n}  \rho(x,y) e^{n \psi(x,y)} \, dx dy, \label{eq:inteshortcuta}
	\end{align}
	where
	\begin{align*}
		\psi(x,y) &= \alpha x + \beta y - x\log x - y \log y - (1-x-y)\log(1-x-y), \\
		\rho(x,y) &= \frac{\left( \sqrt{x} - \sqrt{p_J} \right) \left( \sqrt{y} - \sqrt{p_{J'}} \right)}{ \sqrt{x y (1-x-y)} }.
	\end{align*}
	Furthermore,
	\begin{align*}
		\int_{L^-_n}  \rho(x,y) e^{n \psi(x,y)} \, dx dy = \int_{\frac{\delta}{n}}^{1 - \frac{1}{2n}} \int_{\frac{\delta}{n}}^{1 - \frac{1}{2n} -y}  \rho(x, y) e^{n\psi(x, y)} \, dx dy.
	\end{align*}
	We proceed by applying the saddle point method to the integral above. Note that $\psi$ is real on $W_{n, 0}$.
	The critical point $(x_0, y_0)$ of $\psi(x,y)$ is obtained from solving the following system of equations
	\begin{align*}
		\frac{\partial \psi}{\partial x}= \alpha - \log(x) + \log(1-x-y) = 0, \\
		\frac{\partial \psi}{\partial y}= \beta - \log(y) + \log(1-x-y)  =  0.
	\end{align*}
	The system has a unique solution $(x_0, y_0) = (p_J, p_{J'}).$ Since $p_J + p_{J'} \leq 1$, $(x_0, y_0) \in L^-_n.$
	Furthermore, the Hessian matrix is given by
	\begin{align*}
		H = \begin{bmatrix}
			\frac{\partial^2 \psi}{\partial x^2} &  \frac{\partial \psi}{\partial x \partial y} \\
			\frac{\partial \psi}{\partial y \partial x} & \frac{\partial^2 \psi}{\partial y^2}
		\end{bmatrix} = 
		\begin{bmatrix}
			-\frac{1}{x} - \frac{1}{1-x-y} &  - \frac{1}{1-x-y} \\
			- \frac{1}{1-x-y} & -\frac{1}{y} - \frac{1}{1-x-y}
		\end{bmatrix}.
	\end{align*}
	Let $\lambda_1$ and $\lambda_2$ denote the eigenvalues of $H$. Since for all $x, y \in L^-_n$, 
	\begin{align*}
		\lambda_1 \lambda_2 = \det(H) = \frac{1 - y - x}{xy (1-x-y)^2} >0,
	\end{align*}
	and since
	\begin{align*}
		\lambda_1 + \lambda_2 = \textnormal{Tr}(H) =  -\frac{1}{x} -\frac{1}{y} - \frac{2}{1-x-y} < 0,
	\end{align*}
	both $\lambda_1$ and $\lambda_2$ are strictly negative, and hence the Hessian $H$ is negative definite over $L^-_n$. 
	Therefore, the function $\psi(x,y)$ attains its maximum value at the critical point  $(x_0, y_0) = (p_J, p_{J'})$.
	
	Let
	\begin{align*}
		z = (z_1, z_2)
	\end{align*}
	be new coordinates on $L_n^{-}$ near $(x_0, y_0)$, 
	where
	\begin{align}
		z_i(x, y) = \sum_{l_1, l_2 = 0}^\infty c^{(i)}_{l_1, l_2} (x-x_0)^{l_1}(y-y_0)^{l_2}, ~~~~~ i = 1, 2. \label{eq:z1z2}
	\end{align}
	Let
	\begin{align*}
		\Sigma = \begin{bmatrix}
			-\frac{\partial^2 \psi}{\partial x^2}(x_0, y_0) &  -\frac{\partial \psi}{\partial x \partial y} (x_0, y_0) \\
			-\frac{\partial \psi}{\partial y \partial x} (x_0, y_0) & -\frac{\partial^2 \psi}{\partial y^2}(x_0, y_0)
		\end{bmatrix} = \begin{bmatrix}
			\frac{1- p_{J'}}{p_J(1-p_J-p_{J'})} & \frac{1}{1-p_J-p_{J'}} \\
			\frac{1}{1-p_J-p_{J'}} & \frac{1- p_{J}}{p_{J'}(1-p_J-p_{J'})}
		\end{bmatrix}.
	\end{align*}
	We require that 
	\begin{align}
		-\frac{1}{2}z^\intercal \Sigma z  = \psi(x, y) - \psi(x_0, y_0). \label{eq:intermediatecomparison}
	\end{align}
	Taylor expansion of the right-hand side of \eqref{eq:intermediatecomparison} yields
	\begin{align*}
		&\psi(x, y) - \psi(x_0, y_0) = \frac{\psi_{xx}(x_0, y_0)}{2} (x-x_0)^2 \\
		&\hspace{0.2in} + \psi_{xy}(x_0, y_0)(x-x_0)(y-y_0)+ \frac{\psi_{yy}(x_0, y_0)}{2} (y-y_0)^2 + \dots.
	\end{align*}
	For simplicity, we write $\psi_{xx} = \psi_{xx}(x_0, y_0)$, $\psi_{xy} = \psi_{xy}(x_0, y_0)$ and $\psi_{yy}=\psi_{yy}(x_0, y_0)$. 
	On the left hand side of \eqref{eq:intermediatecomparison}, we have
	\begin{align*}
		-\frac{1}{2}z^\intercal \Sigma z & = \frac{\psi_{xx}}{2} z_1^2 + \psi_{xy}z_1z_2+ \frac{\psi_{yy}}{2} z_2^2  \\
		& = \frac{\psi_{xx}}{2} \left(\sum_{l_1, l_2 = 0}^\infty c^{(1)}_{l_1, l_2} (x-x_0)^{l_1}(y-y_0)^{l_2}\right)^2+  \\
		&~~~~ \psi_{xy} \left(\sum_{l_1, l_2 = 0}^\infty c^{(1)}_{l_1, l_2} (x-x_0)^{l_1}(y-y_0)^{l_2}\right)\left(\sum_{l_1, l_2 = 0}^\infty c^{(2)}_{l_1, l_2} (x-x_0)^{l_1}(y-y_0)^{l_2}\right)+  \\
		&~~~~ \frac{\psi_{yy}}{2} \left(\sum_{l_1, l_2 = 0}^\infty c^{(2)}_{l_1, l_2} (x-x_0)^{l_1}(y-y_0)^{l_2}\right)^2  \\
		& = \frac{\psi_{xx}}{2} \left(c^{(1)}_{0,0}\right)^2 + \psi_{xy}c^{(1)}_{0,0} c^{(2)}_{0,0}+ \frac{\psi_{yy}}{2} \left(c^{(2)}_{0,0}\right)^2 + \\
		&~~~~ \left( \frac{\psi_{xx}}{2} c^{(1)}_{0,0} c^{(1)}_{1,0} + \psi_{xy} c^{(1)}_{0,0}c^{(2)}_{1,0} + \psi_{xy} c^{(1)}_{1,0} c^{(2)}_{0,0} + \frac{\psi_{yy}}{2} c^{(2)}_{0,0} c^{(2)}_{1,0} \right) (x-x_0) + \\
		&~~~~ \left( \frac{\psi_{xx}}{2} c^{(1)}_{0,0} c^{(1)}_{0,1} + \psi_{xy} c^{(1)}_{0,0}c^{(2)}_{0,1} + \psi_{xy} c^{(1)}_{1,0} c^{(2)}_{0,0} + \frac{\psi_{yy}}{2} c^{(2)}_{0,0} c^{(2)}_{0, 1} \right) (y-y_0) + \\
		&~~~~ \left(\frac{\psi_{xx}}{2} \left(c^{(1)}_{1,0}\right)^2 + \psi_{xy}c^{(1)}_{1,0} c^{(2)}_{1,0}+ \frac{\psi_{yy}}{2} \left(c^{(2)}_{1,0}\right)^2 \right) (x-x_0)^2 + \\
		&~~~~ \left(\frac{\psi_{xx}}{2} \left(c^{(1)}_{0,1}\right)^2 + \psi_{xy}c^{(1)}_{0,1} c^{(2)}_{0,1}+ \frac{\psi_{yy}}{2} \left(c^{(2)}_{0,1}\right)^2 \right) (y-y_0)^2 + \\
		&~~~~ \left(\frac{\psi_{xx}}{2} c^{(1)}_{0,1} c^{(1)}_{1,0} + \psi_{xy} \left(c^{(1)}_{0,1} c^{(2)}_{1,0} + c^{(1)}_{1,0} c^{(2)}_{0,1} \right)+ \frac{\psi_{yy}}{2} c^{(2)}_{0,1}c^{(2)}_{1,0}  \right)(x-x_0)(y-y_0) + \dots
	\end{align*}
	Comparing both sides term by term, we obtain from the constant term 
	\begin{align}
		\frac{\psi_{xx}}{2} \left(c^{(1)}_{0,0}\right)^2 + \psi_{xy}c^{(1)}_{0,0} c^{(2)}_{0,0}+ \frac{\psi_{yy}}{2} \left(c^{(2)}_{0,0}\right)^2 =0. \label{eq:aa-equate}
	\end{align}
	Since $\Sigma$ is positive definite, the only solution to \eqref{eq:aa-equate} is $c^{(1)}_{0,0}= c^{(2)}_{0,0}=0$. 
	Therefore, the coefficients of the linear term on the left-hand side of \eqref{eq:intermediatecomparison} are all zero. In other words,
	\begin{align*}
		\frac{\psi_{xx}}{2} c^{(1)}_{0,0} c^{(1)}_{1,0} + \psi_{xy} c^{(1)}_{0,0}c^{(2)}_{1,0} + \psi_{xy} c^{(1)}_{1,0} c^{(2)}_{0,0} + \frac{\psi_{yy}}{2} c^{(2)}_{0,0} c^{(2)}_{1,0} = 0, \\
		\frac{\psi_{xx}}{2} c^{(1)}_{0,0} c^{(1)}_{0,1} + \psi_{xy} c^{(1)}_{0,0}c^{(2)}_{0,1} + \psi_{xy} c^{(1)}_{1,0} c^{(2)}_{0,0} + \frac{\psi_{yy}}{2} c^{(2)}_{0,0} c^{(2)}_{0, 1} = 0.
	\end{align*}
	For the quadratic terms, we obtain
	\begin{equation*}
		\begin{aligned}
			%&(x-x_0)^2:\\
			&\frac{\psi_{xx}}{2} \left(c^{(1)}_{1,0}\right)^2 + \psi_{xy}c^{(1)}_{1,0} c^{(2)}_{1,0}+ \frac{\psi_{yy}}{2} \left(c^{(2)}_{1,0}\right)^2 =\frac{\psi_{xx}}{2} \\
			%&(y-y_0)^2:\\	
			&\frac{\psi_{xx}}{2} \left(c^{(1)}_{0,1}\right)^2 + \psi_{xy}c^{(1)}_{0,1} c^{(2)}_{0,1}+ \frac{\psi_{yy}}{2} \left(c^{(2)}_{0,1}\right)^2 =\frac{\psi_{yy}}{2} \\
			%&(x-x_0)(y-y_0): \\
			& \frac{\psi_{xx}}{2} c^{(1)}_{0,1} c^{(1)}_{1,0} + \psi_{xy} \left(c^{(1)}_{0,1} c^{(2)}_{1,0} + c^{(1)}_{1,0} c^{(2)}_{0,1} \right)+ \frac{\psi_{yy}}{2} c^{(2)}_{0,1}c^{(2)}_{1,0} =\psi_{xy}
		\end{aligned}
	\end{equation*}
	One solution is given by
	\begin{align*}
		c^{(1)}_{0,1}= c^{(2)}_{1,0}=0, \\
		c^{(1)}_{1,0}= c^{(2)}_{0,1}=1.
	\end{align*}
	The rest of the coefficients can be computed in a similar way.
	
	We now define the inversion series of \eqref{eq:z1z2} to be 
	\begin{align*}
		x - x_0 & = \sum_{l_1 + l_2 \geq 1} a_{l_1, l_2} z_1^{l_1} z_2^{l_2} \\
		y - y_0 & = \sum_{l_1 + l_2 \geq 1} b_{l_1, l_2} z_1^{l_1} z_2^{l_2}
	\end{align*}
	Plugging this into \eqref{eq:z1z2}, we get that $a_{l_1, l_2}$'s and $b_{l_1, l_2}$'s must satisfy
	\begin{align*}
		z_1 &= c^{(1)}_{1,0} \sum_{l_1 + l_2 \geq 1} a_{l_1, l_2} z_1^{l_1} z_2^{l_2} + \dots  = c^{(1)}_{1,0} a_{1,0} z_1 +c^{(1)}_{1,0} a_{0, 1} z_2 \dots \\
		z_2 &= c^{(2)}_{0,1} \sum_{l_1 + l_2 \geq 1} b_{l_1, l_2} z_1^{l_1} z_2^{l_2} + \dots  = c^{(2)}_{1,0} b_{1,0} z_1+ c^{(2)}_{0,1} b_{0,1} z_2 + \dots
	\end{align*}
	Therefore, 
	\begin{align*}
		&a_{1,0} = b_{0,1} = 1, \\
		&a_{0,1} = b_{1,0} = 0.
	\end{align*}
	The rest of the coefficients can be computed similarly. 
	
	We further note that 
	\begin{align*}
		dx \wedge dy = \sum_{\substack{l_1 + l_2 \geq 1 \\ m_1 + m_2 \geq 1}} a_{l_1, l_2} b_{m_1, m_2} (l_1m_2 - l_2m_1)z_1^{l_1 + m_1 -1}z_2^{l_2+m_2-1} dz_1 \wedge dz_2. 
	\end{align*}
	Particularly, when $l_1 + m_1 =1$ and $l_2 + m_2 =1$, the coefficient is 
	\begin{align*}
		a_{1,0}b_{0,1} + a_{0, 1}b_{1,0} = 1. 
	\end{align*}
	Then, using this change of variable, we can re-write the integral as
	\begin{align}
		& \int_{\frac{\delta}{n}}^{1 - \frac{1}{2n}} \int_{\frac{\delta}{n}}^{1 - \frac{1}{2n} -y}   \rho(x, y) e^{n\psi(x, y)} \, dx dy \notag \\
		= & e^{n \psi(x_0, y_0)}  \int_{\frac{\delta}{n}}^{1 - \frac{1}{2n}} \int_{\frac{\delta}{n}}^{1 - \frac{1}{2n} -y}  \rho(x,y) e^{n (\psi(x,y) - \psi(x_0,y_0))} \, dx dy \notag \\
		= &  e^{n \psi(x_0, y_0)} \int_{\frac{\delta}{n}}^{1 - \frac{1}{2n}} \int_{\frac{\delta}{n}}^{1 - \frac{1}{2n} -y}  ( \rho(x_0, y_0) + \rho_{x}(x_0, y_0)(x-x_0) + \rho_{y}(x_0, y_0)(y-y_0)  + \notag \\
		\quad & \frac{\rho_{xx}(x_0, y_0)}{2} (x-x_0)^2 
		+ \rho_{xy}(x_0, y_0)(x-x_0)(y-y_0)+ \dots) \notag \\
		~~& e^{n (\psi(x,y) - \psi(x_0,y_0))} dx dy \notag \\
		\sim &  e^{n \psi(x_0, y_0)} \int_{\mathbb{R}^2} \left( \rho(x_0, y_0) + \rho_{x}(x_0, y_0) \left(\sum_{l_1 + l_2 \geq 1} a_{l_1, l_2} z_1^{l_1} z_2^{l_2} \right)  + \dots\right) e^{-\frac{n}{2}z^\intercal \Sigma z } \notag \\
		& \sum_{l_1 + l_2 \geq 1} \sum_{m_1 + m_2 \geq 1} a_{l_1, l_2} b_{m_1, m_2} (l_1m_2 - l_2m_1)z_1^{l_1 + m_1 -1}z_2^{l_2+m_2-1} dz_1 dz_2.  \label{eq:dz1dz2}
	\end{align}
	Now we organize the terms in \eqref{eq:dz1dz2} in ascending order of the exponents of $z_1^{r_1}z_2^{r_2}e^{-\frac{n}{2}z^\intercal \Sigma z }$. Since
	\begin{align*}
		\int_{\mathbb{R}^2}  z_1^{r_1}z_2^{r_2}e^{-\frac{n}{2}z^\intercal \Sigma z } \, dz_1 dz_2 = \left( \frac{1}{n}\right)^{\frac{r_1 + r_2}{2} + 1} \int_{\mathbb{R}^2}  z_1^{r_1}z_2^{r_2}e^{-\frac{1}{2}z^\intercal \Sigma z } \, dz_1 dz_2
	\end{align*}
	and since $\rho(x_0,y_0)$, $\rho_x(x_0,y_0)$, $\rho_y(x_0,y_0) $, $\rho_{xx}(x_0,y_0)$ and $\rho_{yy}(x_0,y_0)$ are all zero, the leading term of \eqref{eq:dz1dz2} becomes
	\begin{align}
		\int_{\mathbb{R}^2}  \rho(x, y) e^{n\psi(x, y)} \, dx dy \, \sim & e^{n \psi(x_0, y_0)}  \frac{\rho_{xy}(x_0,y_0)}{n^2} \int_{\mathbb{R}^2} z_1 z_2 e^{-\frac{1}{2}z^\intercal \Sigma z } \, dz_1dz_2 \notag \\
		= & - \frac{\pi}{2 n^2} \frac{\sqrt{p_Jp_{J'}}}{(1-p_J-p_{J'})^n} . \label{eq:integral_eval2}
	\end{align}
	In \eqref{eq:integral_eval2}, we have used the facts
	\begin{gather*}
		\int_{\mathbb{R}^2} z_1 z_2 e^{-\frac{1}{2}z^\intercal \Sigma z } \, dz_1dz_2 = 2\pi \sqrt{\det(\Sigma^{-1})}\Haf(\Sigma^{-1}) = -2\pi (\sqrt{p_Jp_{J'}})^3 \sqrt{1- p_J -p_{J'}} \\ 
		\rho_{xy}(x_0,y_0) = \frac{1}{4p_Jp_{J'} \sqrt{1- p_J -p_{J'}}} \\
		e^{n \psi(x_0, y_0)} = \frac{1}{(1-p_J-p_{J'})^n}.
	\end{gather*}
	Finally, we obtain from \eqref{eq:inteshortcuta} and \eqref{eq:integral_eval2} that 
	\begin{align}
		\int_{L^-_n} \phi(x, y) \, dx dy  \sim  -\frac{1}{4} \frac{1}{\sqrt{2\pi n}} \frac{1}{n^2} \sqrt{p_Jp_{J'}} \frac{e^n}{(1-p_J-p_{J'})^n} \frac{1}{n^n}. \label{eq:inttriangle}
	\end{align}
	
	The integral of $\phi(x,y)$ over $U_n$ and $L_n^+$ are  negligible compared to the right hand side of \eqref{eq:inttriangle}. These results are proved in Lemmas \ref{lem:ejjprime_un} and \ref{lem:ejjprime_lnplus}. 
	In Lemma \ref{lem:ejjprime_wn1}, we prove that the contribution from $W_{n,1}$ can also be ignored. 
	Similarly, the contribution from $W_{n,2}$ is negligible, as the computation is identical to that of $W_{n,1}$ after swapping $x$ and $y$. The integral over $W_{n,3}$ is analyzed in Lemma \ref{lem:ejjprime_wn3}, and it does not contribute to the leading order asymptotics. Similarly, by swapping $x$ and $y$, it follows immediately from Lemma \ref{lem:ejjprime_wn3} that the contribution from $W_{n,4}$ is negligible.
	In total, we have 
	\begin{align*}
		\int_{\hat{\gamma}_{n, 1} \times \hat{\gamma}_{n, 1}} \phi(x, y) \, dx dy \sim  -\frac{1}{4} \frac{1}{\sqrt{2\pi n}} \frac{1}{n^2} \sqrt{p_Jp_{J'}} \frac{e^n}{(1-p_J-p_{J'})^n} \frac{1}{n^n},
	\end{align*}
	and thus
	\begin{align*}
		\int_{\hat{\gamma}_{n, 1} \times \hat{\gamma}_{n, 1}} h_n(x, y) \, dx dy \sim  \frac{1}{4} \frac{1}{\sqrt{2\pi n}} \frac{1}{n^2} \sqrt{p_Jp_{J'}} \frac{e^n}{(1-p_J-p_{J'})^n} \frac{1}{n^n}.
	\end{align*}
	By an analogous argument, we obtain the same leading order asymptotics for the integrals over $\hat{\gamma}_{n, 1} \times \hat{\gamma}_{n, 3}$, $\hat{\gamma}_{n, 3} \times \hat{\gamma}_{n, 1}$ and $\hat{\gamma}_{n, 3} \times \hat{\gamma}_{n, 3}$. This follows from appropriately adjusting the signs of the asymptotics of $\cot(n\pi x)$ and $\cot(n \pi y)$ depending on whether $x$ and $y$ lie above or below the real axis, as well as the orientation of the surface. 
	The remaining $\hat{\gamma}_{n, i} \times \hat{\gamma}_{n, j}$'s do not contribute to the leading order asymptotics. We provide proofs for a few representative cases to illustrate the key techniques. See Lemmas \ref{lem:ejjprime_rn12}, \ref{lem:ejjprime_rn14}, and \ref{lem:ejjprime_rn22}. The proof for the rest of are completely similar to the cases treated in these lemmas. 
	
	In the end, we see that
	\begin{align}
		I_n \sim  \frac{1}{\sqrt{2\pi n}} \sqrt{p_Jp_{J'}} \frac{e^n}{(1-p_J-p_{J'})^n} \frac{1}{n^n}. \label{eq:ejjprimeIn1}
	\end{align}
	Inserting \eqref{eq:ejjprimeIn1} into \eqref{eq:ejjprimelast}, we get the leading order asymptotics is 
	\begin{align*}
		e_{J, J'} & \sim \frac{1}{(2i)^2} \frac{\sqrt{p_Jp_{J'}}}{n} \Gamma(n+1) \frac{1}{\sqrt{2\pi n}} \frac{e^n}{n^n} \\
		& \sim -\frac{\sqrt{p_Jp_{J'}}}{4n}.
	\end{align*}
\end{proof}

\subsection{The importance sampling method: GBS-I}
The GBS-I algorithm solves the special cases $\mathcal{I}^{\times}_{\Haf^2}(\epsilon, \delta)$ using the idea of importance sampling.
Let $I_1, \dots, I_n$ be $n$ independent draws from $P_{tB}$ where $m_{tB} = 2K$. We define the GBS-I estimator to be
\begin{align*}
	\mathcal{E}^\textnormal{GBS-I}_n = \frac{1}{n} \sum_{i = 1}^n \frac{ I_i!}{d_t} a_{I_i} t^{-2K}.
\end{align*}
It is straightforward to show that the GBS-I estimator is unbiased. In other words, for all $n$
\begin{align*}
	\mathbb{E}[\mathcal{E}_n^\textnormal{GBS-I}] = \mu_{\Haf^2}
\end{align*}
By WLLN, we get
\begin{equation}
	\lim_{n \rightarrow \infty} P( \vert \mathbb{E}[\mathcal{E}^\textnormal{GBS-I}_n] - \mu^k_{\Haf^2} \vert > \epsilon) = 0.
	\label{eq:asymbehavior} 
\end{equation}
Thus, GBS-I solves $\mathcal{I}^\times_{\Haf^2}(\epsilon, \delta)$ for large enough $n$.
Furthermore, from Chevyshev's inequality, a \textit{guaranteed sample size} $n_\text{GBS-I}$ can be computed explicitly. 

\begin{theorem}
	\label{thrm:GBS-Icov}
	The GBS-I estimator $\mathcal{E}_n^\textnormal{GBS-I}$ solves $\mathcal{I}^\times_{\Haf^2}(\epsilon, \delta)$
	with
	\begin{align*}
		n \geq \frac{	V^{\textnormal{GBS-I}}_{\Haf^2} }{\delta \epsilon^2 \vert \mu_{\Haf^2}  \vert^2} := n^{\textnormal{GBS-I}}_{\Haf^2}
	\end{align*}
	where
	\begin{align*}
		V^{\textnormal{GBS-I}}_{\Haf^2} = \left(\sum_{\vert J \vert = 2K} a^2_J t^{-4K}J! \Haf(B_I)^2  - \vert \mu_{\Haf^2} \vert^2  \right).
	\end{align*}
\end{theorem}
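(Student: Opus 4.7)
The plan is to reduce the theorem to a standard Chebyshev-inequality argument on an unbiased importance estimator, after verifying unbiasedness and computing the per-sample variance in closed form. I will proceed in three steps, all essentially bookkeeping with the density $P_{tB}$ and the scaling $\Haf((tB)_I) = t^{|I|/2}\Haf(B_I)$.

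First, I will verify that $\mathcal{E}_n^\textnormal{GBS-I}$ is unbiased. Since the $I_i$ are i.i.d., it is enough to evaluate $\mathbb{E}_{I \sim P_{tB}}\!\left[\frac{I!}{d_t} a_I t^{-2K}\right]$. Inserting $P_{tB}(I) = \frac{d_t}{I!}\Haf((tB)_I)^2$ and using $\Haf((tB)_I)^2 = t^{2K}\Haf(B_I)^2$ for $|I| = 2K$, the factors of $d_t$ and $I!$ cancel and the $t^{-2K}$ neutralizes the $t^{2K}$ from the hafnian, leaving $\sum_{|I|=2K} a_I \Haf(B_I)^2 = \mu_{\Haf^2}$ as required.

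Second, I will compute the per-sample second moment by the same manipulation: $\mathbb{E}\!\left[\bigl(\tfrac{I!}{d_t} a_I t^{-2K}\bigr)^2\right]$ becomes a sum over $|I| = 2K$ in which only one copy of $\frac{d_t}{I!}$ cancels, producing the expression defining $V^{\textnormal{GBS-I}}_{\Haf^2}$ after subtracting $\mu_{\Haf^2}^2$. By independence of the samples, $\Var(\mathcal{E}_n^\textnormal{GBS-I}) = V^{\textnormal{GBS-I}}_{\Haf^2}/n$.

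Third, I will apply Chebyshev's inequality, giving
\[
P\!\left(\bigl|\mathcal{E}_n^\textnormal{GBS-I} - \mu_{\Haf^2}\bigr| > \epsilon\,|\mu_{\Haf^2}|\right) \leq \frac{V^{\textnormal{GBS-I}}_{\Haf^2}}{n\,\epsilon^2\,|\mu_{\Haf^2}|^2},
\]
and demanding that the right-hand side be less than $\delta$ yields the stated lower bound $n \geq n^{\textnormal{GBS-I}}_{\Haf^2}$. No substantive obstacle is expected: the proof is a routine application of the importance-sampling / Chebyshev template already used for the analogous result in \cite{shan2024companion}. The only place to be careful is tracking the $t$-scalings introduced by sampling from $P_{tB}$ instead of $P_B$; these cancel uniformly because the monomials in $f$ contributing to $\mu_{\Haf^2,K}$ are all of degree $2K$ in the $p,q$ variables, which pins the hafnian's scaling exponent.
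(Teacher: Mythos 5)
Your proposal is correct and follows essentially the same route as the paper: unbiasedness of the importance-sampling estimator, a direct computation of the per-sample variance, and Chebyshev's inequality. One bookkeeping caveat at exactly the point you flag: carrying out your second step literally gives per-sample variance $\frac{t^{-2K}}{d_t}\sum_{\vert J \vert = 2K} a_J^2\, J!\, \Haf(B_J)^2 - \vert\mu_{\Haf^2}\vert^2$, since $\Haf((tB)_J)^2 = t^{2K}\Haf(B_J)^2$ when $\vert J\vert = 2K$, so your claim that this "produces the expression defining $V^{\textnormal{GBS-I}}_{\Haf^2}$" matches the displayed statement only after reading its $t^{-4K}J!\,\Haf(B_I)^2$ as $t^{-4K}J!\,\Haf((tB)_J)^2$ divided by $d_t$ (the statement also drops the $1/d_t$ that appears in the introduction and in the paper's own proof), i.e.\ up to what appears to be a typo in the stated constant rather than any defect in your argument.
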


\begin{proof}
	The key ingredient of this proof is to compute the variance of the GBS-I estimator.
	\begin{align*}
		\textnormal{Var}(\mathcal{E}^{\textnormal{GBS-I}}_n) &= \mathbb{E}[(\mathcal{E}^{\textnormal{GBS-I}}_n - \vert \mu_{\Haf^2} \vert^2 )] \\
		& =  \frac{1}{n} \left(\frac{1}{d_t} \sum_{\vert J \vert = 2K} a^2_J t^{-4K}J! \Haf(B_J)^2  - \vert \mu_{\Haf^2} \vert^2  \right)\\
		& = \frac{1}{n} V^{\textnormal{GBS-I}}_{\Haf^2}.
	\end{align*}
	The rest of the proof is straightforward from Chebyshev's inequality. 
\end{proof}

Similar to Corollary \ref{coro:optimalt}, we can also show the optimal $t_0$ for the GBS-I estimator is again when $m_{t_0B} = 2K$. 

\begin{corollary}
	A real value $t_0$ is a solution to
	\begin{align*}
		t_0 = \argmin_{t\in (0, \lambda_1^{-1})} \mathbb{E} \left[ \vert \mathcal{E}^{\textnormal{GBS-I}}_n  - \mu_{\Haf} \vert^2 \right]
	\end{align*}
	if and only if $m_{t_0 B} = 2K$.  
\end{corollary}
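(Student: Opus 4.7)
The plan is to reduce the claim to an already-established optimization by exploiting the unbiasedness of GBS-I. Since $\mathcal{E}^{\textnormal{GBS-I}}_n$ is unbiased,
$$\mathbb{E}\bigl[\,|\mathcal{E}^{\textnormal{GBS-I}}_n - \mu_{\Haf^2}|^2\,\bigr] \;=\; \textnormal{Var}\bigl(\mathcal{E}^{\textnormal{GBS-I}}_n\bigr) \;=\; \tfrac{1}{n}\,V^{\textnormal{GBS-I}}_{\Haf^2}.$$
I read the $\mu_{\Haf}$ appearing in the corollary statement as $\mu_{\Haf^2}$; if it were taken literally, one would simply pick up the extra $t$-independent constant $(\mu_{\Haf^2}-\mu_{\Haf})^2$, which is inert under the minimization.

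Next I would isolate the $t$-dependence of $V^{\textnormal{GBS-I}}_{\Haf^2}$. Using $\Haf((tB)_J) = t^{K}\Haf(B_J)$ for $|J|=2K$ together with the explicit form of $P_{tB}$, a direct bookkeeping calculation shows that the variance rearranges to
$$V^{\textnormal{GBS-I}}_{\Haf^2} \;=\; \frac{t^{-2K}}{d_t}\,S \;-\; \mu_{\Haf^2}^2, \qquad S := \sum_{|J|=2K} a_J^2\, J!\,\Haf(B_J)^2 \geq 0,$$
where both $S$ and $\mu_{\Haf^2}^2$ are independent of $t$. In the non-degenerate case $S>0$ (if $S=0$ the variance is constant in $t$ and the claim is vacuous), minimizing the mean-squared error over $t \in (0,\lambda_1^{-1})$ is therefore equivalent to minimizing the single scalar function
$$\gamma(t) \;:=\; \log \frac{t^{-2K}}{d_t} \;=\; -2K\log t \,-\, \tfrac{1}{2}\sum_{n=1}^N \log\bigl(1 - t^2\lambda_n^2\bigr).$$

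This $\gamma(t)$ is precisely the objective function already analyzed in the proof of Proposition \ref{coro:optimalt}, so the remainder is a direct appeal: the critical-point condition $\gamma'(t_0)=0$ rearranges to $\sum_{n}\frac{t_0^2\lambda_n^2}{1-t_0^2\lambda_n^2}=2K$, i.e.\ $m_{t_0 B}=2K$, and the strict monotonicity of $t \mapsto m_{tB}$ on $(0,\lambda_1^{-1})$ gives $\gamma'(t)<0$ for $t<t_0$ and $\gamma'(t)>0$ for $t>t_0$. Hence $t_0$ is the unique minimizer on $(0, \lambda_1^{-1})$ if and only if $m_{t_0 B} = 2K$, yielding both implications.

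The only step really needing care is the bookkeeping in the second paragraph — verifying that the entire $t$-dependence of $V^{\textnormal{GBS-I}}_{\Haf^2}$ collapses to the same $t^{-2K}/d_t$ prefactor that governed Proposition \ref{coro:optimalt}. Once that reduction is in place, no new analytic technique (no saddle-point analysis, no contour integration, no probabilistic inequality) is required, since the calculus part of the argument is already done in Proposition \ref{coro:optimalt}.
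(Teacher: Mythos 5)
Your proof is correct and is essentially the argument the paper intends: the paper omits this proof, citing its similarity to Proposition \ref{coro:optimalt}, and your reduction---unbiasedness turns the mean-squared error into $\tfrac{1}{n}\bigl(\tfrac{t^{-2K}}{d_t}S-\mu_{\Haf^2}^2\bigr)$ with $S=\sum_{|J|=2K}a_J^2 J!\Haf(B_J)^2$ independent of $t$, so minimizing over $t$ is minimizing the same $\gamma(t)=\log\tfrac{t^{-2K}}{d_t}$ already analyzed there---is exactly that similarity made explicit, including a sensible handling of the $\mu_{\Haf}$ versus $\mu_{\Haf^2}$ slip in the statement. Your bookkeeping also quietly corrects the exponent in the paper's displayed $V^{\textnormal{GBS-I}}_{\Haf^2}$ (written with $t^{-4K}$ against unscaled hafnians): in terms of $\Haf(B_J)^2$ the prefactor is $t^{-2K}/d_t$, which is precisely what makes the minimizer satisfy $m_{t_0B}=2K$ rather than $4K$.
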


\begin{proof}
	The proof is omitted due to similarity to Corollary \ref{coro:optimalt}.
\end{proof}

\subsection{Comparison with the plain Monte Carlo}
We use a plain Monte Carlo (MC) estimator as a baseline for comparison. The plain MC estimator solves the approximation problems by taking the average of $f$ evaluated at $n$ random samples from the multivariate Gaussian distribution $g$. 
For $\mathcal{I}^{\times}_{\Haf, K}(\epsilon, \delta)$, the functions $f$ and $g$ are defined as in \eqref{eq:I}. For $\mathcal{I}^{\times}_{\Haf^2, K}(\epsilon, \delta)$, the functions  $f$ and $g$ are defined as in \eqref{eq:I2}. 
Let 
\begin{align*}
	\mathcal{E}^{\textnormal{MC}}_n = \sum_{i=1}^n \frac{1}{n} f(X_i)
\end{align*}
with $X_1, X_2, \dots, X_n$ being i.i.d samples with probability density $g$. Again, by WLLN, the plain MC estimator solves $\mathcal{I}^\times_{\Haf}(\epsilon, \delta)$ or  $\mathcal{I}^\times_{\Haf^2}(\epsilon, \delta)$ for large $n$. Similarly, we can derive the guaranteed sample size using Chebyshev's inequality as Theorem \ref{thrm:GBS-Icov}. 
For $\mathcal{I}^\times_{\Haf}(\epsilon, \delta)$, we have
\begin{align*}
	n_{\Haf}^{\textnormal{MC}} = \frac{V^{\textnormal{MC}}_{\Haf} }{\delta \epsilon^2 \vert \mu_{\Haf} \vert^2}, ~~~~~ V^{\textnormal{MC}}_{\Haf} = \sum_{\substack{|J|, |J'| = 2K}}  a_J a_{J'} \Haf(B_{J+J'}) - \vert \mu_{\Haf} \vert^2.
\end{align*}
For $\mathcal{I}^\times_{\Haf^2}(\epsilon, \delta)$, we have
\begin{align*}
	n_{\Haf^2}^{\textnormal{MC}} = \frac{V^{\textnormal{MC}}_{\Haf^2} }{\delta \epsilon^2 \vert \mu_{\Haf^2} \vert^2}, ~~~~~ V^{\textnormal{MC}}_{\Haf^2} = \sum_{\substack{|J|, |J'| = 2K}} a_J a_{J'} \Haf(B_{J + J'})^2 - \vert \mu_{\Haf^2} \vert^2.
\end{align*}
To compare the efficiency of our GBS algorithms and MC estimator, we look at their corresponding guaranteed sample size. 
For the rest of the paper, we describe how to estimate the percentage of the Gaussian expectation problems where $n_{\Haf}^{\textnormal{GBS-P}} \leq n_{\Haf}^{\textnormal{MC}}$ for $\mathcal{I}^\times_{\Haf}(\epsilon, \delta)$ and $n_{\Haf^2}^{\textnormal{GBS-I}} \leq n_{\Haf^2}^{\textnormal{MC}}$ for $\mathcal{I}^\times_{\Haf^2}(\epsilon, \delta)$.

	\section{Percentage of GBS advantage}
\label{sec:percentage}
\subsection{Theoretical formulation}
\label{subsec:space}
We start by comparing GBS-P and MC. For an arbitrary $N$ and $K$, let us denote $\sigma(N,K) = \# \left\{\vert I \vert = 2K \vert \right\} $ to be the size of the coefficient set. Let $\mathbb{S}^{\sigma(N,K) -1}$ be the unit sphere such that
\begin{align*}
	\mathbb{S}^{\sigma(N,K) -1} = \left \{ (a_I)_{\vert I \vert = 2K} : \sum_{\vert I \vert = 2K} a^2_I = 1 \right \}.
\end{align*} 
Let $M(N)$ be the space of real, symmetric $N \times N$ matrices whose eigenvalues are bounded strictly between 0 and 1,
\begin{align*}
	M(N) = \left \{ M \in \textnormal{Sym}(N) : ~\text{spec}(M) \in (0, 1) \right \}.
\end{align*}
The problem space $\mathcal{P}_{\Haf}(N, K)$ is thus
\begin{align*}
	\mathcal{P}_{\Haf}(N, K) = \mathbb{S}^{\sigma(N,K) -1} \times M(N).
\end{align*}

To assign a volume form on $\mathcal{P}_{\Haf}$, we first choose a normalized rotational invariant measure $dS$ on $\mathbb{S}^{\sigma(N,K) -1}$. Recall that the Lebesgue measure on $\text{Sym}(N)$ is given on the matrix entries as
\begin{align*}
	dM = \prod_{i = 1}^N dM_{ii} \prod_{i = 1}^{N-1}  \prod_{j = i+1}^{N}  d M_{ij}.
\end{align*}
A real symmetric matrix admits a diagonalization $M = U^{-1} \Lambda U$, where $U \in SO(N)$ is an orthonormal matrix and $\Lambda = \text{diag}(\lambda_1, \dots, \lambda_N)$ consists of all the eigenvalues of $M$. 
The Lebesgue measure is invariant under conjugation by orthogonal matrices. Hence, via a change of variable, we obtain
\begin{align}
	dM = \vert \Delta(\lambda_1, \dots, \lambda_N)\vert \prod_{i}^N d\lambda_i dU.
	\label{eq:lebesgue}
\end{align}
Here, $\Delta(\lambda_1, \dots, \lambda_N)$ is the Vandermonde determinant such that $$\Delta(\lambda_1, \dots, \lambda_N) = \prod_{1 \leq i < j \leq N} (\lambda_i - \lambda_j)$$
and $dU$ is the normalized Haar measure on $SO(N)$. We scale $dM$ so as to make the measure integrate to 1. In other words, 
\begin{align*}
	 \int_{M(N)} dM = \int_{SO(N)} \int_{[0,1]^N} c_N \vert \Delta(\lambda_1, \dots, \lambda_N)\vert \prod_{i}^N d\lambda_i dU  = 1,
\end{align*}
which yields
\begin{align*}
	 c_N = \left( \int_{[0,1]^N} \vert \Delta(\lambda_1, \dots, \lambda_N)\vert \prod_{i}^N d\lambda_i \right)^{-1}.
\end{align*}
The volume form on $\mathcal{P}_{\Haf}$ is thus the product $dS \times dM$, which we denote for convenience as $d\mu$. 

Now we compute the percentage of GBS-P outperforms MC over the problem space $\mathcal{P}_{\Haf}$ as the following integral.
\begin{align*}
	P^{\textnormal{GBS-P}}_{\Haf} & = \int_{\mathcal{P}_{\Haf}} H(n^{\textnormal{MC}}_{\Haf} - n^{\textnormal{GBS-P}}_{\Haf}) \, d\mu
\end{align*}
Here, $H(x)$ denote the heavyside step function such that $H(x) = 1$ if $x \geq 0$ and 0 otherwise. We think of $n^{\textnormal{MC}}_{\Haf}$ and $n^{\textnormal{GBS-P}}_{\Haf}$ as functions of $\eta$ where $\eta \in \mathcal{P}_{\Haf}$. 
Since
\begin{align*}
	H(n^{\textnormal{MC}}_{\Haf}(\eta) - n^{\textnormal{GBS-P}}_{\Haf}(\eta)) = H(V^{\textnormal{MC}}_{\Haf}(\eta) - V^{\textnormal{GBS-P}}_{\Haf}(\eta)),
\end{align*}
$P^{\textnormal{GBS-P}}_{\Haf}$ is 
independent of the choice of $\epsilon$ and $\delta$. 

Similarly, for $\mathcal{I}^\times_{\Haf^2}(\epsilon, \delta)$, we find that the problem space $\mathcal{P}_{\Haf^2}$ is the same as $\mathcal{P}_{\Haf}$. Therefore, we compute the proportion of $\mathcal{P}_{\Haf^2}$ where GBS-I outperforms MC to be
\begin{align*}
	P^{\textnormal{GBS-I}}_{\Haf^2}
	& = \int_{\mathcal{P}_{\Haf^2}} H(n^{\textnormal{MC}}_{\Haf^2} - n^{\textnormal{GBS-I}}_{\Haf^2}) \, d\mu,
\end{align*}
and $P^{\textnormal{GBS-I}}_{\Haf^2}$ is also independent of $\epsilon$ and $\delta$.

\subsection{Numerical methods}
\label{subsec:emp}
We describe a simple numerical method for approximating $P^{\textnormal{GBS-P}}_{\Haf}$ and $P^{\textnormal{GBS-I}}_{\Haf^2}$. 
We begin with describing the method for $P^{\textnormal{GBS-P}}_{\Haf}$, noting that it can be easily adapted for $P^{\textnormal{GBS-I}}_{\Haf^2}$. 
Let $\eta \in \mathcal{P}_{\Haf}$  be expressed by 
\begin{align*}
	\eta = (a, B), ~~~~ a = (a_I)_{\vert I \vert = 2K} \in \mathbb{S}^{\sigma(N,K) -1},  ~~~~ B \in M(N)
\end{align*}
Let $ B= U^{-1} \Lambda U$ be the eigendecomposition of $B$, where $U \in SO(N)$ is an orthonormal matrix and $\Lambda = \text{diag}(\lambda_1, \dots, \lambda_N)$ is the diagonal matrix of all the eigenvalues of $B$. Then, we obtain
\begin{align}
	P^{\textnormal{GBS-P}}_{\Haf}
	& = c_N \int_{SO(N)} \int_{[0,1]^N} \int_{\mathbb{S}^{\sigma(N,K) -1}} 	\vert \Delta(\lambda_1, \dots, \lambda_N)\vert  \notag \\
	& \hspace{0.3in} H(V^{\textnormal{MC}}_{\Haf}(a, U^{-1}\Lambda U) - V^{\textnormal{GBS-P}}_{\Haf}(a, U^{-1}\Lambda U))
	\, dS \prod_{i}^N d\lambda_i dU. \label{eq:int_GBSP_final}
\end{align}

To compute this, we use the following MC method\footnote{It is a challenge to compute approximation of $P^{\textnormal{GBS-P}}_{\Haf}$ and $P^{\textnormal{GBS-I}}_{\Haf^2}$ for increasing problem sizes. We will explore the use of large-scale HPC and GBS to compute these terms for higher $(K,N)$ in future work.}. We first draw i.i.d samples $U^{(1)}, U^{(2)}, \dots, U^{(n_1)}$ with respect to the Haar measure on $SO(N)$. Let $\Lambda^{(l)} = \textnormal{diag}(\lambda^{(l)}_1, \dots, \lambda^{(l)}_N)$ be a diagonal matrix where each $\lambda^{(l)}_i$ is sampled i.i.d from the uniform distribution on $(0, 1)$. Then, construct $B^{(l)}  = \left(U^{(l)}\right)^{-1} \Lambda U^{(l)}$ and evaluate the vandemonde determinant $\vert \Delta(\lambda^{(l)}_1, \dots, \lambda^{(l)}_N)\vert$. Further, we use grid search to find the parameter $t^{(l)}$ such that $m_{ t^{(l)} B^{(l)} } = 2K$. Next, we compute $\Haf(B_J)$ and $\Haf(B_{J + J'})$ for all $J, J'$ such that $\vert J \vert = \vert J' \vert = 2K$. Here, the matrix hafnian is computed using the algorithm described as in \cite{bulmer2022boundary, gupt2019walrus}.
Then, for each $ B^{(l)}$, we draw $n_2$ i.i.d samples $a^{(l, 1)}, a^{(l,2)}, \dots, a^{(l, n_2)}$ from the uniform distribution on $\mathbb{S}^{\sigma(N,K) -1}$. Finally, we evaluate $V^{\textnormal{MC}}_{\Haf}(a^{(l,m)}, B^{(l)})$ and $V^{\textnormal{GBS-P}}_{\Haf}(a^{(l,m)}, B^{(l)})$.
The estimator for $P^{\textnormal{GBS-P}}_{\Haf}$ is thus defined as 
\begin{align}
	\mathcal{E}_{n_1, n_2} & = \frac{1}{n_1n_2}\sum_{l = 1}^{n_1} \sum_{m=1}^{n_2} \vert \Delta(\lambda^{(l)}_1, \dots, \lambda^{(l)}_N)\vert \notag  \\
	&\quad \quad H\left(  V^{\textnormal{MC}}_{\Haf}(a^{(l,m)}, B^{(l)}) - V^{\textnormal{GBS-P}}_{\Haf}(a^{(l,m)}, B^{(l)}) \right). \label{eq:volume_est}
\end{align}
It is clear that $\mathcal{E}_{n_1, n_2}$ is an ubiased estimator of  $P^{\textnormal{GBS-P}}_{\Haf}$. By WLLN,  $\mathcal{E}_{n_1, n_2}$ yields a good approximation for large $n_1$ and $n_2$. 

\begin{algorithm}[ht]
	\caption{Numerical Approximation to $P^{\textnormal{GBS-P}}_{\Haf}$}
	\begin{algorithmic}[1]
		\STATE Initialize $E \leftarrow 0$
		\FOR{each $l = 1, 2, \dots, n_1$}
		\STATE Sample $U^{(l)}$ i.i.d. from the Haar measure on $SO(N)$
		\STATE Sample $\lambda_1^{(l)}, \dots, \lambda_N^{(l)}$ i.i.d. from the uniform distribution on $(0, 1)$
		\STATE Construct $\Lambda^{(l)} \leftarrow \text{diag}(\lambda_1^{(l)}, \dots, \lambda_N^{(l)})$
		\STATE Compute $v \leftarrow \vert \Delta(\lambda_1^{(l)}, \dots, \lambda_N^{(l)}) \vert$
		\STATE Compute $B^{(l)} \leftarrow \left(U^{(l)}\right)^{-1} \Lambda^{(l)} U^{(l)}$
		\STATE Perform grid search to find $t^{(l)}$ such that $m_{t^{(l)} B^{(l)}} = 2K$
		\STATE Compute $\Haf(B_J)$ and $\Haf(B_{J+J'})$ for all $J, J'$ where $\vert J \vert = \vert J' \vert = 2K$
		\STATE Draw i.i.d. samples $a^{(l, 1)}, a^{(l, 2)}, \dots, a^{(l, n_2)}$ uniformly from $\mathbb{S}^{\sigma(N,K) - 1}$
		\FOR{each $m = 1, 2, \dots, n_2$}
		\STATE Compute $\tau \leftarrow H \left(V^{\text{MC}}_{\Haf}(a^{(l,m)}, B^{(l)}) - V^{\text{GBS-P}}_{\Haf}(a^{(l,m)}, B^{(l)}) \right)$
		\STATE Update $E \leftarrow E + v \tau$
		\ENDFOR
		\ENDFOR
		\STATE Return $\frac{E}{n_1 n_2}$
	\end{algorithmic}
\end{algorithm}

Figure \ref{fig:heatmap_GBSP} shows numerical approximations of $P^{\textnormal{GBS-P}}_{\Haf}$ for various $N$ and $K$, computed with $n_1 = 30$ and $n_2 = 100$. Convergence plots in Appendix C confirm that our numerical approximations stabilize for the chosen $n_1$ and $n_2$ and $P^{\textnormal{GBS-P}}_{\Haf}$ indeed has converged.

For $P^{\textnormal{GBS-I}}_{\Haf^2}$, we use an almost identical approach. Notice that
\begin{align}
	P^{\textnormal{GBS-I}}_{\Haf^2}
	& = c_N \int_{SO(N)} \int_{[0,1]^N} \int_{\mathbb{S}^{\sigma(N,K) -1}} 	\vert \Delta(\lambda_1, \dots, \lambda_N)\vert  \notag \\
	& \hspace{0.3in} H(V^{\textnormal{MC}}_{\Haf^2}(a, U^{-1}\Lambda U) - V^{\textnormal{GBS-I}}_{\Haf^2}(a, U^{-1}\Lambda U))
	\, dS \prod_{i}^N d\lambda_i dU. \label{eq:int_GBSI_final}
\end{align}
Then, the estimator is defined similarly to \eqref{eq:volume_est}, replacing the term inside the step function with $V^{\textnormal{MC}}_{\Haf^2}$ and $V^{\textnormal{GBS-I}}_{\Haf^2}$.

	\newpage
	\appendix
	\section{Code accessibility}
Python scripts used to generate the numerical results are available from the GitHub repository: \url{https://github.com/sshanshans/GBSPE}.

\section{Technical lemmas}
\begin{lemma}
	\label{lem:kslice}
	Suppose $e_k$ solves $\mathcal{I}^{\times}_{\Haf, k}(\frac{\epsilon}{K}\frac{\vert \mu_{\Haf}\vert}{\vert \mu_{\Haf, k}\vert}, 1 - \frac{1-\delta}{K})$ where $k = 1, \dots, K$. Then, $e = \sum_{k=0}^K e_k$ solves $\mathcal{I}^{\times}_{\Haf}(\epsilon, \delta)$.
\end{lemma}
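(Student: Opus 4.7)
The plan is to decompose the joint estimate into degree-slice estimates and combine the per-slice guarantees via the triangle inequality and a De Morgan/union-bound argument.

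First, I would observe that the $k = 0$ slice is trivial: since $\mu_{\Haf, 0} = a_{(0, \dots, 0)} \Haf(B_\emptyset) = a_{(0,\dots,0)}$ is a known constant of the problem, I would simply set $e_0 = \mu_{\Haf, 0}$ exactly, so the $k = 0$ summand contributes neither to the error nor to the failure budget. This reduces the problem to the $K$ nontrivial slices $k = 1, \dots, K$, which is why the hypothesis of the lemma only asks for estimators of those slices.

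Next, on the joint success event $B := \bigcap_{k=1}^K \{|\mu_{\Haf, k} - e_k| \leq \epsilon_k |\mu_{\Haf, k}|\}$, where $\epsilon_k := \tfrac{\epsilon}{K}\tfrac{|\mu_{\Haf}|}{|\mu_{\Haf, k}|}$, the triangle inequality gives
\begin{equation*}
|\mu_{\Haf} - e| \;\leq\; \sum_{k=1}^K |\mu_{\Haf, k} - e_k| \;\leq\; \sum_{k=1}^K \epsilon_k |\mu_{\Haf, k}| \;=\; \sum_{k=1}^K \tfrac{\epsilon}{K}|\mu_{\Haf}| \;=\; \epsilon |\mu_{\Haf}|.
\end{equation*}
The specific form of $\epsilon_k$ is chosen precisely so that the inflation factor $|\mu_{\Haf}|/|\mu_{\Haf, k}|$ cancels the weight in each summand, converting relative errors into additive errors that telescope to $\epsilon|\mu_{\Haf}|$.

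Finally, I would apply De Morgan's law to identify the complement $B^c$ with $\bigcup_{k=1}^K \{|\mu_{\Haf, k} - e_k| > \epsilon_k |\mu_{\Haf, k}|\}$ and invoke the union bound:
\begin{equation*}
P(|\mu_{\Haf} - e| > \epsilon|\mu_{\Haf}|) \;\leq\; P(B^c) \;\leq\; \sum_{k=1}^K P\bigl(|\mu_{\Haf, k} - e_k| > \epsilon_k |\mu_{\Haf, k}|\bigr) \;\leq\; \sum_{k=1}^K \delta_k,
\end{equation*}
where $\delta_k$ is the per-slice failure parameter supplied by the hypothesis. The proof closes by checking that the chosen allocation of the failure budget across the $K$ slices satisfies $\sum_{k=1}^K \delta_k \leq \delta$. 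I do not anticipate a substantive obstacle: the whole argument is bookkeeping, and the only point of design is matching the per-slice $(\epsilon_k, \delta_k)$ so that the error telescope and the union bound both close up to $(\epsilon, \delta)$.
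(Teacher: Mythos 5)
Your decomposition, the choice of $\epsilon_k = \frac{\epsilon}{K}\frac{|\mu_{\Haf}|}{|\mu_{\Haf,k}|}$ so that the per-slice errors telescope to $\epsilon|\mu_{\Haf}|$, and the De Morgan/union-bound step are exactly the paper's argument (and your explicit choice $e_0 = \mu_{\Haf,0}$ for the $k=0$ slice is a detail the paper leaves implicit). The problem is in the final bookkeeping, which you wave off as routine: it does not close as you state it. The hypothesis gives each $e_k$ solving $\mathcal{I}^{\times}_{\Haf,k}$ with second parameter $1-\frac{1-\delta}{K}$. If you read that parameter as a per-slice failure bound $\delta_k$ (as the definition in \eqref{eq:adderror} would suggest), then $\delta_k = 1-\frac{1-\delta}{K}$ and $\sum_{k=1}^K \delta_k = K-1+\delta$, which is nowhere near $\leq \delta$; and even if you extract the intended per-slice failure budget $\frac{1-\delta}{K}$, the union bound only yields total failure probability at most $1-\delta$, which is still not $\leq \delta$ in general. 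So the inequality ``$\sum_k \delta_k \leq \delta$'' that you claim remains to be checked is simply the wrong target, and attempting to verify it with the lemma's actual parameters would fail.

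The missing idea is the parameter convention the lemma is using: here $\delta$ (and $1-\frac{1-\delta}{K}$) are success-probability thresholds, consistent with the paper's description of $\delta$ as the success probability. Concretely, the hypothesis means $P\bigl(|\mu_{\Haf,k}-e_k| < \tfrac{\epsilon}{K}|\mu_{\Haf}|\bigr) > 1-\tfrac{1-\delta}{K}$, so each slice fails with probability at most $\tfrac{1-\delta}{K}$; De Morgan plus the union bound give total failure probability at most $K\cdot\tfrac{1-\delta}{K} = 1-\delta$, hence $P\bigl(|\mu_{\Haf}-e| < \epsilon|\mu_{\Haf}|\bigr) \geq \delta$, which is the sense in which $e$ solves $\mathcal{I}^{\times}_{\Haf}(\epsilon,\delta)$ in this lemma. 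Once you adopt that reading, your triangle-inequality and union-bound steps go through verbatim; without it, your closing step is arithmetic that cannot be satisfied.
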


\begin{proof}
	Let $A_k$ denote the event $\vert \mu_{\Haf, k} - e_k \vert < \frac{\epsilon}{K}{\vert \mu_{\Haf}\vert}$ and $B_k = \overline{A_k}$. Let $\tau = \frac{1-\delta}{K}$.
	Then, $e_k$ solves $\mathcal{I}^{\times}_{\Haf, k}(\frac{\epsilon}{K}\frac{\vert \mu_{\Haf}\vert}{\vert \mu_{\Haf, k}\vert}, 1 - \frac{1-\delta}{K})$ implies that $P(A_k) > 1-\tau$ and $P(B_k) \leq \tau$. Let $Q$ denote the event $\vert \mu_{\Haf} - e \vert < \epsilon \vert \mu_{\Haf}\vert$. Note that $A_1 \cap A_2 \dots \cap A_K$ implies $Q$. To see this, we note
	\begin{align*}
		\vert \mu_{\Haf} - e \vert = \lvert \sum_{k =0}^K \mu_{\Haf, k} - e_k \rvert \leq \sum_{k = 0}^K \lvert \mu_{\Haf, k} - e_k \rvert \leq \epsilon \vert \mu_{\Haf}\vert,
	\end{align*}
	which implies $Q$ is true. Therefore, $P(A_1 \cap A_2 \cap \dots A_K) < P(Q)$. 
	
	We further compute from  De Morgan's Laws and the union bound that
	\begin{align*}
		1 - P(A_1 \cap A_2 \cap \dots \cap A_K) & = P(B_1 \cup B_2 \dots \cup B_K) \leq \sum_{k=1}^K P(B_k) = K\tau. 
	\end{align*}
	Therefore,
	$$ P(A_1 \cap A_2 \cap \dots A_K)  \geq 1 - K \tau > \delta, $$
	which implies that $P(Q) > \delta$ and $e $ solves $\mathcal{I}^{+}_{\Haf}(\epsilon, \delta)$.
\end{proof}

\begin{lemma}
	\label{lem:meanphoton}
	If $\lambda_1, \dots, \lambda_N$ are eigenvalues of $B$, then
	\begin{align*}
		m_B =\sum_{n = 1}^N \frac{\lambda_n^2}{1- \lambda_n^2}.
	\end{align*}
\end{lemma}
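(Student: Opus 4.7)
The plan is to compute $m_B$ via a generating-function argument, taking as input the normalization of the GBS distribution that appears already in the definition of $P_{zB}$. For any $z \in (0, \lambda_1^{-1})$, the fact that $P_{zB}$ in \eqref{eq:gbsfirstappear2} sums to $1$ gives the identity
\begin{align*}
Z(z) \;:=\; \sum_I \frac{\Haf((zB)_I)^2}{I!} \;=\; \frac{1}{d_z} \;=\; \prod_{n=1}^N \bigl(1-z^2\lambda_n^2\bigr)^{-1/2}.
\end{align*}
Since $(zB)_I = zB_I$ is $|I|\times|I|$, the scaling rule for the hafnian of a $2k\times 2k$ matrix ($\Haf(\alpha M)=\alpha^k \Haf(M)$) gives $\Haf((zB)_I) = z^{|I|/2}\Haf(B_I)$, so in fact
\begin{align*}
Z(z) \;=\; \sum_I z^{|I|}\,\frac{\Haf(B_I)^2}{I!}.
\end{align*}

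I would then differentiate term-by-term in $z$ to obtain $z\,dZ/dz = \sum_I |I|\,z^{|I|}\,\Haf(B_I)^2/I!$. Evaluating at $z=1$ and multiplying by $d = 1/Z(1)$ (valid since $1 \in (0,\lambda_1^{-1})$) yields
\begin{align*}
m_B \;=\; d\sum_I |I|\,\frac{\Haf(B_I)^2}{I!} \;=\; z\,\frac{d \log Z(z)}{dz}\bigg|_{z=1}.
\end{align*}
Combining this with $\log Z(z) = -\tfrac{1}{2}\sum_{n=1}^N \log(1-z^2\lambda_n^2)$ and the elementary computation $z\,d\log Z/dz = \sum_n z^2\lambda_n^2/(1-z^2\lambda_n^2)$, evaluation at $z=1$ produces exactly \eqref{eq:meanphoton}.

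The only step that needs a bit of care is the interchange of summation and $d/dz$ at $z=1$, which is the main (mild) obstacle. This is routine: all terms in the series for $Z(z)$ are non-negative, $Z$ is finite on the open interval $(0,\lambda_1^{-1})$, and $1$ is an interior point, so one gets uniform convergence (hence termwise differentiability) on a compact neighborhood of $1$ by a standard monotone/Abel-type argument. With this justification in place, the entire proof reduces to the elementary logarithmic differentiation sketched above.
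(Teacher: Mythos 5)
Your proof is correct, but it takes a genuinely different route from the paper. The paper's argument is quantum-optical: it invokes the explicit covariance matrix of the Gaussian state $\hat{\rho}_B$ from Theorem A.1 of \cite{shan2024companion}, writes the squeezing parameters as $r_n = \tanh^{-1}(\lambda_n)$, and then sums the per-mode mean photon numbers $\textnormal{Tr}(\Sigma_1) - \tfrac{N}{2} = \sum_n \cosh^2(r_n) - N$ using Eq.~(27) of \cite{vallone2019means}. You instead exploit only the normalization identity $\sum_I \Haf((zB)_I)^2/I! = \prod_n (1-z^2\lambda_n^2)^{-1/2}$ for $z \in (0,\lambda_1^{-1})$ (the same identity the paper itself uses in Lemma \ref{lem:pJ}), together with the hafnian scaling $\Haf((zB)_I)^2 = z^{|I|}\Haf(B_I)^2$, and obtain $m_B$ as the logarithmic derivative $z\,\dv{z}\log Z(z)$ evaluated at $z=1$; since $Z$ is a power series in $z$ with non-negative coefficients and radius of convergence at least $\lambda_1^{-1} > 1$, the termwise differentiation at the interior point $z=1$ is justified exactly as you say, and the elementary computation yields \eqref{eq:meanphoton}. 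What your approach buys is self-containedness: it needs no input about the underlying Gaussian state or external formulas for per-mode photon numbers, only the stated form of the GBS distribution, and as a bonus it gives the formula \eqref{eq:meanphoton2} for $m_{tB}$ at every $t$ simultaneously (the paper's Proposition \ref{lem:meanphoton_t} then follows by inspection). What the paper's approach buys is physical transparency — it identifies the photon-number content mode by mode via the squeezing parameters, which connects directly to how the device is programmed — at the cost of relying on results imported from the companion paper and \cite{vallone2019means}.
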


\begin{proof}
	It follows from the proof of Theorem A.1 in \cite{shan2024companion} that
	the covariance matrix $\Sigma$ of $\hat{\rho}_B$ must be of the form 
	$$\Sigma = \begin{bmatrix}
		\Sigma_1 & \Sigma_2 \\
		\Sigma_2 & \Sigma_1
	\end{bmatrix}, $$
	where
	\begin{align*}
		\Sigma_1 &= U \left( \bigoplus_{n=1}^N \cosh^2(r_n) - \frac{1}{2} \right) U^\intercal \\
		\Sigma_2 &= U \left( \bigoplus_{n=1}^N \cosh(r_n)\sinh(r_n) - \frac{1}{2} \right) U^\intercal.
	\end{align*}
	In these expressions $U$ and $r_n$'s are from the eigendecomposition of $B$:
	\begin{align*}
		B = U \Lambda U^\intercal, ~~~~ \Lambda = \text{diag}(\lambda_1, \dots, \lambda_N), ~~~~ U U^\intercal = \mathbb{I}_N
	\end{align*}
	and $$r_n = \tanh^{-1}(\lambda_n).$$
	From Eq. (27) of \cite{vallone2019means}, the mean photon number $m_i$ in node $i$ of  $\hat{\rho}$ is given by
	\begin{align*}
		m_i = U_{ii} - \frac{1}{2}.
	\end{align*}
	Thus,  
	\begin{align*}
		m_B & = \sum_{i = 1}^N m_i = \text{Tr}(\Sigma_1) - \frac{N}{2} \\
		& = \sum_{n = 1}^N \cosh^2(r_n) - N \\
		& = \sum_{n = 1}^N \frac{1}{1- \tanh^2(r_n)} - N \\
		& = \sum_{n = 1}^N \frac{1}{1- \lambda_n^2} - N \\
		& = \sum_{n = 1}^N \frac{\lambda_n^2}{1- \lambda_n^2}
	\end{align*}.
\end{proof}

Let $p_J = P_B(J) = \frac{d}{J!} \Haf(B_J)^2$, and let 
\begin{align*}
	p_K = \sum_{\vert J \vert = 2K} p_J.
\end{align*}
In the next lemma, we give an explicit expression of $p_K$ and show that $p_K$ is bounded above by $\frac{1}{\sqrt{2\pi}}$. 
\begin{lemma}
	\label{lem:pJ}
	If $\lambda_1, \dots, \lambda_N$ are eigenvalues of $B$ and $m_B = 2K$, then
	\begin{align}
		p_K = \left(\prod_{l=1}^N {\sqrt{1 - \lambda_l^2}} \right) \sum_{k_1 + k_2 \dots + k_N = K } \prod_{l = 1}^N \left( \frac{ (2k_l)! }{4^{k_l}  (k_l!)^2 } (\lambda_l^2)^{k_l} \right) \label{eq:pK}
	\end{align} 
	and
	\begin{align}
		p_K < \frac{1}{\sqrt{2\pi}}. \label{eq:pK2}
	\end{align}
\end{lemma}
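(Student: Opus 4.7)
For the identity \eqref{eq:pK}, the plan is to leverage the normalization of the GBS distribution on the scaled covariance $zB$. The relation $\sum_I P_{zB}(I)=1$, combined with the scaling $\Haf((zB)_I)=z^{|I|/2}\Haf(B_I)$, rearranges into the generating-function statement
$$\sum_{I\in\mathbb{N}^N}\frac{z^{|I|}}{I!}\Haf(B_I)^2 \;=\; \prod_{n=1}^N(1-z^2\lambda_n^2)^{-1/2}, \qquad |z|<\lambda_1^{-1}.$$
Grouping by the photon number $|I|=2K$ (the odd-$|I|$ contributions vanish identically) and multiplying by $d$ gives the probability generating function
$$\sum_{K\geq 0} p_K w^K \;=\; \prod_{n=1}^N\sqrt{\tfrac{1-\lambda_n^2}{1-w\lambda_n^2}}, \qquad w=z^2.$$
Expanding each factor via the binomial series $(1-u)^{-1/2}=\sum_k \tfrac{(2k)!}{4^k (k!)^2} u^k$ and reading off the coefficient of $w^K$ reproduces \eqref{eq:pK} directly.

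For the bound \eqref{eq:pK2}, the idea is to recast the single-mode factors as Gaussian moments. Using $\mathbb{E}[e^{aX^2/2}]=(1-a)^{-1/2}$ for $X\sim N(0,1)$ and $a<1$, the probability generating function becomes
$$\sum_{K\geq 0} p_K w^K \;=\; d\cdot\mathbb{E}\bigl[e^{wY/2}\bigr], \qquad Y=\sum_{n=1}^N\lambda_n^2 X_n^2,$$
with $X_n$ independent standard normals. A Taylor expansion in $w$ and matching of coefficients yields the central representation
$$p_K \;=\; \frac{d}{2^K K!}\mathbb{E}[Y^K] \;=\; \mathbb{E}\!\left[\frac{(Y/2)^K e^{-Y/2}}{K!}\cdot d e^{Y/2}\right].$$
The first factor inside the expectation is the Poisson-$(Y/2)$ probability mass at $K$, while the second factor is non-negative with $\mathbb{E}[de^{Y/2}]=d\prod_n(1-\lambda_n^2)^{-1/2}=1$, so it acts as a probability weight.

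The bound \eqref{eq:pK2} then drops out by a supremum estimate. For any integer $K\geq 1$, the map $y\mapsto (y/2)^K e^{-y/2}/K!$ on $[0,\infty)$ attains its maximum $K^K e^{-K}/K!$ at $y=2K$. Since the $\lambda_n$ are strictly positive, $Y$ is continuously distributed and $P(Y=2K)=0$, so the inequality $\mathbb{E}[fg]<(\sup f)\,\mathbb{E}[g]$ is strict; combining with Robbins' sharp Stirling lower bound $K!>\sqrt{2\pi K}\,(K/e)^K$, valid for every integer $K\geq 1$, one obtains
$$p_K \;<\; \frac{K^K e^{-K}}{K!} \;<\; \frac{1}{\sqrt{2\pi K}} \;\leq\; \frac{1}{\sqrt{2\pi}}.$$
The hypothesis $m_B=2K$ enters only to ensure that $K$ is a positive integer; the argument otherwise works unconditionally. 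The main conceptual hurdle is spotting that the explicit formula from Part 1 is actually a Poisson mixture in disguise—once the factor $(Y/2)^K e^{-Y/2}/K!$ is isolated inside the expectation, everything else reduces to Stirling and an elementary sup bound.
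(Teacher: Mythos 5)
Your proof is correct. For the identity \eqref{eq:pK} you follow essentially the same route as the paper: both arguments rest on the normalization $\sum_I P_{zB}(I)=1$ together with the homogeneity $\Haf((zB)_I)=z^{|I|/2}\Haf(B_I)$ and the binomial series for $(1-u)^{-1/2}$; your explicit introduction of the scaling variable $w=z^2$ makes the coefficient extraction slightly cleaner than the paper's direct equating of the two expressions for $1/d$, but it is the same computation. For the bound \eqref{eq:pK2}, however, you take a genuinely different and arguably sharper path. The paper fixes the constraint surface $S=\{\lambda:\sum_n \lambda_n^2/(1-\lambda_n^2)=2K\}$, studies the stationary points of the two factors $\sqrt{f}$ and $h$ via the vector fields $D_{ij}$, evaluates at the symmetric point $\lambda_n=\sqrt{2K/(N+2K)}$ and on the boundary faces, and closes with an induction on $N$; your argument instead recognizes $p_K = \tfrac{d}{2^K K!}\,\mathbb{E}[Y^K]$ with $Y=\sum_n\lambda_n^2X_n^2$, rewrites it as a tilted expectation of the Poisson mass $\frac{(Y/2)^Ke^{-Y/2}}{K!}$ (the tilt $de^{Y/2}$ having unit mean), and bounds it by the sup $K^Ke^{-K}/K!<1/\sqrt{2\pi K}$ via Robbins' strict Stirling bound. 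This buys you a uniform bound $p_K<1/\sqrt{2\pi K}$ that does not use the constraint $m_B=2K$ at all (beyond guaranteeing $K\geq 1$), avoids the constrained optimization, the boundary case analysis, and the induction, and sidesteps the asymptotic ``$\sim$'' estimates the paper invokes at the stationary point; the only small housekeeping points are justifying the termwise expansion of $\mathbb{E}[e^{wY/2}]$ (immediate by Tonelli, since all terms are nonnegative and the series converges for $w<\lambda_1^{-2}$) and noting, as you do, that Robbins' inequality already gives strictness so the $P(Y=2K)=0$ observation is not even needed.
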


\begin{proof}
	We first prove \eqref{eq:pK}.
	From \eqref{eq:ddef}, we see that
	\begin{align*}
		\frac{1}{d} = \prod_{l=1}^N \frac{1}{\sqrt{1 - \lambda_l^2}}.
	\end{align*}
	Since
	\begin{align*}
		\frac{1}{\sqrt{1- x^2}} & = \sum_{k=0}^\infty \frac{(2k-1)!!}{2^k k!} (x^2)^k \\
				& = \sum_{k = 0}^\infty \frac{(2k)!}{2^{2k} (k!)^2} (x^2)^k, 
	\end{align*}
	we have that
	\begin{align}
		\prod_{l=1}^N \frac{1}{\sqrt{1 - \lambda_l^2}} & = \prod_{l = 1}^\infty \left( \sum_{k_l = 0}^\infty \frac{ (2k_l)! }{4^{k_l} (k_l!)^2 } (\lambda_l^2)^{k_l} \right) \notag \\
		& = \sum_{k = 0}^\infty \sum_{k_1 + k_2 \dots k_N = k } \prod_{l = 1}^N \left( \frac{ (2k_l)! }{4^{k_l}  (k_l!)^2 } (\lambda_l^2)^{k_l} \right).	 \label{eq:pp1}
	\end{align}
	Since $d$ is also the normalization factor to make the sum of the probabilities equal 1, we see that
	\begin{align}
		\frac{1}{d} = \sum_{k=0}^\infty \sum_{\vert I \vert = 2k } \frac{\Haf(B_I)^2}{I!}. \label{eq:pp2} 
	\end{align}
	Equating \eqref{eq:pp1} and \eqref{eq:pp2} yields
	\begin{align*}
		\sum_{\vert I \vert = 2k } \frac{\Haf(B_I)^2}{I!} = \sum_{k_1 + k_2 \dots + k_N = K } \prod_{l = 1}^N \left( \frac{ (2k_l)! }{4^{k_l}  (k_l!)^2  } (\lambda_l^2)^{k_l} \right).
	\end{align*}
	Therefore,
	\begin{align*}
		p_K = d  \sum_{\vert I \vert = 2K} \frac{\Haf(B_I)^2}{I!} = d  \sum_{k_1 + k_2 \dots + k_N = K } \prod_{l = 1}^N \left( \frac{ (2k_l)! }{4^{k_l}  (k_l!)^2 } (\lambda_l^2)^{k_l} \right).
	\end{align*}
	
	We now prove \eqref{eq:pK2}. Let $\mu$ be such that
	\begin{align*}
		2K = \frac{\mu}{1-\mu}.
	\end{align*}
	Then,
	\begin{align*}
		\mu = \frac{2K}{1+ 2K}.
	\end{align*}
	Let 
	\begin{align*}
		S = \left\{ (\lambda_1, \dots, \lambda_N) \in [0, \mu]^N \Big| \,  \sum_{l = 1} \frac{\lambda^2_l}{1-\lambda^2_l} = 2K \right\}. 
	\end{align*}
	In Lemma \ref{lem:boundaryterm}, we see that $\partial S \subseteq \partial\left( [0, \mu]^N  \right)$. 
	Therefore,
	\begin{align*}
		\partial S = S \cap \partial\left([0, \mu]^N \right) = \bigcup_{l = 1}^N \left(S_l^0 \cup S_l^{\mu} \right),
	\end{align*}
	where
	\begin{align*}
		S^{0}_l &= \{ (x_1, \dots, x_N) \in S \mid x_l = 0 \}, \\
		S^{\mu}_l &= \{ (x_1, \dots, x_N) \in S \mid x_l = \mu \}.
	\end{align*}
	Define
	\begin{align*}
		D_{ij} = \lambda_j (1-\lambda_i^2) \frac{\partial}{\partial \lambda_i} - \lambda_i (1-\lambda_j^2) \frac{\partial}{\partial \lambda_j},
	\end{align*}
	which is a vector field along $S$. 
	Now consider
	\begin{align*}
		f(\lambda) = f(\lambda_1, \dots, \lambda_N) = \prod_{l = 1}^N \left(1 - \lambda_l^2\right) 
	\end{align*}
	with $\lambda_l \in [0, \mu]$. 
	Then,
	\begin{align*}
		D_{ij}f(\lambda) & = -2 \lambda_i \lambda_j(1-\lambda_i^2) + 2 \lambda_j \lambda_i(1-\lambda_j^2) \\
		& = 2 \lambda_i \lambda_j(\lambda_i^2 - \lambda_j^2). 
	\end{align*}
	So, $D_{ij}f(\lambda) = 0$ for all $i, j \in \{ 1, \dots, N \}$ is equivalent to 
	\begin{align}
		\lambda_1 = \lambda_2 = \dots = \lambda_N = \lambda. \label{eq:station}
	\end{align}
	Hence, \eqref{eq:station} is a unique stationary point of $f\mid_S$. Since
	\begin{align*}
		2K = N\frac{\lambda^2}{1-\lambda^2},
	\end{align*}
	we thus have
	\begin{align*}
		\lambda = \sqrt{\frac{2K}{N + 2K}}.
	\end{align*}
	Let us also consider
	\begin{align*}
		h(\lambda) = h(\lambda_1, \dots, \lambda_N) =  \sum_{k_1 + k_2 \dots + k_N = K } \prod_{l = 1}^N \left( \frac{ (2k_l)! }{4^{k_l}  (k_l!)^2 } \lambda_l^{k_l} \right).
	\end{align*}
	Then, 
	\begin{align*}
		D_{ij}h(\lambda) = \sum_{k_1 + k_2 \dots + k_N = K } \left( \frac{\lambda_j}{\lambda_i} \left(1 - \lambda_i^2 \right) 2k_i - \frac{\lambda_i}{\lambda_j} \left(1 - \lambda_j^2 \right) 2k_j \right) \prod_{l = 1}^N \left( \frac{ (2k_l)! }{4^{k_l}  (k_l!)^2 } \lambda_l^{k_l} \right).
	\end{align*}
	Notice that if $k = (k_1, \dots, k_N)$ with $k_i = a$ and $k_j = b$ satisfies $k_1 + k_2 \dots + k_N = K$, then $	\tilde{k} = (\tilde{k}_1, \dots, \tilde{k}_N)$
	with $\tilde{k}_i = b$, $\tilde{k}_j = a$ 
	and for all other $l$, $\tilde{k}_l = k_l$, 
	must also satisfy $\tilde{k}_1 + \dots + \tilde{k}_N = K$. Since
	\begin{align*}
	\prod_{l = 1}^N \left( \frac{ (2\tilde{k}_l)! }{4^{\tilde{k}_l}  (\tilde{k}_l!)^2 } \lambda_l^{\tilde{k}_l} \right) = \prod_{l = 1}^N \left( \frac{ (2k_l)! }{4^{k_l}  (k_l!)^2 } \lambda_l^{k_l} \right),
	\end{align*}
	and since
	\begin{align*}
		& \frac{\lambda_j}{\lambda_i} \left(1 - \lambda_i^2 \right) 2a - \frac{\lambda_i}{\lambda_j} \left(1 - \lambda_j^2 \right) 2b + \frac{\lambda_j}{\lambda_i} \left(1 - \lambda_i^2 \right) 2b - \frac{\lambda_i}{\lambda_j} \left(1 - \lambda_j^2 \right) 2a \\
		= & (2a + 2b) \frac{\lambda_j^2\left(1-\lambda_i^2\right) - \lambda_i^2\left(1-\lambda_j^2\right)}{\lambda_i \lambda_j} \\
		=& (2a + 2b) \frac{\lambda_j^2 - \lambda_i^2}{\lambda_i \lambda_j},
	\end{align*}
	pairing $k$ and $\tilde{k}$ in the sum over $k_1 + k_2 \dots + k_N = K$ yields
	\begin{align*}
		D_{ij}h(\lambda) & = \sum_{\substack{0 \leq a < b \\
		a + b < K}} \sum_{\substack{k_1 + k_2 \dots + k_N = K \\ k_i = a, k_j = b}} \left( (2a + 2b) \frac{\lambda_j^2 - \lambda_i^2}{\lambda_i \lambda_j} \right) \prod_{l = 1}^N \left( \frac{ (2k_l)! }{4^{k_l}  (k_l!)^2 } \lambda_l^{k_l} \right) \\
	& + \sum_{a = 0}^K \sum_{\substack{k_1 + k_2 \dots + k_N = K \\ k_i = k_j = a}} \left( 2a \frac{\lambda_j^2 - \lambda_i^2}{\lambda_i \lambda_j} \right) \prod_{l = 1}^N \left( \frac{ (2k_l)! }{4^{k_l}  (k_l!)^2 } \lambda_l^{k_l} \right) \\
	& = \frac{\lambda_j^2 - \lambda_i^2}{\lambda_i \lambda_j} \left( \sum_{\substack{0 \leq a < b \\
			a + b < K}} \sum_{\substack{k_1 + k_2 \dots + k_N = K \\ k_i = a, k_j = b}} \left( 2a + 2b \right) \prod_{l = 1}^N \left( \frac{ (2k_l)! }{4^{k_l}  (k_l!)^2 } \lambda_l^{k_l} \right) \right. \\
	& \quad \left. + \sum_{a = 0}^K \sum_{\substack{k_1 + k_2 \dots + k_N = K \\ k_i = k_j = a}} \left( 2a \right) \prod_{l = 1}^N \left( \frac{ (2k_l)! }{4^{k_l}  (k_l!)^2 } \lambda_l^{k_l} \right)  \right). 
	\end{align*}
	Therefore,  $D_{ij}h(\lambda) = 0$ for all $i, j \in \{ 1, \dots, N \}$ is equivalent to 
	\begin{align*}
		\lambda_1 = \lambda_2 = \dots = \lambda_N = \lambda = \sqrt{\frac{2K}{N + 2K}}.
	\end{align*}
	which is the same as \eqref{eq:station}. Hence, $h\mid_S$ has a unique stationary point \eqref{eq:station}. 
	
	Since
	\begin{align*}
		p_K = \sqrt{f(\lambda)} h(\lambda),
	\end{align*}
	the maximum of $p_K$ over $S$ is bounded above by the values of $f$ and $h$ at stationary point \eqref{eq:station} or at the boundary $S_i$'s. 
	First, at $\eqref{eq:station}$, we have
	\begin{align}
		\sqrt{f(\lambda)} = \left( \frac{N}{N+ 2K}\right)^{\frac{N}{2}}, \label{eq:boundd}
	\end{align}
	and 
	\begin{align*}
		h(\lambda) &= \sum_{k_1 + k_2 \dots + k_N = K } \prod_{l = 1}^N \left( \frac{ (2k_l)! }{4^{k_l}  (k_l!)^2  } \left(\frac{2K}{N+2K} \right)^{k_l} \right). \\
		& =  \left(\frac{2K}{N+2K} \right)^K \sum_{k_1 + k_2 \dots + k_N = K } \prod_{l = 1}^N  \frac{ (2k_l)! }{4^{k_l}  (k_l!)^2  } \\
		& =  \left(\frac{2K}{N+2K} \right)^K \frac{\Gamma(K + \frac{N}{2})}{\Gamma(\frac{N}{2}) K!}.
	\end{align*}
	Therefore,
	\begin{align*}
		p_K & = \left( \frac{N}{N+ 2K}\right)^{\frac{N}{2}} \left(\frac{2K}{N+2K} \right)^K \frac{\Gamma(K + \frac{N}{2})}{\Gamma(\frac{N}{2}) K!} \\
		& \sim  \left( \frac{N}{N+ 2K}\right)^{\frac{N}{2}} \left(\frac{2K}{N+2K} \right)^K  \frac{\sqrt{2 \pi} \sqrt{K+ \frac{N}{2}} }{2 \pi \sqrt{\frac{N}{2}} \sqrt{K}} \frac{\left(K + \frac{N}{2}\right)^{\left(K + \frac{N}{2}\right)}}{K^K \left(\frac{N}{2}\right)^{\frac{N}{2}}} \\
		& = \frac{ \sqrt{K+ \frac{N}{2}} }{\sqrt{\frac{N}{2}} \sqrt{K}} \frac{1}{\sqrt{2\pi}} \\
		& \leq \frac{1}{\sqrt{2\pi}}.
	 \end{align*}
Second, at $S^\mu_l$, we have
\begin{align*}
	\sqrt{f(\lambda)} = \left( \frac{N}{N+ 2K}\right)^{\frac{N}{2}},
\end{align*}
and 
 \begin{align*}
 	h(\lambda) = \mu^K \frac{(2K)!}{4^K (K!)^2} \leq \frac{2K}{1+2K}^K \frac{1}{\sqrt{\pi K}}. 
 \end{align*}
 Therefore,
 \begin{align*}
 	p_K \leq  \left( \frac{N}{N+ 2K}\right)^{\frac{N}{2}} \left(\frac{2K}{1+2K}\right)^K \frac{1}{\sqrt{\pi K}} \leq \frac{1}{\sqrt{2 \pi}}. 
 \end{align*}
 Third, at $S^0_l$, this is reduced to the $N-1$ case. For $N = 1$, $S^0_l$ is empty. 
 The rest follows by induction.
\end{proof}

Let $S$ and $\mu$ be defined as in the proof of Lemma \ref{lem:pJ}. 
We now prove that $\partial S \subseteq \partial\left( [0, \mu]^N  \right)$. 
\begin{lemma}
	\label{lem:boundaryterm}
	The boundary of $S$ must lie within the boundary of $\left( [0, \mu]^N  \right)$. In other words,
	$\displaystyle \partial S \cap (0, \mu)^N = \phi.$ 
\end{lemma}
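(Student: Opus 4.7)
The plan is to reduce the claim to a direct application of the regular value theorem. Introduce the smooth function
\[
g \colon [0,\mu]^N \longrightarrow \mathbb{R}, \qquad g(\lambda) \;=\; \sum_{l=1}^N \frac{\lambda_l^2}{1-\lambda_l^2},
\]
so that $S = g^{-1}(2K)$. Since $\mu < 1$, $g$ is smooth on the entire cube and
\[
\frac{\partial g}{\partial \lambda_l}(\lambda) \;=\; \frac{2 \lambda_l}{(1-\lambda_l^2)^2}.
\]

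The first step is to note that on the open cube $(0,\mu)^N$ every coordinate satisfies $\lambda_l > 0$, hence every component of $\nabla g(\lambda)$ is strictly positive. In particular $\nabla g(\lambda) \neq 0$ throughout $(0,\mu)^N$, so $2K$ is a regular value of $g$ restricted to the open cube.

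The second step is to apply the regular value theorem: $V := g^{-1}(2K) \cap (0,\mu)^N$ is a smooth embedded $(N-1)$-dimensional submanifold of $(0,\mu)^N$ without boundary. Suppose, toward a contradiction, that some $\lambda^{*} \in \partial S$ also lies in $(0,\mu)^N$. Because $(0,\mu)^N$ is open, there is an open Euclidean ball $B$ around $\lambda^{*}$ with $B \subset (0,\mu)^N$, and consequently $S \cap B = V \cap B$ is diffeomorphic to an open subset of $\mathbb{R}^{N-1}$. This realises $\lambda^{*}$ as an interior point of $S$ viewed as an $(N-1)$-manifold with corners inside $[0,\mu]^N$, contradicting $\lambda^{*} \in \partial S$. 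Hence $\partial S \cap (0,\mu)^N = \phi$, which is the claim.

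The argument is essentially routine and I do not foresee a real obstacle, provided one agrees on the interpretation of $\partial S$. The natural reading (and the one already used in the proof of Lemma \ref{lem:pJ}, where $\partial S$ is equated with $S \cap \partial([0,\mu]^N) = \bigcup_l (S_l^0 \cup S_l^\mu)$) is the boundary of $S$ as a manifold with corners inside the closed cube, not the Euclidean topological boundary of $S$ in $\mathbb{R}^N$; it is with respect to this manifold boundary that the implicit-function argument above gives exactly the stated inclusion.
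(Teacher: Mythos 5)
Your proof is correct, and it reaches the conclusion by a genuinely different mechanism than the paper. You apply the regular value theorem: on the open cube every partial derivative $2\lambda_l/(1-\lambda_l^2)^2$ of the constraint function $g$ is strictly positive, so $g^{-1}(2K)\cap(0,\mu)^N$ is an embedded $(N-1)$-dimensional submanifold without boundary, and hence any point of $S$ lying in the open cube has a chart neighborhood inside $S$ and cannot be a manifold-boundary point. The paper instead reuses the vector fields $D_{i,i+1}=\lambda_{i+1}(1-\lambda_i^2)\frac{\partial}{\partial\lambda_i}-\lambda_i(1-\lambda_{i+1}^2)\frac{\partial}{\partial\lambda_{i+1}}$ already introduced in the proof of Lemma \ref{lem:pJ}: these annihilate the constraint, span the tangent space of the level set at points of the open cube, and flowing along them yields curves that stay inside $S$, so no such point can lie in $\partial S$. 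Your route is the more standard and self-contained one (it needs only the nonvanishing of $\nabla g$, with no spanning argument for the $D_{i,i+1}$), while the paper's choice buys coherence with the stationary-point analysis of $f\vert_S$ and $h\vert_S$, which is carried out with the same vector fields. Two minor points: your reading of $\partial S$ as the boundary of $S$ as a manifold (with corners), rather than the topological boundary in $\mathbb{R}^N$, is indeed the intended one and is what makes the statement nontrivial; and the claim that $V\cap B$ is diffeomorphic to an open subset of $\mathbb{R}^{N-1}$ should be taken locally at $\lambda^{*}$ --- the implicit function theorem supplies a chart neighborhood of $\lambda^{*}$ in $S$, which is all the contradiction requires.
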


\begin{proof}
	Suppose there exists $(\lambda_1, \dots, \lambda_N) \in  \partial S \cap (0, \mu)^N.$
	Let $v \in T_{(\lambda_1, \dots, \lambda_N)} S$ be a tangent vector. Let $D_{ij}$ be defined as in the proof of Lemma \ref{lem:pJ}. Since $D_{i, i+1}$ with $i = 1, \dots, N-1$ preserve the constraint $\sum_{l = 1}^N \frac{\lambda_l^2}{1- \lambda_l^2}$ defining $S$, they span $T_{(\lambda_1, \dots, \lambda_N)} S$ and thus there exists $a_i \in \mathbb{R}$ with $i = 1, \dots, N-1$ such that
	\begin{align*}
		v = \sum_{i = 1}^{N-1} a_i D_{i, i+1}.
	\end{align*}
	Let $\gamma$ be a curve in $(0, \mu)^N$ starting at $(\lambda_1, \dots, \lambda_N)$ and satisfying
	\begin{align*}
		\gamma'(t) = \sum_{i = 1}^{N-1} a_i(D_{i, i+1})(\gamma(t)).
	\end{align*}
	Since $ \sum_{i = 1}^{N-1} a_i(D_{i, i+1})$ does not change $\sum_{l = 1}^N \frac{\lambda_l^2}{1- \lambda_l^2}$, $\gamma$ must be inside $S$ for $t \in [0, \epsilon]$ for some small $\epsilon >0$. Thus, $(\lambda_1, \dots, \lambda_N) \notin  \partial S \cap (0, \mu)^N$, and we get a contradiction. 
\end{proof}

\begin{lemma}
	\label{lem:cotan}
	Let $x \in \mathbb{C}$. If $\textnormal{Im}(x) >0$, then
	\begin{align*}
		\vert \cot(\pi n x) - (-i) \vert \leq \frac{2}{e^{2\pi n \textnormal{Im}(x)} - 1}.
	\end{align*}
	If $\textnormal{Im}(x) <0$, then
	\begin{align*}
		\vert \cot(\pi n x) - i \vert \leq \frac{2}{e^{-2\pi n \textnormal{Im}(x)} - 1}.
	\end{align*}
\end{lemma}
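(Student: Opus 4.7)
The plan is to write $\cot(\pi n x)$ in exponential form and reduce both bounds to elementary estimates on $|w-1|$, where $w = e^{2 \pi i n x}$. Concretely, starting from
\[
\cot(z) \;=\; \frac{\cos z}{\sin z} \;=\; i \, \frac{e^{iz} + e^{-iz}}{e^{iz} - e^{-iz}} \;=\; i \, \frac{w+1}{w-1},
\]
with $w = e^{2iz}$, I would immediately derive the two closed-form identities
\[
\cot(z) + i \;=\; \frac{2 i w}{w - 1}, \qquad \cot(z) - i \;=\; \frac{2 i}{w - 1}.
\]
Setting $z = \pi n x$ gives $w = e^{2 \pi i n x}$, so $|w| = e^{-2\pi n \,\mathrm{Im}(x)}$.

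For the first inequality, assume $\mathrm{Im}(x) > 0$, so that $|w| < 1$. Then $|w - 1| \geq 1 - |w|$ by the reverse triangle inequality, and
\[
|\cot(\pi n x) - (-i)| \;=\; \frac{2 |w|}{|w-1|} \;\leq\; \frac{2 |w|}{1 - |w|} \;=\; \frac{2}{|w|^{-1} - 1} \;=\; \frac{2}{e^{2 \pi n \,\mathrm{Im}(x)} - 1},
\]
which is the claimed bound. For the second inequality, assume $\mathrm{Im}(x) < 0$, so that $|w| > 1$ and $|w - 1| \geq |w| - 1$; then
\[
|\cot(\pi n x) - i| \;=\; \frac{2}{|w-1|} \;\leq\; \frac{2}{|w| - 1} \;=\; \frac{2}{e^{-2 \pi n \,\mathrm{Im}(x)} - 1}.
\]

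There is no real obstacle: the only minor point to be careful about is which sign of $i$ is being subtracted and the corresponding direction of the inequality $|w-1| \geq \big||w| - 1\big|$; that is precisely what determines whether the factor of $|w|$ in the numerator survives (first case) or cancels (second case). The lemma is a direct algebraic identity plus one application of the reverse triangle inequality, so the proof should be essentially two short displays, one per case.
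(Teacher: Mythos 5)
Your proof is correct and follows essentially the same route as the paper: rewrite $\cot(\pi n x)$ in exponential form so that $\cot(\pi n x)\pm i$ becomes an explicit ratio involving $e^{\pm 2\pi i n x}-1$, then bound the denominator from below with the reverse triangle inequality (your $2iw/(w-1)$ with $w=e^{2\pi i n x}$ is just the paper's $-2i/(e^{-2\pi i n x}-1)$ written with the reciprocal variable). No gaps.
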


\begin{proof}
	We first prove the case when $\textnormal{Im}(x) >0$. Note that
	\begin{align*}
		\cot(\pi n x ) & = i \frac{e^{i\pi n x} + e^{-i\pi n x}}{e^{i\pi n x} - e^{-i\pi n x}} \\
		& =  -i \frac{e^{-i\pi n x} + e^{i\pi n x} }{e^{-i\pi n x} - e^{i\pi n x}} \\
		& = - i \left(1 + \frac{2}{e^{-2 i\pi n x} - 1} \right).
	\end{align*}
	Then, 
	\begin{align*}
		\vert \cot(\pi n x) - (-i) \vert \leq  \left\vert \frac{2}{e^{-2 i\pi n x} - 1}  \right\vert.
	\end{align*}
	From the triangle inequality, we obtain
	\begin{align*}
		\vert e^{-2 i\pi n x} - 1 \vert \geq \vert e^{-2 i\pi n x} \vert - \vert 1 \vert = e^{2\pi n \textnormal{Im}(x)} - 1.
	\end{align*}
	Therefore, 
	\begin{align*}
		\vert \cot(\pi n x) - (-i) \vert \leq \frac{2}{e^{2\pi n \textnormal{Im}(x)} - 1}.
	\end{align*}
	
	When $\textnormal{Im}(x) <0$, the proof is completely analogous. Note that
	\begin{align*}
		\cot(\pi n x ) &  = i \left(1 + \frac{2}{e^{2 i\pi n x} - 1} \right).
	\end{align*}
	Then, 
	\begin{align*}
		\vert \cot(\pi n x) - i \vert \leq  \left\vert \frac{2}{e^{2 i\pi n x} - 1}  \right\vert.
	\end{align*}
	Again, from the triangle inequality, we obtain
	\begin{align*}
		\vert e^{2 i\pi n x} - 1 \vert \geq \vert e^{2 i\pi n x} \vert - \vert 1 \vert = e^{-2\pi n \textnormal{Im}(x)} - 1.
	\end{align*}
	Therefore, 
	\begin{align*}
		\vert \cot(\pi n x) - i \vert \leq \frac{2}{e^{-2\pi n \textnormal{Im}(x)} - 1},  ~~~~\textnormal{Im}(x) <0
	\end{align*}
\end{proof}

\begin{lemma}
	\label{lem:cotan2}
	Let $x \in \mathbb{C}$ and $n \in \mathbb{Z}$. If $\textnormal{Re}(x) \notin \frac{\mathbb{Z}}{n}$, then
	\begin{align*}
		\vert \cot(\pi n x) \vert \leq 1 + \frac{2}{\sqrt{1 - \cos(2\pi n \textnormal{Re}(x))^2}}.
	\end{align*}
\end{lemma}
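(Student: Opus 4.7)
The plan is to mimic the opening of Lemma \ref{lem:cotan} and then replace its triangle-inequality lower bound on $|e^{-2i\pi nx}-1|$ (which uses $\textnormal{Im}(x)$) by a sharper one that depends only on $\textnormal{Re}(x)$. First, I would invoke the algebraic identity
\begin{equation*}
\cot(\pi n x) \;=\; -i\left(1 + \frac{2}{e^{-2i\pi n x}-1}\right),
\end{equation*}
noting that its denominator $e^{-2i\pi n x}-1$ vanishes only when $nx\in\mathbb{Z}$, which in particular forces $\textnormal{Re}(x)\in\mathbb{Z}/n$; so under the lemma's hypothesis the identity is legitimate. The triangle inequality then gives
\begin{equation*}
|\cot(\pi n x)| \;\le\; 1 + \frac{2}{|e^{-2i\pi n x}-1|},
\end{equation*}
so the entire proof reduces to bounding $|e^{-2i\pi n x}-1|$ below in terms of $\textnormal{Re}(x)$ alone.

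For this bound I would write $x=a+bi$ with $a=\textnormal{Re}(x)$, $b=\textnormal{Im}(x)$, and set $s=e^{2\pi n b}>0$. A direct expansion gives
\begin{equation*}
|e^{-2i\pi n x}-1|^{2} \;=\; s^{2} - 2s\cos(2\pi n a) + 1 \;=\; \bigl(s-\cos(2\pi n a)\bigr)^{2} + \sin^{2}(2\pi n a),
\end{equation*}
where the second equality is just completing the square in the real variable $s$. The right-hand side is minimized over all real $s$ at $s=\cos(2\pi n a)$, with minimum value $\sin^{2}(2\pi n a) = 1-\cos^{2}(2\pi n a)$. Since this minimum is non-negative and is attained over $\mathbb{R}$, it is a valid lower bound for any $s>0$ in particular, so
\begin{equation*}
|e^{-2i\pi n x}-1| \;\ge\; \sqrt{1-\cos(2\pi n \,\textnormal{Re}(x))^{2}}.
\end{equation*}

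Substituting this into the triangle inequality above yields exactly the claimed bound. There is no real obstacle: the only step that deserves care is verifying the identity is valid under the stated hypothesis (which is immediate, as noted), and recognizing that completing the square in the single variable $s=e^{2\pi n b}$ yields a lower bound independent of $b$. Note also that when $\textnormal{Re}(x)\in\frac{1}{2n}\mathbb{Z}\setminus\frac{1}{n}\mathbb{Z}$ or similar, the right-hand side can be infinite, in which case the inequality is trivially true; and when $\textnormal{Im}(x)=0$, the same derivation goes through because we never divided by $b$ and only used $\textnormal{Re}(x)\notin\mathbb{Z}/n$ to ensure the $\cot$ identity is meaningful.
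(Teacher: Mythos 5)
Your proposal is correct and follows essentially the same route as the paper: both start from the identity $\cot(\pi n x) = -i\bigl(1 + \tfrac{2}{e^{-2i\pi n x}-1}\bigr)$ and then bound $\vert e^{-2i\pi n x}-1\vert$ from below by completing the square in $e^{2\pi n \textnormal{Im}(x)}$, which yields the lower bound $\sqrt{1-\cos(2\pi n \,\textnormal{Re}(x))^2}$ independent of the imaginary part. Your handling of the degenerate cases (infinite right-hand side, real $x$) and the explicit use of the triangle inequality are if anything slightly more careful than the paper's write-up.
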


\begin{proof}
	From the proof of Lemma \ref{lem:cotan}, we have that
	\begin{align*}
		\vert \cot(\pi n x) \vert & = \left\vert -i \left(1 + \frac{2}{e^{-2i\pi n x} - 1}\right) \right\vert \\
		& = 1 + \frac{2}{\vert e^{-2i\pi n x} - 1 \vert}.
	\end{align*}
	Let $x = a + bi$. We compute that
	\begin{align*}
		\vert e^{-2i\pi n x} - 1\vert  & = \vert  e^{-2i\pi n a}  e^{2\pi n b} - 1\vert \\
		& = e^{2\pi n b} \vert  e^{-2i\pi n a}  - e^{-2\pi n b} \vert \\
		& = e^{2\pi n b} \vert \cos(2\pi n a) -  e^{-2\pi n b}  - i \sin(2\pi a) \vert \\
		& = e^{2\pi n b} \sqrt{\left(\cos(2\pi n a) -  e^{-2\pi n b} \right)^2 + \sin(2 \pi a)^2} \\
		& = e^{2\pi n b} \sqrt{e^{-4\pi \sqrt{n} b}  - 2 e^{-2\pi n b} \cos(2\pi n a) + 1} \\
		& = \sqrt{1  - 2 e^{2\pi n b} \cos(2\pi n a) + e^{4\pi n b}} \\
		& =  \sqrt{\left( \cos(2\pi n a) - e^{4\pi n b} \right)^2 + 1 - \cos(2\pi n a)^2} \\
		& \geq  \sqrt{1 - \cos(2\pi n a)^2}.
	\end{align*}
	We thus have that
	\begin{align*}
		\vert \cot(\pi n x) \vert \leq 1 + \frac{2}{\sqrt{1 - \cos(2\pi n a)^2}}, ~~~~ \textnormal{for } \textnormal{Re}(x) \notin \frac{\mathbb{Z}}{n}
	\end{align*}
\end{proof}

\begin{lemma}
	\label{lem:realpart}
	Let $x$ be a complex number with  $\textnormal{Re}(x) >0$. Then, 
	\begin{align*}
		\textnormal{Re}(x\log x) \geq \textnormal{Re}(x) \log(\textnormal{Re}(x)) - \frac{\pi}{2} \vert \textnormal{Im}(x) \vert.
	\end{align*}
\end{lemma}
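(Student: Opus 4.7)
The plan is to reduce the claim to elementary real inequalities by decomposing $x \log x$ into real and imaginary parts using the principal branch of the logarithm. Writing $x = a + bi$ with $a = \textnormal{Re}(x) > 0$ and $b = \textnormal{Im}(x)$, the principal logarithm gives $\log x = \log |x| + i \arg(x)$ where $\arg(x) \in (-\pi/2, \pi/2)$ precisely because $a > 0$.

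A direct expansion then yields
\begin{equation*}
x \log x = \bigl( a \log|x| - b \arg(x) \bigr) + i\bigl( b \log|x| + a \arg(x) \bigr),
\end{equation*}
so $\textnormal{Re}(x \log x) = a \log|x| - b \arg(x)$. From here I would establish the bound in two independent pieces. First, since $|x| = \sqrt{a^2 + b^2} \geq a > 0$ and $\log$ is increasing on $(0, \infty)$, we obtain $a \log|x| \geq a \log a = \textnormal{Re}(x)\log(\textnormal{Re}(x))$. Second, since $|\arg(x)| < \pi/2$, we have $-b \arg(x) \geq -|b|\,|\arg(x)| \geq -\tfrac{\pi}{2}|b| = -\tfrac{\pi}{2}|\textnormal{Im}(x)|$.

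Adding these two inequalities gives the desired bound. There is essentially no obstacle here; the only subtlety is to insist on the principal branch so that the bound $|\arg(x)| < \pi/2$ holds, which is exactly where the hypothesis $\textnormal{Re}(x) > 0$ is used.
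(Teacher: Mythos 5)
Your proof is correct, and it is exactly the elementary computation the paper has in mind (the paper simply declares the proof "a trivial computation" and omits it): decompose $\textnormal{Re}(x\log x) = \textnormal{Re}(x)\log|x| - \textnormal{Im}(x)\arg(x)$, use $|x| \geq \textnormal{Re}(x)$ for the first term and $|\arg(x)| < \pi/2$ (valid on the right half-plane with the principal branch) for the second.
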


\begin{proof}
	The proof is a trivial computation.
\end{proof}

The following lemmas are used in the proof of Lemma \ref{lem:ejjprime} for establishing the leading order asymptotic of $e_{J, {J'}}$. Let $t_{m_1, m_2}$ be defined as in \eqref{eq:tm1m2}.

\begin{lemma}
	\label{lem:ejjprimeplace1}
	There exists positive constants $C$ and $\theta < 1$ such that
	$$\left\vert \sum_{\substack{m_1 = 0 \\1 \leq m_2 \leq n-1}} t_{m_1, m_2} \right\vert  \leq C \theta^n.$$
\end{lemma}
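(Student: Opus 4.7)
The plan is to recognize the sum as a scaled binomial expectation, extract an exponentially decaying prefactor, and bound the residual by a polynomial in $n$.

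First, using $\Gamma(n+1)/\!\left(\Gamma(m_2+1)\Gamma(n-m_2+1)\right) = \binom{n}{m_2}$ together with $e^{\beta m_2} = \bigl(p_{J'}/(1-p_J-p_{J'})\bigr)^{m_2}$, the $m_1 = 0$ slice of \eqref{eq:tm1m2} simplifies to
\begin{equation*}
	t_{0,m_2} = -\frac{\sqrt{n p_J}}{n}\bigl(\sqrt{m_2} - \sqrt{n p_{J'}}\bigr) \binom{n}{m_2}(1-p_J-p_{J'})^{n-m_2} p_{J'}^{m_2}.
\end{equation*}
Introducing $\tilde{p} = p_{J'}/(1-p_J)$ and $\tilde{q} = (1-p_J-p_{J'})/(1-p_J) = 1-\tilde{p}$, which is possible since $P_{tB}(\vec 0) > 0$ guarantees $p_J + p_{J'} < 1$, we factor $(1-p_J-p_{J'})^{n-m_2} p_{J'}^{m_2} = (1-p_J)^n \tilde{q}^{\,n-m_2}\tilde{p}^{\,m_2}$. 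Extending the summation range to $0 \leq m_2 \leq n$ only introduces the boundary terms $t_{0,0}$ and $t_{0,n}$, each of which decays exponentially in $n$ on its own and may be absorbed into the final bound.

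Second, with this rewriting the extended sum equals $-\frac{\sqrt{n p_J}}{n}(1-p_J)^n\,\mathbb{E}\bigl[\sqrt{W} - \sqrt{n p_{J'}}\bigr]$ where $W \sim \mathrm{Binomial}(n,\tilde{p})$. By the triangle inequality and Jensen's inequality applied to the concave map $x \mapsto \sqrt{x}$,
\begin{equation*}
	\left| \mathbb{E}\bigl[\sqrt{W} - \sqrt{n p_{J'}}\bigr] \right| \leq \mathbb{E}[\sqrt{W}] + \sqrt{n p_{J'}} \leq \sqrt{n\tilde{p}} + \sqrt{n p_{J'}} \leq 2\sqrt{n}.
\end{equation*}
Multiplying by the prefactor $\sqrt{n p_J}/n$ yields a bound of order $\sqrt{p_J}\,(1-p_J)^n$ on the extended sum, and the two boundary corrections are geometrically smaller. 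Setting $\theta = 1 - p_J \in (0,1)$ (if $p_J = 0$, every $t_{0,m_2}$ vanishes and the lemma is trivial) and choosing $C$ to dominate the polynomial prefactor and the boundary corrections completes the proof.

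I do not expect substantial obstacles here: the computation reduces to spotting that the $m_1 = 0$ slice is literally a binomial sum in $m_2$ of total mass $(1-p_J)^n$, at which point the bound is a two-line estimate. The mild subtlety is purely bookkeeping — handling the excluded boundary indices $m_2 \in \{0, n\}$ and verifying that the factorization step is well-defined, both of which are immediate from the positivity of $p_J, p_{J'}$ and the fact that $p_J + p_{J'} < 1$.
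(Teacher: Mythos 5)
Your proof is correct, and it takes a genuinely different and more elementary route than the paper. The paper handles this lemma with the same machinery as the surrounding results: Cauchy's residue theorem converts $\sum_{m_2=1}^{n-1} t_{0,m_2}$ into a contour integral, then Stirling's approximation and a one-dimensional saddle point analysis at $x_0 = p_{J'}$ produce an asymptotic of the form $\textnormal{const}\cdot q(n)\left(\tfrac{1-p_J-p_{J'}}{1-p_{J'}}\right)^n$ with $q$ polynomial, which is then dominated by $C\theta^n$. You instead observe that the $m_1=0$ slice is literally a binomial sum: writing $\tilde p = p_{J'}/(1-p_J)$, the extended sum equals $-\tfrac{\sqrt{np_J}}{n}(1-p_J)^n\,\mathbb{E}\bigl[\sqrt{W}-\sqrt{np_{J'}}\bigr]$ with $W\sim\mathrm{Binomial}(n,\tilde p)$, and Jensen plus the triangle inequality bound the expectation by $2\sqrt n$, giving $2\sqrt{p_J}(1-p_J)^n$; the excluded boundary terms $t_{0,0}=(1-p_J-p_{J'})^n\sqrt{p_Jp_{J'}}$ and $t_{0,n}=\sqrt{p_J}(\sqrt{p_{J'}}-1)p_{J'}^n$ are separately exponentially small, exactly as the paper also records. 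Your bookkeeping is sound: $p_J+p_{J'}<1$ makes the factorization legitimate, $p_{J'}\le 1-p_J$ lets you take $\theta = 1-p_J$, and the degenerate case $p_J=0$ (where $\alpha$ in \eqref{eq:tm1m2} would anyway be ill-defined) is correctly dismissed since every $t_{0,m_2}$ vanishes. What each approach buys: yours is a short, self-contained probabilistic argument with no complex analysis, fully sufficient since only some $\theta<1$ is required; the paper's contour-integral treatment is heavier but uniform with Lemmas \ref{lem:ejj} and \ref{lem:ejjprime} and delivers the sharper decay rate $\left(\tfrac{1-p_J-p_{J'}}{1-p_{J'}}\right)^n$ (note $\tfrac{1-p_J-p_{J'}}{1-p_{J'}} = 1-\tfrac{p_J}{1-p_{J'}} < 1-p_J$), which your upper bound is consistent with but does not capture.
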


\begin{proof}
	We first use Cauchy's residue theorem to turn the discrete sum into a one-dimensional contour integral. Recall
	\begin{align*}
		\beta = \log(\frac{p_{J'}}{1-p_J-p_{J'}}).
	\end{align*}
	We see that
	\begin{align}
		\sum_{\substack{m_1 = 0 \\1 \leq m_2 \leq n-1}} t_{m_1, m_2} & = \sum_{m_2 = 1}^{n-1} t_{0, m_2} \notag \\
		& = \frac{(-\sqrt{p_J})(1-p_J-p_{J'})^{n}  \Gamma(n+1)}{\sqrt{n}} \,  \sum_{m_2 = 1}^{n-1}  \frac{ \left( \sqrt{m_2} - \sqrt{n p_{J'}} \right)  e^{\beta m_2}}{\Gamma(m_2+1) \Gamma(n-m_2+1)} \notag \\
		& = \frac{(-\sqrt{p_J})(1-p_J-p_{J'})^{n}  \Gamma(n+1)}{2i  \sqrt{n}} I_n \label{eq:jjprimein2}
	\end{align}
	where
	\begin{align*}
		I_n & = \int_{\gamma_n}  \frac{ \left( \sqrt{y} - \sqrt{n p_{J'}} \right)  e^{\beta y}}{\Gamma(y+1) \Gamma(n-y+1)} \cot(\pi y) \, dy \\
		& = n^{\tfrac{3}{2}} \int_{\hat{\gamma}_n} \frac{ \left( \sqrt{y} - \sqrt{p_{J'}} \right)  e^{n\beta y}}{\Gamma(ny+1) \Gamma(n-ny+1)} \cot(n \pi y) \, dy
	\end{align*}
	Here, $\gamma_n$ and $\hat{\gamma}_n$ are defined as in Figures \ref{fig:gamma_n} and \ref{fig:hat_gamma_n}. We then apply Stirling's approximation \eqref{eq:gammaapprox} to get
	\begin{align}
		I_n \sim \frac{\sqrt{n}}{2\pi}  e^{-n \log n} \int_{\hat{\gamma}_n}  \rho(y) e^{n\psi(y)} \cot(n \pi y) \, dy, \label{eq:Injjprime}
	\end{align}
	where
	\begin{align*}
		\psi(y) &= \beta y - y(\log y - 1) -  (1-y)(\log (1-y) - 1), \\
		\rho(y) & = \frac{\sqrt{y} - \sqrt{p_{J'} }}{\sqrt{y(1-y)}}.
	\end{align*}
	We now examine the integral.
	\begin{align*}
		H_n = \int_{\hat{\gamma}_n}  \rho(y) e^{n\psi(y)} \cot(n \pi y) \, dy.
	\end{align*}
	As in the proof of Lemma \ref{lem:ejj}, we divide the contour into four components $\hat{\gamma}_{n, 1}, \hat{\gamma}_{n, 2}, \hat{\gamma}_{n, 3}, \hat{\gamma}_{n, 4}$, and define the corresponding integrals $H_n^1, H_n^2, H_n^3, H_n^4$. 
	For $H_n^1$, it follows from Lemma \ref{lem:cotan} that
	\begin{align*}
		H_n^1 \sim (-i) \int_{\hat{\gamma}_{n, 1}} \rho(y)  e^{n\psi(y)} \, dy.
	\end{align*}
	We then deform $\hat{\gamma}_{n, 1}$ as illustrated in Figure \ref{fig:deform}. As in Lemma \ref{lem:ejj}, we focus on the horizontal component that is completely contained within the real line.
	Then, we get
	\begin{align*}
		H_n^1 \sim  i \int_{\frac{\delta}{n}}^{1 - \frac{\delta}{n}} \rho(x) e^{n\psi(x)} \, dx,
	\end{align*} 
	and proceed with saddle point analysis. 
	Note that $\psi$ is real along the horizontal component. From elementary calculus, it is straightforward to see that $x_0 = p_{J'}$ is the unique critical point of $\psi(x)$, and $\psi(x)$ decays as $x$ moves away from $x_0$ along the entire segment. We further compute that
	\begin{align*}
		e^{n \psi(x_0)} = \frac{e^n}{(1-p_{J'})^n}.
	\end{align*}
	From similar analysis as in the proof of Lemma \ref{lem:ejj}, we obtain that
	\begin{align*}
		\int_{\frac{\delta}{n}}^{1 - \frac{\delta}{n}} \rho(x) e^{n\psi(x)} \, dx \sim \frac{e^n}{(1-p_{J'})^n} q(n)
	\end{align*}
	where $q(n)$ is a polynomial of $n$. Therefore, 
	\begin{align*}
		H_n^1 \sim i \frac{e^n}{(1-p_{J'})^n}q(n).
	\end{align*}
	The computation for $H_n^3$ is entirely analogous, and one can easily verify that we get the same leading order asymptotics for $H_n^3$. The contributions from $H_n^2$ and $H_n^4$ are negligible, and they can be analyzed using techniques very similar to the proof of Lemma \ref{lem:ejj}.
	
	In total, we get
	\begin{align*}
		H_n  \sim 2i \frac{e^n}{(1-p_{J'})^n} q(n).
	\end{align*}
	From \eqref{eq:Injjprime}, we obtain
	\begin{align*}
		I_n \sim \frac{i \sqrt{n}}{\pi}  e^{-n \log n} \frac{e^n}{(1-p_{J'})^n} q(n).
	\end{align*}
	From \eqref{eq:jjprimein2}, we obtain
	\begin{align*}
		\sum_{\substack{m_1 = 0 \\1 \leq m_2 \leq n-1}} t_{m_1, m_2} & \sim  \frac{\pi}{2} (-\sqrt{p_J}) e^n e^{-n \log n} \Gamma(n+1) \frac{(1-p_J-p_{J'})^{n}}{(1-p_{J'})^n} q(n) \\
		& \leq C \theta^n,
	\end{align*}
	for some positive constants $C$ and $\theta <1$. 
	This completes the proof.  
\end{proof}

\begin{corollary}
	\label{lem:ejjprimeplace2}
	There exists positive constants $C$ and $\theta < 1$ such that
	$$ \left\vert \sum_{\substack{m_2 = 0 \\1 \leq m_1 \leq n-1}} t_{m_1, m_2} \right\vert \leq C \theta^n.$$
\end{corollary}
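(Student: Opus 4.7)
The plan is to exploit the symmetry of $t_{m_1,m_2}$ under the simultaneous exchange of indices and the parameters $(p_J, p_{J'})$, and reduce the claim to Lemma \ref{lem:ejjprimeplace1}. Concretely, I would first inspect the definition
\begin{equation*}
t_{m_1, m_2} = \frac{(1-p_J-p_{J'})^{n}\,\Gamma(n+1)}{n}\cdot \frac{(\sqrt{m_1}-\sqrt{np_J})(\sqrt{m_2}-\sqrt{np_{J'}})\,e^{\alpha m_1+\beta m_2}}{\Gamma(m_1+1)\Gamma(m_2+1)\Gamma(n-m_1-m_2+1)},
\end{equation*}
together with $\alpha=\log\frac{p_J}{1-p_J-p_{J'}}$ and $\beta=\log\frac{p_{J'}}{1-p_J-p_{J'}}$. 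I would then write $t_{m_1,m_2}(p_J,p_{J'})$ to make the parameter dependence explicit and observe directly from the formula that
\begin{equation*}
t_{m_1,m_2}(p_J,p_{J'}) = t_{m_2,m_1}(p_{J'},p_J),
\end{equation*}
since exchanging $m_1 \leftrightarrow m_2$ together with $p_J \leftrightarrow p_{J'}$ (which exchanges $\alpha \leftrightarrow \beta$) leaves every factor invariant, and the remaining factors $(1-p_J-p_{J'})^n$ and $\Gamma(n-m_1-m_2+1)$ are already symmetric.

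Next, I would reindex the sum of interest. Setting $m_1'=m_2$ and $m_2'=m_1$,
\begin{equation*}
\sum_{\substack{m_2=0\\ 1\le m_1 \le n-1}} t_{m_1,m_2}(p_J,p_{J'}) \;=\; \sum_{\substack{m_1'=0 \\ 1\le m_2' \le n-1}} t_{m_1',m_2'}(p_{J'},p_J).
\end{equation*}
The right-hand side is precisely the quantity controlled by Lemma \ref{lem:ejjprimeplace1} with the roles of $p_J$ and $p_{J'}$ interchanged. Since that lemma was proved without any asymmetric assumption on $p_J, p_{J'}$ (beyond $p_J,p_{J'}>0$ and $p_J+p_{J'}<1$, which is symmetric), applying it to the pair $(p_{J'}, p_J)$ yields positive constants $C$ and $\theta<1$ with the desired exponential bound.

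There is no real obstacle here: the entire argument is a bookkeeping exercise that records the invariance of $t_{m_1,m_2}$ under the combined swap, followed by a verbatim invocation of Lemma \ref{lem:ejjprimeplace1}. The only care needed is to verify that the constants $C$ and $\theta$ produced by Lemma \ref{lem:ejjprimeplace1} depend only on the pair $(p_J,p_{J'})$ in a way that is insensitive to relabelling, which is immediate from the contour-integral and saddle-point analysis carried out there.
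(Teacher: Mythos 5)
Your proposal is correct and matches the paper's own argument: the paper disposes of this corollary by declaring it analogous to Lemma \ref{lem:ejjprimeplace1} under the swap $m_1 \leftrightarrow m_2$, which is exactly the symmetry $t_{m_1,m_2}(p_J,p_{J'}) = t_{m_2,m_1}(p_{J'},p_J)$ you verify. Making that swap explicit and invoking the lemma for the pair $(p_{J'},p_J)$ (whose hypotheses are symmetric in $J$ and $J'$) is precisely the intended, and correct, reduction.
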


\begin{proof}
	The proof is analogous to Lemma \ref{lem:ejjprimeplace1} by swapping $m_1$ and $m_2$ and we leave the details to the readers.
\end{proof}

Let $\phi(x,y)$ be defined as in \eqref{def:ejjprime_phifun} and let $h_n(x,y)$ be defined as in \eqref{def:ejjprime_hnfun}. The following lemmas establish that the integrals of $\phi(x,y)$ or $h_n(x,y)$ over the regions $U_n$, $L_n^+$, $W_{n,1}$, $W_{n,3}$ are negligible compared to the integral over $L_n$. The overall strategy of these proofs is to upper bound the integral by the product of the maximum value of the integrand and the volume of the region, and then show that this upper bound decays exponentially faster than the leading-order asymptotics given as in \eqref{eq:inttriangle}. We first prove that the integral of $\phi(x,y)$ over $U_n$ is negligible. Recall that
\begin{align*}
	U_n = \left\{(x,y) \in \mathbb{R}^2 \mid \frac{\delta}{n} \leq x,y \leq 1 - \frac{\delta}{n}, 1- \frac{1}{2n} < x + y \leq 1+ \frac{1}{n} \right\}.
\end{align*}

\begin{lemma}
	\label{lem:ejjprime_un}
	There exists positive constants $C$ and $\theta <1$ such that
	\begin{align*}
		\left\vert \int_{U_n} \phi(x,y) \, dx dy \right\vert \leq C \left(\frac{e \theta}{n(1-p_J -p_{J'})}\right)^n.
	\end{align*}
\end{lemma}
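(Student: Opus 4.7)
The plan is to obtain a uniform pointwise upper bound on $|\phi(x,y)|$ which, when combined with the $O(1/n)$ area of the thin strip $U_n$, is strong enough to produce the claimed exponential decay. The essential difficulty is that on $U_n$ the argument of the third Gamma factor, $n(1-x-y)+1$, does not grow with $n$, so Stirling's approximation \eqref{eq:gammaapprox} is not available for $\Gamma(n(1-x-y)+1)$. The key observation is that $n(1-x-y)+1\in[0,3/2]$ on $U_n$, and since $1/\Gamma$ is entire, $|1/\Gamma(n(1-x-y)+1)|$ is bounded above by an absolute constant on this compact interval.

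First I would apply Stirling's approximation \eqref{eq:gammaapprox} to $\Gamma(nx+1)$ and $\Gamma(ny+1)$, valid on the bulk of $U_n$, and combine it with the uniform bound on $1/\Gamma(n(1-x-y)+1)$ to obtain
\begin{align*}
|\phi(x,y)|\leq C_1\,n^{-n(x+y)}\,e^{n[\tilde\psi(x,y)+(x+y)]},
\end{align*}
where $\tilde\psi(x,y)=\alpha x+\beta y-x\log x-y\log y$ is the analogue of $\psi$ with the $-(1-x-y)\log(1-x-y)$ contribution removed, and $C_1$ absorbs a bounded function of $(x,y)$. The $U_n$ constraint $x+y\in(1-\tfrac{1}{2n},1+\tfrac{1}{n}]$ then yields $n^{-n(x+y)}\leq C_2\,n^{-n+1/2}$ and $e^{n(x+y)}\leq C_3\,e^n$.

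Next I would maximize $\tilde\psi$ over $U_n$. By continuity this reduces to maximizing on the diagonal $x+y=1$ up to an $O(1/n)$ correction. Substituting $y=1-x$, differentiating, and using $\alpha=\log\tfrac{p_J}{1-p_J-p_{J'}}$ and $\beta=\log\tfrac{p_{J'}}{1-p_J-p_{J'}}$ places the critical point at $(x^\ast,y^\ast)=\bigl(\tfrac{p_J}{p_J+p_{J'}},\tfrac{p_{J'}}{p_J+p_{J'}}\bigr)$, and the critical value collapses to
\begin{align*}
\max_{x+y=1}\tilde\psi(x,y)=\log\frac{p_J+p_{J'}}{1-p_J-p_{J'}}.
\end{align*}
Setting $\theta=p_J+p_{J'}\in(0,1)$, this gives $e^{n\tilde\psi_{\max}}\leq C_4\,\theta^n(1-p_J-p_{J'})^{-n}$ on $U_n$.

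Combining the pointwise bound with the area $|U_n|=O(1/n)$ and absorbing all polynomial-in-$n$ prefactors into a single constant yields exactly the target inequality. The main technical hurdle I anticipate is handling the corner subregions of $U_n$ where $nx$ or $ny$ is of order $\delta$ rather than order $n$, since Stirling's approximation degrades there; the remedy is a separate crude estimate on those corners using the elementary inequality $|1/\Gamma(z+1)|\leq C$ for $z\geq 0$ for the failing factor. Since the combined area of these corners is only $O(1/n^2)$ and $\tilde\psi$ does not exceed the bulk bound there, their contribution is absorbed into the same right-hand side.
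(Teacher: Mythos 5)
Your proof is correct and follows the same overall strategy as the paper's: bound the integral by $\max_{U_n}\vert\phi\vert$ times the (bounded) area of $U_n$, control the problematic factor $1/\Gamma(n(1-x-y)+1)$ by a constant because its argument stays in $[0,3/2)$ on $U_n$ (the paper does this via the recursion $\Gamma(w)=\Gamma(w+1)/w$ rather than by quoting that $1/\Gamma$ is entire, but the content is identical), apply Stirling \eqref{eq:gammaapprox} to the other two Gamma factors, and maximize the resulting exponent, landing on $\theta=p_J+p_{J'}$. Where you genuinely diverge is the maximization step: the paper keeps the full exponent $\psi(x,y)=\alpha x+\beta y-x\log x-y\log y+x+y$, observes that its unique critical point $\bigl(\tfrac{p_J}{1-p_J-p_{J'}},\tfrac{p_{J'}}{1-p_J-p_{J'}}\bigr)$ lies outside $U_n$ for large $n$, and then runs a four-case analysis over the boundary components of $U_n$; you instead peel off $n^{-n(x+y)}$ and $e^{n(x+y)}$ using the strip constraint and maximize $\tilde\psi$ on the cross-sections $x+y=s$, which collapses to a one-variable computation whose maximum $s\log\tfrac{p_J+p_{J'}}{1-p_J-p_{J'}}-s\log s$ gives the $O(1/n)$ statement immediately — this is shorter and equally rigorous once that per-line maximum is written down (your appeal to ``continuity'' alone is a bit thin, but the explicit computation fixes it). Two small points to tighten: the Stirling prefactor contains $1/\sqrt{xy}\le n/\delta$ on $U_n$, so your $C_1$ is polynomial in $n$ rather than a bounded function of $(x,y)$ — harmless, since, as you note, any polynomial factor is absorbed by nudging $\theta$ up while keeping it below $1$ — and the assertion $\theta=p_J+p_{J'}\in(0,1)$ deserves a word of justification, e.g.\ $p_J+p_{J'}\le p_K<\tfrac{1}{\sqrt{2\pi}}$ by Lemma \ref{lem:pJ}, which is exactly how the paper obtains $p_J+p_{J'}<\tfrac{1}{2}$.
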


\begin{proof}
	From the triangle inequality, we get that
	\begin{align*}
		\left\vert \int_{U_n} \phi(x,y) \, dx dy \right\vert  & \leq \int_{U_n} \vert \phi(x,y) \vert \, dx dy \\
		& \leq \max_{U_n} \vert \phi(x,y) \vert \int_{U_n} 1 \, dx dy. 
	\end{align*}
	Since the volume of $U_n$ is bounded above by some positive constant, it is then enough to show that there exists positive constants $C_0$ and $\theta <1$ such that 
	\begin{align*}
		\max_{U_n} \vert \phi(x,y) \vert \leq C_0 \left(\frac{e \theta}{n(1-p_J -p_{J'})}\right)^n.
	\end{align*}
	Let $x,y \in U_n$. We compute that
	\begin{align*}
		\vert \phi(x,y) \vert 
		& \leq \frac{n \vert \sqrt{x} - \sqrt{p_J}  \vert \vert \sqrt{y} - \sqrt{p_{J'}} \vert}{\Gamma(n(1-x-y) + 1)}  \frac{ e^{\alpha nx + \beta ny} }{\Gamma(nx+1) \Gamma(ny +1) }.
	\end{align*}
	To proceed, we first establish that
	\begin{align}
		\frac{n \vert \sqrt{x} - \sqrt{p_J}  \vert \vert \sqrt{y} - \sqrt{p_{J'}} \vert}{\Gamma(n(1-x-y) + 1)} \leq c n, \label{eq:primeusefulform3}
	\end{align}
	for some constant $c >0$. Since $\vert \sqrt{x} - \sqrt{p_J}  \vert \vert \sqrt{y} - \sqrt{p_{J'}} \vert$ is clearly bounded above by a constant, we need only show that $1/\Gamma(n(1-x-y) + 1)$ is bounded above by a constant. However, since there exists $x,y \in U_n$ such that $n(1-x-y) <0$, we cannot use Stirling's approximation directly. Instead, we use the following recursive formula of the Gamma function
	\begin{align}
		\Gamma(w) = \frac{\Gamma(w+1)}{w}. \label{eq:gammarecursive}
	\end{align}
	Let $w= 1-x-y$. Then, $-1 \leq nw \leq \frac{1}{2}$ and $nw + 2 \geq 1$. The recursive formula yields
	\begin{align*}
		\frac{1}{\Gamma(nw + 1)}
		& =  \frac{ nw+1}{\Gamma(nw+1+1)} = \frac{\left( nw +1 \right) \left( nw+2\right)}{\Gamma(nw + 2 + 1)}.
	\end{align*}
	Clearly,
	\begin{align*}
		(nw + 1)(nw + 2) &\leq \frac{3}{2} \times \frac{5}{2} = \frac{15}{4}.
	\end{align*}
	Using Stirling's approximation \eqref{eq:gammaapprox} and the fact that $log(x) - x$ is monotone increasing on $x \geq 1$, we get that
	\begin{align*}
		\Gamma(nw + 2 + 1) &\sim \sqrt{2\pi (n w + 2)} e^{ (nw + 2)(\log(nw + 2) - (nw + 2))} \geq \sqrt{2\pi} e.
	\end{align*}
	Therefore,
	\begin{align*}
		\frac{1}{\Gamma(nw + 1)} \leq \frac{15}{4 \sqrt{2\pi} e}.
	\end{align*}
	This completes the proof of \eqref{eq:primeusefulform3}.
	
	It remains to show that there exists some positive constants $C_1$ and $\theta <1$ such that
	\begin{align}
		\frac{ e^{\alpha nx + \beta ny} }{\Gamma(nx+1) \Gamma(ny +1) } \leq C_1 \left(\frac{e \theta}{n(1-p_J -p_{J'})}\right)^n. \label{eq:primeusefulform4}
	\end{align}
	From Stirling's approximation \eqref{eq:gammaapprox}, we get
	\begin{align*}
		&\Gamma(nx+1) \Gamma(ny +1) \sim 2\pi n \sqrt{x y} e^{n (x + y) \log n} e^{n(x\log x - x + y\log y -y)}.
	\end{align*}
	Thus, 
	\begin{align*}
		\frac{ e^{\alpha nx + \beta ny} }{\Gamma(nx+1) \Gamma(ny +1) } \sim \frac{1}{2\pi n} \frac{e^{-n(x+y)\log n}}{\sqrt{xy}} e^{n\psi(x,y)}
	\end{align*}
	where $\psi: \mathbb{R}_+ \times \mathbb{R}_+ \rightarrow \mathbb{R}$ is given by
	\begin{align*}
		\psi(x,y) = \alpha x + \beta y - x\log x - y \log y + x + y.
	\end{align*}
	Since $x + y \geq 1 - \frac{1}{2n}$,
	\begin{align*}
		e^{-n(x+y)\log n} &\leq \frac{\sqrt{n}}{n^n}.
	\end{align*}
	Since $x, y \geq \frac{\delta}{n}$, 
	\begin{align*}
		\frac{1}{\sqrt{xy}} &\leq \frac{n}{\delta}.
	\end{align*}
	We now seek an upper bound for $\psi(x,y)$. Note that it suffices to prove there exists some positive constants $C_2$ and $\theta <1$ such that
	\begin{align}
		e^{n\psi(x,y) } \leq  C_2 \left(\frac{e \theta}{1-p_J -p_{J'}}\right)^n. 
		\label{eq:unform1}
	\end{align}
	Towards this, we compute the critical point $(x_0, y_0)$ of $\psi(x,y)$ by solving
	\begin{align*}
		\frac{\partial \psi}{\partial x} &= \alpha - \log x = 0, \\ 
		\frac{\partial \psi}{\partial y} &= \beta - \log y = 0.
	\end{align*}
	The system has a unique solution, that is
	\begin{align*}
		x_0 &= e^{\alpha}=\frac{p_J}{1 - p_J - p_{J'}},\\
		y_0 &= e^{\beta} =\frac{p_{J'}}{1 - p_J - p_{J'}} .
	\end{align*}
	The Hessian matrix of $\psi(x,y)$ is
	\begin{align*}
		H = \begin{bmatrix}
			-\frac{1}{x} & 0 \\
			0 & -\frac{1}{y}
		\end{bmatrix}.
	\end{align*}
	Since $H$ is negative definite over $x, y >0$, $\psi(x,y)$ is concave down and the critical point $(x_0, y_0)$ is the global maximum over the entire positive quadrant. Hence,
	\begin{align*}
		\psi(x_0, y_0) = x_0 + y_0 = \frac{p_J + p_{J'}}{1 - p_J - p_{J'}}.
	\end{align*}
	In Lemma \ref{lem:pJ}, we see that
	\begin{align*}
		p_J - p_{J'} < \frac{1}{2},
	\end{align*}
	and hence
	\begin{align*}
		\frac{p_J + p_{J'}}{1 - p_J - p_{J'}} < 1.
	\end{align*}
	For $n$ sufficiently large, $(x_0, y_0) \notin U_n$, and the maximum value of $\psi(x,y)$ must be attained on the boundary of $U_n$. There are four components, and it suffices to prove that the maximum over each component satisfies \eqref{eq:unform1}.
	\begin{enumerate}
		\item  Along $x + y = 1+ \frac{1}{n}$, the maximum value of $\psi(x,y)$ occurs at 
		$$x_0=\frac{p_J}{p_J + p_{J'}} \left(1 + \frac{1}{n}\right), ~~~~ y_0 =\frac{p_{J'}}{p_J + p_{J'}} \left(1 + \frac{1}{n}\right)$$
		as determined using basic calculus. Furthermore,
		\begin{align*}
			\psi(x_0, y_0) = \left(1 + \frac{1}{n}\right) \left( \log(\frac{p_J + p_{J'}}{1-p_J - p_{J'}})- \log\left(1 + \frac{1}{n}\right) + 1 \right).
		\end{align*}
		Therefore,
		\begin{align*}
			e^{n\psi(x,y)} \leq e^{n \psi(x_0,y_0)} =  \left(\frac{p_J + p_{J'}}{1-p_J - p_{J'}}\right)^{n+1} \left(1 + \frac{1}{n}\right)^{-(n+1)} e^{n+1},
		\end{align*}
		which clearly satisfies \eqref{eq:unform1}.
		
		\item Along $x + y = 1- \frac{1}{2n}$, the maximum value of $\psi(x,y)$ occurs at 
		$$x_0=\frac{p_J}{p_J + p_{J'}} \left(1 - \frac{1}{2n}\right), ~~~~ y_0 =\frac{p_{J'}}{p_J + p_{J'}} \left(1 - \frac{1}{2n}\right).$$ Furthermore, 
		\begin{align*}
			\psi(x_0, y_0) = \left(1 - \frac{1}{2n}\right) \left( \log(\frac{p_J + p_{J'}}{1-p_J - p_{J'}}) - \log\left(1 - \frac{1}{2n}\right) + 1 \right).
		\end{align*}
		Therefore,
		\begin{align*}
			e^{n\psi(x,y)} \leq e^{n \psi(x_0,y_0)} =  \left(\frac{p_J + p_{J'}}{1-p_J - p_{J'}}\right)^{n-\frac{1}{2}} \left(1 - \frac{1}{2n}\right)^{-(n-\frac{1}{2})} e^{n-\frac{1}{2}}, 
		\end{align*}
		which clearly satisfies \eqref{eq:unform1}.
		
		\item Along $x = \frac{\delta}{n}$ and $1 - \frac{1}{2n}-\frac{\delta}{n} \leq  y \leq 1-\frac{\delta}{n}$, we see that
		\begin{align*}
			\psi(x,y) = \alpha \frac{\delta}{n} - \frac{\delta}{n} \log(\frac{\delta}{n}) + \frac{\delta}{n} + \beta y -y \log(y) + y.
		\end{align*}
		Since 
		\begin{align*}
			e^{n \left(\alpha \frac{\delta}{n} - \frac{\delta}{n} \log(\frac{\delta}{n}) + \frac{\delta}{n} \right)} = \left(\frac{p_{J}}{1- p_J - p_{J'}}\right)^\delta \left(\frac{\delta}{n}\right)^{-\delta}e^\delta
		\end{align*}
		is bounded above by a polynomial of $n$, we only need to give an upper bound for 
		$$\rho(y) = \beta y -y \log(y) + y.$$ The
		maximum value of $\rho(y)$  over the positive real line occurs at
		\begin{align*}
			y_0  = \frac{p_{J'}}{1- p_J - p_{J'}} = \exp(\beta).
		\end{align*}
		Since $p_J + p_{J'} < \frac{1}{2}$,
		$$\frac{p_{J'}}{1- p_J - p_{J'}} < 1. $$
		For $n$ sufficiently large,
		$$y_0 \notin \left[1 - \frac{1}{2n}-\frac{\delta}{n},  1-\frac{\delta}{n} \right].$$ 
		The maximum value of $\rho(y)$ is attained at the end points of the interval. 
		At $y = 1 - \frac{1}{2n}-\frac{\delta}{n}$, we get
		\begin{align*}
			e^{n \rho(y)}  \leq \left(\frac{e p_J }{1-p_J-p_{J'}}\right)^{n - \frac{1}{2} - \delta} \left(1 - \frac{1}{2n}-\frac{\delta}{n}\right)^{n - \frac{1}{2} - \delta}.
		\end{align*}
		At $y =1-\frac{\delta}{n}$, we get
		\begin{align*}
			e^{n \rho(y)}  \leq \left(\frac{e p_J }{1-p_J-p_{J'}}\right)^{n - \delta} \left(1 -\frac{\delta}{n}\right)^{n - \delta}.
		\end{align*}
		In both cases, \eqref{eq:unform1} is clearly met.
		
		\item Along $y = \frac{\delta}{n}$ and $1 - \frac{1}{2n}-\frac{\delta}{n} \leq  x \leq 1-\frac{\delta}{n}$, the analysis is completely analogous to Case 3 with $x$ and $y$ swapped.
	\end{enumerate}
	This completes the proof of Lemma \ref{lem:ejjprime_un}.
\end{proof}

\begin{lemma}
	\label{lem:ejjprime_lnplus}
	There exists positive constants $C$ and $\theta <1$ such that
	\begin{align*}
		\left\vert \int_{L_n^+} \phi(x,y) \, dx dy \right\vert \leq C \left(\frac{e \theta}{n(1-p_J -p_{J'})}\right)^n.
	\end{align*}
\end{lemma}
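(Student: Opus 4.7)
The plan is to follow the same overall strategy as in Lemma \ref{lem:ejjprime_un}: apply the triangle inequality
\[
\left\vert \int_{L_n^+} \phi(x,y) \, dx dy \right\vert \leq \max_{(x,y) \in L_n^+} \vert \phi(x,y) \vert \cdot \int_{L_n^+} 1 \, dx dy,
\]
and, since $L_n^+ \subseteq [0,1]^2$ has volume bounded by a constant, reduce the task to showing
\[
\max_{L_n^+}\vert \phi(x,y) \vert \leq C_0 \left( \frac{e \theta}{n(1-p_J-p_{J'})} \right)^n
\]
for some positive constants $C_0$ and $\theta < 1$.

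The new difficulty compared to the analysis on $U_n$ is that on $L_n^+$ we have $n(1-x-y)<-1$, so $\Gamma(n-nx-ny+1)$ has a negative argument and Stirling's approximation \eqref{eq:gammaapprox} does not apply directly. My plan is to first apply Euler's reflection formula
\[
\frac{1}{\Gamma(n-nx-ny+1)} = \frac{\sin(\pi n(x+y-1))}{\pi}\,\Gamma(n(x+y-1)),
\]
which, using $\vert \sin\vert \leq 1$ and $n(x+y-1)\geq 1$ on $L_n^+$, yields the pointwise bound $\vert 1/\Gamma(n-nx-ny+1)\vert \leq \Gamma(n(x+y-1))/\pi$. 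Stirling's approximation then applies to all three Gamma factors; writing $u = x+y-1$ and using $x+y-u=1$ to extract an $e^n/n^n$ factor from the exponent, a direct algebraic simplification produces
\[
\vert \phi(x,y) \vert \leq \frac{q(n)}{\sqrt{xy\,u}}\cdot \frac{e^n}{n^n}\cdot e^{n\hat\psi(x,y)}, \qquad \hat\psi(x,y) = \alpha x + \beta y - x\log x - y\log y + u\log u,
\]
where $q(n)$ is polynomial in $n$. Since $x, y \geq \delta/n$ and $u \geq 1/n$ on $L_n^+$, the prefactor $q(n)/\sqrt{xyu}$ grows only polynomially in $n$ and can be absorbed into $\theta^n$ for any $\theta$ strictly exceeding the constant controlling $e^{n\hat\psi}$.

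It then remains to show $\sup_{L_n^+}\hat\psi(x,y) < -\log(1-p_J-p_{J'})$ with strict inequality. A direct computation gives $\det(\mathrm{Hess}\,\hat\psi) = -1/(xyu) < 0$, so every interior critical point is a saddle and the supremum is attained on $\partial L_n^+$. The boundary consists of three pieces: the diagonal edge $x+y=1+1/n$, the right edge $x=1-\delta/n$, and the top edge $y=1-\delta/n$. On the diagonal, a one-dimensional optimization in $x$ (with $y=1+1/n-x$) yields the limiting critical value $\log((p_J+p_{J'})/(1-p_J-p_{J'}))$; along the two edges the function tends, as $n\to\infty$, to the corner value $\hat\psi(1,1)=\alpha+\beta = \log(p_Jp_{J'}/(1-p_J-p_{J'})^2)$. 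The two resulting constraints reduce to $p_J+p_{J'}<1$ (from the diagonal) and $p_Jp_{J'}<1-p_J-p_{J'}$ (from the corner). The main obstacle is to verify that both hold with room to spare so that $\theta<1$ can be chosen uniformly; both follow from Lemma \ref{lem:pJ}, which gives $p_J,p_{J'}\leq p_K<1/\sqrt{2\pi}$, whence $p_J+p_{J'}\leq 2/\sqrt{2\pi}<1$ and $p_Jp_{J'}\leq 1/(2\pi) < 1-2/\sqrt{2\pi}\leq 1-p_J-p_{J'}$. Choosing $\theta\in(0,1)$ strictly above the resulting supremum then completes the proof.
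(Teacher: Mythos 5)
Your route is essentially the paper's: triangle inequality, Euler's reflection formula together with the recursion $\Gamma(w)=\Gamma(w+1)/w$ and Stirling's approximation \eqref{eq:gammaapprox} to reduce the problem to bounding $e^{n\psi}$ with $\psi(x,y)=\alpha x+\beta y-x\log x-y\log y+(x+y-1)\log(x+y-1)$ on $L_n^+$, followed by pushing the maximum to the boundary and treating the three boundary pieces with the bounds on $p_J,p_{J'}$ coming from Lemma \ref{lem:pJ}. Your observation that $\det(\mathrm{Hess}\,\psi)=-1/\bigl(xy(x+y-1)\bigr)<0$ on $L_n^+$, so no interior point can be a maximum, is a clean substitute for the paper's explicit critical-point computation and is arguably tidier.

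There is, however, a concrete flaw in your edge analysis. On the edge $x=1-\delta/n$ the restriction of $\psi$ is, up to an additive constant, $\nu(y)=\beta y-y\log y+(y-\delta/n)\log(y-\delta/n)$, and $\nu'(y)=\beta+\log\frac{y-\delta/n}{y}<0$ because $\beta<0$ (which follows from $p_J+p_{J'}\le p_K<1/\sqrt{2\pi}<1/2$, since $J\neq J'$ lie in the same degree-$2K$ slice). So $\nu$ is strictly decreasing and the maximum on that edge sits at the endpoint $y=(1+\delta)/n$ adjacent to the diagonal, where $\psi\to\alpha$ as $n\to\infty$ --- not at the corner $(1,1)$, where $\psi\to\alpha+\beta<\alpha$. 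Your claim that ``along the two edges the function tends to the corner value $\alpha+\beta$'' therefore underestimates the true edge supremum, and the constraint you verify, $p_Jp_{J'}<1-p_J-p_{J'}$, is not the one the edges actually impose. The lemma is still safe, because $\alpha,\beta<0<\log\frac{1}{1-p_J-p_{J'}}$, so the genuine edge suprema also sit strictly below the threshold; the repair is exactly the one-dimensional monotonicity (or endpoint) argument the paper carries out in its Cases 2 and 3. Note also that the inequality $p_J+p_{J'}\le p_K<1/\sqrt{2\pi}$ gives both of your inequalities at once and more simply than multiplying the individual bounds. As written, though, the step locating the edge maxima is incorrect and leaves that part of the boundary unjustified, so you should replace it with the explicit monotonicity check before the proof can be accepted.
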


\begin{proof}
	As in Lemma \ref{lem:ejjprime_un}, the triangle inequality implies that it is enough to prove that there exists positive constants $C_0$ and $\theta <1$ such that
	\begin{align*}
		\max_{L_n^+} \vert \phi(x,y) \vert \leq C_0 \left(\frac{e \theta}{n(1-p_J -p_{J'})}\right)^n.
	\end{align*}
	Let $(x,y) \in L_n^+$. Recall that Euler's reflection formula \eqref{eq:euler} states
	\begin{align*}
		\Gamma(z)\Gamma(1-z) = \frac{\pi}{\sin(\pi z)}, ~~~~\forall z \in \mathbb{C} - \mathbb{Z}.
	\end{align*}
	So, we see that
	\begin{align}
		\label{eq:euler}
		\frac{1}{\Gamma(z)\Gamma(1-z)} = \frac{\sin(\pi z)}{\pi}, ~~~~ \forall z \in \mathbb{C}.
	\end{align}
	Then,
	\begin{align*}
		\frac{1}{\Gamma(n(1-x-y)+1)} & = \frac{1}{\Gamma(1-n(x+y-1))} \\
		& = \frac{1}{\pi} \sin(\pi n (x+y-1)) \Gamma(n (x+y-1)).
	\end{align*}
	Furthermore, it follows from the recursive formula \eqref{eq:gammarecursive} that
	\begin{align*}
		\Gamma(n (x+y-1)) = \frac{\Gamma(n (x+y-1) + 1) }{n (x+y-1)}. 
	\end{align*}
	Thus,
	\begin{align*}
		\frac{1}{\Gamma(n(1-x-y)+1)} = \frac{\sin(\pi n (x+y-1)) \Gamma(n (x+y-1) + 1) }{\pi n (x+y-1)},
	\end{align*}
	and therefore,
	\begin{align*}
		\vert \phi(x,y) \vert 
		& \leq \frac{\vert \sqrt{x} - \sqrt{p_J}  \vert \vert \sqrt{y} - \sqrt{p_{J'}} \vert}{\pi \vert x+y-1\vert}  \frac{ e^{\alpha nx + \beta ny} \Gamma(n(x+y-1)+1) }{\Gamma(nx+1) \Gamma(ny +1) }.
	\end{align*}
	Using Stirling's approximation \eqref{eq:gammaapprox}, we obtain that
	\begin{align*}
			\vert \phi(x,y) \vert  \sim \frac{e^n}{n^n} \rho(x,y) e^{n\psi(x,y)}
	\end{align*}
	for some $\rho(x,y)$ bounded above by a polynomial of $n$ and
	\begin{align*}
		\psi(x,y) = \alpha x + \beta y + (x+y-1)\log(x+y-1) - x\log x- y\log y.
	\end{align*}
	Therefore, it suffices to prove that there exists positive constants $C_1$ and $\theta <1$ such that
	\begin{align}
		\max_{L_n^+} e^{n\psi(x,y)} \leq C_1 \left(\frac{\theta}{1-p_J -p_{J'}}\right)^n. \label{eq:lnplususefn1}
	\end{align}
	Towards this, we find the critical point $(x_0, y_0)$ of $\psi(x,y)$ by solving 
	\begin{align*}
		&\frac{\partial \psi}{\partial x} = \alpha + \log(x+y-1) -\log x = 0 \\
		&\frac{\partial \psi}{\partial y} = \beta + \log(x+y-1) -\log y = 0 
	\end{align*}
	and we get a unique solution
	\begin{align*}
		(x_0, y_0) = (p_J, p_{J'}) \notin L_n^+.
	\end{align*}
	Therefore, the maximum of $\psi(x,y)$ over $L_n^+$ must be attained at the boundary. There are three components, and it suffices to prove that the maximum over each component satisfies \eqref{eq:lnplususefn1}. 
	\begin{enumerate}
		\item Along $x + y = 1 + \frac{1}{n}$, we obtain
		\begin{align*}
			\psi(x,y) & = \alpha \left( 1+ \frac{\delta}{n} -y \right) + \beta y + \frac{1}{n} \left( \frac{1}{n}\right) - \left( 1+ \frac{\delta}{n} -y \right) \log\left( 1+ \frac{\delta}{n} -y \right)  - y\log y.
		\end{align*}
		Define
		\begin{align*}
			\eta(y) = (\beta - \alpha) y - y\log y - \left( 1+ \frac{\delta}{n} -y \right) \log\left( 1+ \frac{\delta}{n} -y \right).
		\end{align*}
		We find the critical point $y_0$ of $\eta(y)$ by solving
		\begin{align*}
			\eta'(y) = \beta - \alpha + \log\left( 1+ \frac{\delta}{n} -y \right) - \log y = 0,
		\end{align*}
		and we get a unique solution
		\begin{align*}
			y_0 = \left(1 + \frac{1}{n}\right) \frac{p_{J'}}{p_J + p_{J'}}. 
		\end{align*}
		Since for large $n$
		\begin{align*}
			y_0 \in \left[ \frac{\delta}{n}, 1 -\frac{\delta}{n} \right],
		\end{align*}
		and for all $y \in  \left[ \frac{\delta}{n}, 1 -\frac{\delta}{n} \right]$,
		\begin{align*}
			\eta''(y) = - \frac{ 1+ \frac{\delta}{n}}{y \left(1 + \frac{\delta}{n} -y \right)} < 0,
		\end{align*}
		we see that $y_0$ is a global maximum. Hence,
		\begin{align*}
			\eta(y) \leq - \left( 1 + \frac{1}{n}\right)\log( 1 + \frac{1}{n}) - \left( 1 + \frac{1}{n}\right)\log(\frac{p_{J'}}{p_J + p_{J'}} ),
		\end{align*}
		and therefore
		\begin{align*}
			e^{n\psi(x,y)} \leq  \frac{1}{n} \left(\frac{p_{J}}{1- p_J - p_{J'}} \right)^{n+1}   \left(\frac{p_{J}}{p_J + p_{J'}} \right)^{n+1}.
		\end{align*}
		Clearly,  \eqref{eq:lnplususefn1} is satisfied.
		
		\item Along $x = 1-\frac{\delta}{n}$ and $\frac{1}{n}+\frac{\delta}{n} \leq  y \leq 1-\frac{\delta}{n}$, we obtain
		\begin{align*}
			\psi(x,y) & = \alpha \left(1-\frac{\delta}{n} \right) + \beta y + \left(y-\frac{\delta}{n}\right)\log(y-\frac{\delta}{n}) \\
			& \quad -  \left(1-\frac{\delta}{n}\right)\log(1-\frac{\delta}{n}) - y\log y.
		\end{align*}
		Define 
		\begin{align*}
			\nu(y) = \beta y +  \left(y-\frac{\delta}{n}\right)\log(y-\frac{\delta}{n}) - y\log y.
		\end{align*}
		From
		\begin{align*}
		\nu'(y) = \beta + \log(y-\frac{\delta}{n}) - \log y = 0,
		\end{align*}
		we get
		\begin{align}
			\frac{p_{J'}}{1-p_J-p_{J'}} = \frac{y}{y-\frac{\delta}{n}}, \label{eq:lnplususefn2}
		\end{align}
		Lemma \ref{lem:pJ} implies that $p_J+p_{J'} < \frac{1}{2}$, and thus
		\begin{align*}
			\frac{p_{J'}}{1-p_J-p_{J'}} < 1.
		\end{align*}
		On the other hand, 
		\begin{align*}
			\frac{y}{y-\frac{\delta}{n}} > 1.
		\end{align*}
		Therefore, there is no solution to \eqref{eq:lnplususefn2} and the function $\nu(y)$ has no critical points. The maximum of $\nu(y)$ must be attained at one of the two end points of the interval $\left[ \frac{1}{n}+\frac{\delta}{n}, 1-\frac{\delta}{n}\right].$ At $y =  \frac{1}{n}+\frac{\delta}{n}$, we see that 
		\begin{align*}
			\nu(y) = \beta \frac{1 + \delta}{n} + \frac{1}{n}\log(\frac{1}{n}) - \frac{1 + \delta}{n} \log(\frac{1 + \delta}{n}),
		\end{align*}
		and hence
		\begin{align*}
			e^{n\psi(x,y)} & \leq \left(\frac{p_{J}}{1-p_J-p_{J'}}\right)^{n - \delta} \left(\frac{p_{J'}}{1-p_J-p_{J'}} \right)^{1 + \delta} \\
			& \quad \left(\frac{1}{n}\right)  \left(1-\frac{\delta}{n}\right)^{\delta - n} \left(\frac{1 + \delta}{n}\right)^{1 + \delta}. 
		\end{align*}
		Clearly, \eqref{eq:lnplususefn1} is satisfied.
		At $y =1-\frac{\delta}{n}$, we see that
		\begin{align*}
			\nu(y) = \beta \left(1-\frac{\delta}{n}\right) + \left(1-\frac{2\delta}{n}\right)\log(1-\frac{2\delta}{n}) - \left(1-\frac{\delta}{n} \right) \log(1-\frac{\delta}{n}),
		\end{align*}
		and hence
		\begin{align*}
			e^{n\psi(x,y)} & \leq \left(\frac{p_{J}}{1-p_J-p_{J'}}\right)^{n - \delta} \left(\frac{p_{J'}}{1-p_J-p_{J'}} \right)^{n- \delta} \\
			& \quad \left(1-\frac{2\delta}{n}\right)^{n-2\delta} \left(1-\frac{\delta}{n}\right)^{\delta - n} \left(1-\frac{\delta}{n}\right)^{\delta - n}. 
		\end{align*}
		Lemma \ref{lem:pJ} implies that $p_J + p_{J'} < \frac{1}{2}$ and thus 
		\begin{align*}
			\frac{p_J + p_{J'}}{1-p_J-p_{J'}} < \frac{1}{2}. 
		\end{align*}
		Therefore, \eqref{eq:lnplususefn1} is satisfied.
		\item Along $y = 1-\frac{\delta}{n}$ and $\frac{1}{n}+\frac{\delta}{n} \leq  x \leq 1-\frac{\delta}{n}$, the analysis is nearly identical to the previous case after swapping $x$ and $y$. We therefore omit its proof to avoid repetition.
	\end{enumerate}
	
\end{proof}

We now show that the integral of $\phi(x,y)$ over $W_{n,1}$ is negligible. Recall that 
\begin{align*}
	W_{n,1} & = \left\{ (x,y) \in \mathbb{C}^2 \mid x = \frac{\delta}{n} + si, \, y = t + si, \right . \\ 
	& \hspace{1in} \left. \frac{\delta}{n} \leq t \leq 1 - \frac{\delta}{n}, \,  0 \leq s \leq \frac{\xi}{\sqrt{n}}  \right\}.
\end{align*}

\begin{lemma}
	\label{lem:ejjprime_wn1}
	There exists positive constants $C$ and $\theta <1$ such that
	\begin{align*}
		\left\vert \int_{W_{n,1}} \phi(x,y) \, dx dy \right\vert  \leq C e^{2\pi \sqrt{n}\xi} \left(\frac{e \theta}{n(1-p_J -p_{J'})}\right)^n.
	\end{align*}
\end{lemma}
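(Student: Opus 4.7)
The plan is to mimic the triangle-inequality approach used in Lemmas \ref{lem:ejjprime_un} and \ref{lem:ejjprime_lnplus}: bound
\begin{align*}
\left|\int_{W_{n,1}}\phi(x,y)\,dx\,dy\right| \leq \mathrm{Area}(W_{n,1})\cdot\max_{W_{n,1}}|\phi(x,y)|.
\end{align*}
Parametrizing $W_{n,1}$ by $(t,s)$ with $x=\tfrac{\delta}{n}+si$, $y=t+si$, the magnitude of $dx\wedge dy$ equals $|dt\wedge ds|$, and so $\mathrm{Area}(W_{n,1}) = O(1/\sqrt{n})$, a negligible polynomial factor.

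The heart of the argument is a pointwise bound on $|\phi(x,y)|$. Set $w_1=nx$, $w_2=ny$, $w_3=n(1-x-y)$; on $W_{n,1}$ their real parts are $\delta$, $nt$, and $n(1-t)-\delta$ respectively, all non-negative, and $|\textnormal{Im}(w_1)|,|\textnormal{Im}(w_2)|\leq\sqrt{n}\,\xi$ while $|\textnormal{Im}(w_3)|\leq 2\sqrt{n}\,\xi$. On the subregion where $\textnormal{Re}(w_3)$ is too small for Stirling to apply directly (i.e.\ $t$ close to $1-\tfrac{\delta}{n}$), invoke the recursive identity \eqref{eq:gammarecursive} exactly as in Lemma \ref{lem:ejjprime_un}; otherwise apply Stirling \eqref{eq:gammaapprox} to each $\Gamma(w_i+1)$. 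The crucial input is Lemma \ref{lem:realpart}, which yields $\textnormal{Re}(w\log w)\geq \textnormal{Re}(w)\log\textnormal{Re}(w)-\tfrac{\pi}{2}|\textnormal{Im}(w)|$ for each $w_i$. Inverting Stirling, $1/|\Gamma(w_i+1)|$ picks up an extra factor $e^{(\pi/2)|\textnormal{Im}(w_i)|}$, and summing over $i=1,2,3$ produces the excess factor $e^{(\pi/2)(\sqrt{n}\xi+\sqrt{n}\xi+2\sqrt{n}\xi)}=e^{2\pi\sqrt{n}\xi}$ appearing in the stated bound.

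Collecting what remains and extracting the $n^n$ from the $\textnormal{Re}(w_i)\log n$ contributions, one is left with $|\phi(x,y)|$ bounded by a polynomial in $n$ times $e^{2\pi\sqrt{n}\xi}\,\tfrac{e^n}{n^n}\,e^{n\,\psi(\delta/n,\,t)}$, where
\begin{align*}
\psi(x,y)=\alpha x+\beta y-x\log x-y\log y-(1-x-y)\log(1-x-y)
\end{align*}
is the same function whose interior saddle at $(p_J,p_{J'})$ drives the leading asymptotics in Lemma \ref{lem:ejjprime}. Elementary calculus shows the restriction $t\mapsto\psi(\tfrac{\delta}{n},t)$ on $[\tfrac{\delta}{n},1-\tfrac{\delta}{n}]$ is strictly concave with unique maximum at $t_\ast=p_{J'}(1-\tfrac{\delta}{n})/(1-p_J)$, at which the value converges to $-\log(1-p_J-p_{J'})+\log(1-p_J)$ as $n\to\infty$. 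This is strictly less than $\psi(p_J,p_{J'})=-\log(1-p_J-p_{J'})$ by exactly $\log(1-p_J)<0$, yielding the bound with $\theta=1-p_J<1$; the endpoints give strictly smaller values and pose no further difficulty.

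The main obstacle will be the bookkeeping required to glue together the Stirling estimate and the recursive-formula patch for $w_3$ uniformly in $t$, and to verify that the $O(\delta/n)$ perturbations do not spoil the strict inequality at $t_\ast$; both are routine modifications of the corresponding steps in Lemma \ref{lem:ejjprime_un}.
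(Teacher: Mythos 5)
Your proposal is correct and follows essentially the same route as the paper: triangle inequality with the $O(\xi/\sqrt{n})$ area of $W_{n,1}$, the recursion \eqref{eq:gammarecursive} to make Stirling \eqref{eq:gammaapprox} applicable to the third Gamma factor, Lemma \ref{lem:realpart} to produce the $e^{2\pi\sqrt{n}\xi}$ excess, and then a one-dimensional concave maximization in $t$ whose maximizer $t_\ast \approx p_{J'}(1-\delta/n)/(1-p_J)$ yields the value $\log\frac{1-p_J}{1-p_J-p_{J'}}$, i.e.\ $\theta$ essentially $1-p_J$ (the paper's $\nu_n(t_0)$ computation gives exactly this). The only cosmetic difference is that the paper applies the Gamma-recursion shift uniformly rather than only near $t = 1-\delta/n$, and, as in the paper, one should take $\theta$ slightly larger than $1-p_J$ to absorb the polynomial prefactors.
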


\begin{proof}
	As before, the triangle inequality yields that
	\begin{align*}
		\left\vert \int_{W_{n,1}} \phi(x,y) \, dx dy \right\vert
		& \leq  \int_{W_{n,1}} \left\vert \phi(x,y) \right\vert  \, dx dy \\
		& \leq \max_{W_{n,1}} \left\vert \phi(x,y) \right\vert \int_{W_{n,1}} 1 \, \vert dx dy \vert.  
	\end{align*}
	Since
	\begin{align*}
		\int_{W_{n,1}} 1 \, \vert dx dy \vert & =  \int_{\frac{\delta}{n}}^{1- \frac{\delta}{n}} \int_{0}^{\frac{\xi}{\sqrt{n}}} 1 \, ds dt 
		\leq \frac{\xi}{\sqrt{n}},
	\end{align*}
	it is enough to show that there exists positive constants $C_0$ and $\theta <1$ such that
	\begin{align*}
		\max_{W_{n,1}} \left\vert \phi(x,y) \right\vert \leq C_0  e^{2\pi \sqrt{n}\xi}  \left(\frac{e \theta}{n(1-p_J -p_{J'})}\right)^n.
	\end{align*}
	
	Let $(x,y) \in W_{n,1}$. It follows from the recursive formula \eqref{eq:gammarecursive} and the Stirling's approximation \eqref{eq:gammaapprox} of the Gamma function that
	\begin{align*}
		\vert \phi(x,y) \vert & \leq \frac{n \vert \sqrt{x} - \sqrt{p_J} \vert \vert \sqrt{y} - \sqrt{p_{J'}}\vert \vert n(1-x-y)+1 \vert e^{\textnormal{Re}(\alpha nx + \beta ny )}}{\vert \Gamma(nx + 1) \Gamma(ny + 1) \Gamma(n(1-x-y)+2)\vert } \\
		& \sim  \rho_n(x,y) e^{\textnormal{Re}(\psi_n(nx,ny))}
	\end{align*}
	where
	\begin{align*}
		\rho_n(x,y) &= \frac{\vert \sqrt{x} - \sqrt{p_J} \vert \vert \sqrt{y} - \sqrt{p_{J'}}\vert \vert n(1-x-y)+1 \vert}{(\sqrt{2\pi})^3 \vert \sqrt{xy} \sqrt{n(1-x-y)+1}\vert)},
	\end{align*}
	and
	\begin{align*}
		\psi_n(x,y) & = \alpha x + \beta y - x\log(x) - y \log(y) - (n-x-y + 1)\log(n-x-y + 1) + n + 1.
	\end{align*}
	Since $\rho_n(x,y)$ is clearly bounded above by a polynomial of $n$, it is then enough to show that there exists positive constants $C_1$ and $\theta <1$ such that
	\begin{align}
		e^{\textnormal{Re}(\psi_n(nx,ny))} \leq C_1  e^{2\pi \sqrt{n}\xi}  \left(\frac{e \theta}{n(1-p_J -p_{J'})}\right)^n. \label{eq:ejjwn1usefulfn1}
	\end{align}
	From Lemma \ref{lem:realpart}, we get that
	\begin{align*}
		\textnormal{Re}(\psi_n(nx,ny)) & \leq  \textnormal{Re}(\alpha nx + \beta ny) - \textnormal{Re}(nx)\log(\textnormal{Re}(nx) ) - \textnormal{Re}(ny)\log(\textnormal{Re}(ny)) \\
		& \quad - \textnormal{Re}(n(1-x-y)+1)\log( \textnormal{Re}(n(1-x-y) + 1)) + n + 1 \\
		& \quad + \frac{\pi}{2} \left( \vert \textnormal{Im}(nx) \vert + \vert \textnormal{Im}(ny) \vert + \vert \textnormal{Im}(n(1-x-y) + 1) \vert \right).
	\end{align*}
	Since $x = \frac{\delta}{n} + si$, $y = t + si$ with $\frac{\delta}{n} \leq t \leq 1 - \frac{\delta}{n}$ and $0 \leq s \leq \frac{\xi}{\sqrt{n}}$, we obtain
	\begin{align*}
		\textnormal{Re}(\psi_n(nx,ny))
		& \leq \alpha \delta + \beta n t - \delta \log \delta - nt \log(nt) \\
		& \quad - (n-\delta -nt + 1) \log(n-\delta -nt +1) + n + 1 + 2\pi \sqrt{n} \xi \\
		& = \alpha \delta + \beta n t - \delta \log \delta - (n-\delta+1)\log n - nt \log(t) \\
		& \quad - n(1-\frac{\delta}{n} -t +\frac{1}{n}) \log(1-\frac{\delta}{n} -t +\frac{1}{n}) + n + 1 + 2\pi \sqrt{n} \xi.
	\end{align*}
	For convenience, let us define
	\begin{align*}
		\nu_n(t) = \beta t - t \log(t) - \left(1-\frac{\delta}{n} -t +\frac{1}{n} \right) \log(1-\frac{\delta}{n} -t +\frac{1}{n}). 
	\end{align*}
	Then,
	\begin{align*}
		\textnormal{Re}(\psi_n(nx,ny)) 
		\leq  \alpha \delta - \delta \log \delta - (n-\delta+1)\log n  + n + 1 + 2\pi \sqrt{n} \xi  + n \nu_n(t).
	\end{align*}
	From elementary calculus, the maximum value of $\nu_n(t)$ over the positive real line occurs at
	\begin{align*}
		t_0 = \frac{p_{J'}}{1- p_J} \left(1-\frac{\delta}{n} + \frac{1}{n}\right).
	\end{align*}
	Furthermore, 
	\begin{align*}
		\nu_n(t_0) =  \left(1-\frac{\delta}{n} + \frac{1}{n}\right) \log(\frac{1-p_J}{1- p_J - p_{J'}}) - \left(1-\frac{\delta}{n} + \frac{1}{n}\right) \log\left(1-\frac{\delta}{n} + \frac{1}{n}\right).
	\end{align*}
	Hence,
	\begin{align*}
		e^{\textnormal{Re}(\psi_n(nx,ny))} \leq \frac{e^{\alpha \delta + 1}}{\delta^\delta} e^{2\pi \sqrt{n} \xi} \frac{e^n}{n^{n- \delta + 1}}  \left(\frac{1-p_J}{1- p_J - p_{J'}} \right)^{n - \delta + 1} \left(1-\frac{\delta}{n} + \frac{1}{n} \right)^{n - \delta + 1} 
	\end{align*}
	Clearly, \eqref{eq:ejjwn1usefulfn1} is satisfied and this completes the proof.
\end{proof}

\begin{lemma}
	\label{lem:ejjprime_wn3}
	There exists positive constants $C$ and $\theta <1$ such that
	\begin{align*}
		\left\vert \int_{W_{n,3}} \phi(x,y) \, dx dy \right\vert  \leq C e^{2\pi \sqrt{n}\xi} \left(\frac{e \theta}{n(1-p_J -p_{J'})}\right)^n.
	\end{align*}
\end{lemma}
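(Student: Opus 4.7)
The plan is to mimic the structure of Lemma \ref{lem:ejjprime_wn1} with the same triangle inequality reduction: bound $\left|\int_{W_{n,3}}\phi\, dxdy\right|$ by $\max_{W_{n,3}}|\phi(x,y)|\cdot\int_{W_{n,3}} 1\,|dxdy|$, and note that the volume factor is again at most $\xi/\sqrt{n}$. Thus it suffices to produce the claimed upper bound on $\max_{W_{n,3}}|\phi(x,y)|$. The essential difference from $W_{n,1}$ is that along $W_{n,3}$ we have $\textnormal{Re}(x)=1-\frac{\delta}{n}$ instead of $\frac{\delta}{n}$, so that $\textnormal{Re}(n(1-x-y)+1)=\delta - nt + 1$ is in general negative for $t$ away from $\frac{\delta}{n}$. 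Stirling's approximation cannot be applied directly to $\Gamma(n(1-x-y)+1)$.

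To circumvent this I would copy the trick used in Lemma \ref{lem:ejjprime_lnplus}: apply Euler's reflection formula \eqref{eq:euler} to rewrite $1/\Gamma(n(1-x-y)+1)$ as $\sin(\pi n(x+y-1))\,\Gamma(n(x+y-1)+1)/(\pi n(x+y-1))$, after which the Gamma factor has real part bounded away from zero (for $t$ bounded away from $\frac{\delta}{n}$) and Stirling can be applied; for the thin $t\in[\tfrac{\delta}{n},\tfrac{\delta}{n}+\tfrac{c}{n}]$ strip where $n(x+y-1)$ is small, the recursion \eqref{eq:gammarecursive} yields a polynomial bound on $1/\Gamma(n(1-x-y)+1)$ exactly as in Lemma \ref{lem:ejjprime_un}. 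The $\sin$ factor and the $\rho$-type prefactor $|\sqrt x-\sqrt{p_J}||\sqrt y-\sqrt{p_{J'}}|$ contribute only polynomially in $n$, so the exponential behavior of $|\phi|$ is captured by $e^{\textnormal{Re}(\psi_n(nx,ny))}$ for an appropriate $\psi_n$.

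Next I would bound $\textnormal{Re}(\psi_n)$ by Lemma \ref{lem:realpart}, which turns the imaginary parts of $x$, $y$ and $1-x-y$ into a contribution $\frac{\pi}{2}(|\textnormal{Im}(nx)|+|\textnormal{Im}(ny)|+|\textnormal{Im}(n(x+y-1))|)\leq 2\pi\sqrt{n}\,\xi$, yielding the required factor $e^{2\pi\sqrt{n}\xi}$. Substituting $\textnormal{Re}(x)=1-\frac{\delta}{n}$ and $\textnormal{Re}(y)=t$ reduces the real part of $\psi_n$ to an explicit function $\nu_n(t)$ on $[\tfrac{\delta}{n},1-\tfrac{\delta}{n}]$ plus boundary terms that depend only polynomially on $n$. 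I would then carry out elementary calculus on $\nu_n(t)$: setting $\nu_n'(t)=0$ gives a unique interior critical point $t_0=\frac{p_{J'}(1-\frac{\delta}{n}+\frac{1}{n})}{1-p_J}$ analogous to the one in Lemma \ref{lem:ejjprime_wn1}, and $\nu_n''(t)<0$ makes it the global maximum. Plugging in $t_0$ gives an exponential of the form $\left(\frac{1-p_J}{1-p_J-p_{J'}}\right)^{n-\delta+1}$ times polynomial corrections.

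The main obstacle will be verifying that this maximal value of $e^{n\nu_n(t_0)}$ combined with the $1/n^{n-\delta+1}$ factor from Stirling (coming from the $(n-\delta+1)\log n$ term) and the polynomial prefactors is indeed bounded by $Ce^{2\pi\sqrt n\xi}\left(\frac{e\theta}{n(1-p_J-p_{J'})}\right)^n$ for some $\theta<1$. Concretely, one needs $\frac{1-p_J}{1-p_J-p_{J'}}<\frac{1}{1-p_J-p_{J'}}\cdot\theta^{-1}$ for suitable $\theta<1$, which follows from $1-p_J<1$ after absorbing the slack; the remaining exponential in $\delta$ is a constant independent of $n$. An analogous check at the two interval endpoints $t=\tfrac{\delta}{n}$ and $t=1-\tfrac{\delta}{n}$ handles the boundary cases of the maximization (both producing strictly smaller contributions). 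Since the estimate on the Gamma recursion strip behaves exactly as in Lemma \ref{lem:ejjprime_un}, no new technique is needed there, and assembling these estimates yields the claimed bound.
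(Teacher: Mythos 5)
Your overall skeleton matches the paper's: triangle inequality plus the volume factor $\xi/\sqrt{n}$, a split of $W_{n,3}$ into a bulk piece where Euler's reflection formula \eqref{eq:euler} plus Stirling applies and a thin strip near $t=\frac{\delta}{n}$ handled by the recursion \eqref{eq:gammarecursive}, and Lemma \ref{lem:realpart} to produce the $e^{2\pi\sqrt{n}\xi}$ factor. However, the central maximization step in the bulk region is wrong. After the reflection formula the phase is $\psi(x,y)=\alpha x+\beta y-x\log x-y\log y+(x+y-1)\log(x+y-1)$, so with $\textnormal{Re}(x)=1-\frac{\delta}{n}$ and $\textnormal{Re}(y)=t$ the one-variable function you must maximize is (up to terms depending only on $\delta$)
\begin{align*}
\nu(t)=\beta t-t\log t+\left(t-\tfrac{\delta}{n}\right)\log\left(t-\tfrac{\delta}{n}\right),
\end{align*}
which satisfies $\nu''(t)=\frac{\delta/n}{t\left(t-\delta/n\right)}>0$: it is \emph{convex}, and $\nu'(t)=0$ would force $\frac{p_{J'}}{1-p_J-p_{J'}}=\frac{t}{t-\delta/n}>1$, impossible since $p_J+p_{J'}<\frac12$ by Lemma \ref{lem:pJ}. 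So there is no interior critical point at all, and the maximum is attained at the endpoints $t=\frac{1+\delta}{n}$ and $t=1-\frac{\delta}{n}$. Your claimed concave interior maximum at $t_0=\frac{p_{J'}}{1-p_J}\left(1-\frac{\delta}{n}+\frac{1}{n}\right)$ with value of order $\left(\frac{1-p_J}{1-p_J-p_{J'}}\right)^{n}$ is the $W_{n,1}$ computation of Lemma \ref{lem:ejjprime_wn1}, where $\textnormal{Re}(x)=\frac{\delta}{n}$ and the third term enters as $-(1-x-y)\log(1-x-y)$, i.e.\ with the opposite concavity; transplanting it here ignores the sign flip the reflection formula produces. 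In particular your closing assertion that the two endpoints give "strictly smaller contributions" is exactly backwards: the endpoint evaluation is where the bound actually comes from, yielding factors such as $\left(\frac{p_Jp_{J'}}{(1-p_J-p_{J'})^2}\right)^{n-\delta}$ at $t=1-\frac{\delta}{n}$, after which $p_J+p_{J'}<\frac12$ gives the needed $\theta<1$.

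The error is repairable without new ideas — redo the elementary calculus on the correct $\nu(t)$, recognize convexity, and evaluate at the two endpoints (and similarly carry out, rather than wave at, the analysis of $\eta(t)=\beta t-t\log t+t$ on the shrinking interval for the thin strip, where the endpoint evaluation is again what closes the bound) — but as written the proposal's exponential bookkeeping rests on a maximum that the function does not have, so the proof does not go through in its current form.
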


\begin{proof}
	From the triangle inequality,
	\begin{align*}
		\left\vert \int_{W_{n,3}} \phi(x,y) \, dx dy \right\vert
		& \leq  \int_{W_{n,3}} \left\vert \phi(x,y) \right\vert  \, \vert dx dy \vert \\
		& \leq \max_{W_{n,3}} \left\vert \phi(x,y) \right\vert \int_{W_{n,3}} 1 \, \vert dx dy \vert.  
	\end{align*}
	Since
	\begin{align*}
		\int_{W_{n,3}} 1 \, \vert dx dy \vert & =  \int_{\frac{\delta}{n}}^{1- \frac{\delta}{n}} \int_{0}^{\frac{\xi}{\sqrt{n}}} 1 \, ds dt 
		\leq \frac{\xi}{\sqrt{n}},
	\end{align*}
	it suffices to prove that there exists positive constants $C_0$ and $\theta <1$ such that
	\begin{align}
		\max_{W_{n,3}} \left\vert \phi(x,y) \right\vert \leq C_0  e^{2\pi \sqrt{n}\xi}  \left(\frac{e \theta}{n(1-p_J -p_{J'})}\right)^n. \label{eq:wn3ufn1}
	\end{align}
	Let us further divide $W_{n,3}$ into two parts
	\begin{align*}
		W_{n,3}^- &= \{  (x, y) \in W_{n,3} \mid \textnormal{Re}(x) + \textnormal{Re}(y) \leq 1 + \frac{1}{ n}\}, \\
		W_{n,3}^+ &= \{  (x, y) \in W_{n,3} \mid \textnormal{Re}(x) + \textnormal{Re}(y) \geq 1 + \frac{1}{ n}\}.
	\end{align*}
	It is then enough to show that the maximum of $\vert \phi(x,y) \vert$ over each of the two regions satisfies \eqref{eq:wn3ufn1}. 
	
	Let $(x,y) \in W_{n,3}^+$ and write
	\begin{align*}
		x &= 1- \frac{\delta}{n} + si, \\
		y &= t + si
	\end{align*}
	with $ \frac{1 + \delta}{n} \leq t \leq 1- \frac{\delta}{n}$ and $0 \leq s \leq \frac{\xi}{\sqrt{n}}$.
	It follows from Euler's reflection formula \eqref{eq:euler} and the Stirling's approximation \eqref{eq:gammaapprox} that
	\begin{align*}
		\vert \phi(x,y) \vert \sim \frac{e^n}{n^n} \rho(x,y) e^{n \textnormal{Re} \psi(x,y)}
	\end{align*}
	for some $\rho(x,y)$ bounded above by a polynomial of $n$ and
	\begin{align*}
		\psi(x,y) = \alpha x + \beta y - x\log x- y\log y  + (x+y-1)\log(x + y-1). 
	\end{align*}
	Therefore, it suffices to prove that there exists positive constants $C_1$ and $\theta <1$ such that
	\begin{align}
		\max_{W_{n,3}^+} e^{n\textnormal{Re}(\psi(x,y))} \leq C_1  e^{2\pi \sqrt{n}\xi}  \left(\frac{\theta}{1-p_J -p_{J'}}\right)^n. \label{eq:wn3ufn2}
	\end{align}
	From Lemma \ref{lem:realpart}, we see that
	\begin{align*}
		\textnormal{Re}(\psi(x,y)) &\leq \textnormal{Re}(\alpha x + \beta y) - \textnormal{Re}(x) \log(\textnormal{Re}(x)) - \textnormal{Re}(y) \log(\textnormal{Re}(y)) \\
		& \quad + \textnormal{Re}(x+y -1) \log(\textnormal{Re}(x+y -1)) \\
		& \quad \frac{\pi}{2} \left( \vert \textnormal{Im}(x) \vert + \vert \textnormal{Im}(y) \vert + \vert \textnormal{Im}(x+y-1) \vert \right) \\
		& \leq \alpha \left(1- \frac{\delta}{n}\right) + t\beta - \left(1- \frac{\delta}{n}\right) \log(1- \frac{\delta}{n}) - t\log t \\
		& \quad + \left(t - \frac{\delta}{n}\right)\log(t - \frac{\delta}{n}) + \frac{2\pi \xi }{\sqrt{n}}.
	\end{align*}
	Define
	\begin{align*}
		\nu(t) = t \beta  - t \log(t) + \left(t - \frac{\delta}{n}\right)\log(t - \frac{\delta}{n}).
	\end{align*} 	
	One can easily verify that
	\begin{align*}
		\nu''(t) = \frac{\frac{\delta}{n} }{t \left(t - \frac{\delta}{n}\right)} >0
	\end{align*}
	for all $\frac{1 + \delta}{n} \leq t \leq 1 - \frac{\delta}{n}$.
	Therefore, the maximum of $\nu(t)$ must be attained at the boundary points. At $t = \frac{1 + \delta}{n}$, we have
	\begin{align*}
			\nu(t) = \frac{1 + \delta}{n} \beta  - \frac{1 + \delta}{n} \log(\frac{1 + \delta}{n}) + \frac{1}{n} \log(\frac{1}{n} ),
	\end{align*}
	and thus
	\begin{align*}
		e^{n\textnormal{Re}(\psi(x,y))} & \leq \left(\frac{p_J}{1 - p_J - p_{J'}}\right)^{n-\delta} \left(\frac{p_{J'}}{1 - p_J - p_{J'}}\right)^{1+\delta} \left(1 - \frac{\delta}{n}\right)^{\delta - n} \\
		& \quad \left(\frac{1 + \delta}{n} \right)^{-1-\delta} \frac{e^{2\pi \sqrt{n}\xi}}{n}.
	\end{align*}
	Clearly, \eqref{eq:wn3ufn2} is satisfied.
	At $t = 1 - \frac{\delta}{n}$, we have
	\begin{align*}
		\nu(t) = \left(1 - \frac{\delta}{n}\right) \beta  - \left(1 - \frac{\delta}{n}\right) \log(1 - \frac{\delta}{n}) + \left(1 - \frac{2\delta}{n} \right) \log(1 - \frac{2\delta}{n}),
	\end{align*}
	and thus
	\begin{align*}
		e^{n\textnormal{Re}(\psi(x,y))} & \leq \left(\frac{p_J}{1 - p_J - p_{J'}}\right)^{n-\delta} \left(\frac{p_{J'}}{1 - p_J - p_{J'}}\right)^{n-\delta} \\
		& \quad \left(1 - \frac{\delta}{n} \right)^{-2n + 2\delta} \left( 1 - \frac{2\delta}{n}\right)^{n-2\delta} e^{2\pi \sqrt{n}\xi}.
	\end{align*}
	Since Lemma \ref{lem:pJ} implies that $p_J + p_{J'} < \frac{1}{2}$, we get that
	\begin{align*}
		\frac{p_J p_{J'}}{1 - p_J - p_{J'}} < 1.
	\end{align*}
	Therefore, \eqref{eq:wn3ufn2} is satisfied.
	
	Let $(x,y) \in W_{n,3}^-$ and write
	\begin{align*}
		x &= 1- \frac{\delta}{n} + si, \\
		y &= t + si
	\end{align*}
	with $\frac{\delta}{n} \leq t \leq \frac{1 + \delta}{n} $ and $0 \leq s \leq \frac{\xi}{\sqrt{n}}$.
	It follows from Stirling's approximation \eqref{eq:gammaapprox} that
	\begin{align*}
		\vert \phi(x,y) \vert & \leq \frac{n \vert \sqrt{x} - \sqrt{p_J} \vert \vert \sqrt{y} - \sqrt{p_{J'}} \vert}{\vert \Gamma(n(1-x-y) + 1) \vert} \frac{e^{\textnormal{Re}(\alpha x + \beta y)}}{\vert \Gamma(nx + 1) \vert \vert \Gamma(ny + 1) \vert} \\
		& \sim \frac{e^{-n \log n \textnormal{Re}(x + y)}}{\vert \Gamma(n(1-x-y) + 1) \vert} \rho_d(x,y) e^{n\textnormal{Re} (\psi_d(x,y))}
	\end{align*}
	for some $\rho_d(x,y)$ bounded above by a polynomial of $n$ and
	\begin{align*}
		\psi_d(x,y) = \alpha x + \beta y - x\log x- y\log y  + x +y.
	\end{align*}
	Since $\textnormal{Re}(x + y) \leq 1$, we get
	\begin{align*}
		e^{-n \log n \textnormal{Re}(x + y)} \leq \frac{1}{n^n}.
	\end{align*}
	Using the recursive formula \eqref{eq:gammarecursive} and the Stirling's approximation \eqref{eq:gammaapprox}, we obtain
	\begin{align*}
		\frac{1}{\vert  \Gamma(n(1-x-y) + 1) \vert} & = \frac{\vert n(1-x-y) + 1\vert \vert n(1-x-y) + 2 \vert }{ \vert \Gamma(n(1-x-y) + 2 + 1) \vert} \\
		& \sim \rho_g(x,y) e^{\textnormal{Re}(\psi_g(x,y))}
	\end{align*}
	for some $\rho_g(x,y)$ bounded above by a polynomial of $n$ and 
	\begin{align*}
		\psi_g(x,y) = n(1-x-y) + 2 - \left(n(1-x-y) + 2\right) \log(n(1-x-y) + 2).
	\end{align*}
	Since
	\begin{align*}
		\textnormal{Re}(\psi_g(x,y)) &\leq \textnormal{Re}(n(1-x-y) + 2) - \textnormal{Re}(n(1-x-y) + 2) \log(\textnormal{Re}(n(1-x-y) + 2)) \\
		&\quad + \frac{\pi}{2} \vert \textnormal{Im}(n(1-x-y) + 2) \vert \\
		& \leq (\delta - nt) + 2 - ((\delta - nt) + 2 ) \log((\delta - nt) + 2) + \pi \sqrt{n} \xi \\
		& \leq 1 + \pi \sqrt{n} \xi,
	\end{align*}
	we get that
	\begin{align*}
		e^{\textnormal{Re}(\psi_g(x,y))} \leq e^{1 + \pi \sqrt{n} \xi}.
	\end{align*}
	Therefore, it suffices to prove that there exists positive constants $C_2$ and $\theta <1$ such that
	\begin{align}
		\max_{W_{n,3}^-} e^{n\textnormal{Re}(\psi_d(x,y))} \leq C_2  e^{\pi \sqrt{n}\xi}  \left(\frac{e \theta}{1-p_J -p_{J'}}\right)^n. \label{eq:wn3ufn3}
	\end{align}
	From Lemma \ref{lem:realpart}, we obtain
	\begin{align*}
		\textnormal{Re}(\psi_d(x,y) ) & \leq 	\textnormal{Re}(\alpha x + \beta y) - 	\textnormal{Re}(x)\log(	\textnormal{Re}(x) ) - 	\textnormal{Re}(y)\log(	\textnormal{Re}(y))  \\
		& \quad + \textnormal{Re}(x + y) + \frac{\pi}{2} \left( \vert \textnormal{Im}(x) \vert +  \vert \textnormal{Im}(y) \vert \right) \\
		& \leq \alpha \left(1 - \frac{\delta}{n} \right) + \beta t - \left(1 - \frac{\delta}{n} \right) \log\left(1 - \frac{\delta}{n} \right)  - t\log(t) \\
		& \quad + \left(1 - \frac{\delta}{n} \right)  + t + \pi \frac{\xi}{\sqrt{n}}.
	\end{align*}
	Now let us define
	\begin{align*}
		\eta(t) = t \beta - t \log(t) + t.
	\end{align*}
	Since the domain of $t$, given by $\left[ \frac{\delta}{n}, \frac{1 + \delta}{n} \right]$ shrinks as $n$ grows, the maximum of $\eta(t)$ over this interval must be attained at its end points. At $t = \frac{\delta}{n}$, we have 
	\begin{align*}
		\eta(t) = \frac{\delta}{n} \beta - \frac{\delta}{n} \log(\frac{\delta}{n}) + \frac{\delta}{n},
	\end{align*}
	and thus
	\begin{align*}
		e^{n \textnormal{Re}(\psi_d(x,y) )} & \leq \left(\frac{p_J}{1 - p_J - p_{J'}}\right)^{n-\delta} \left(\frac{p_{J'}}{1 - p_J - p_{J'}}\right)^{\delta} \\
		& \quad \left(1 - \frac{\delta}{n} \right)^{\delta - n} \left(\frac{\delta}{n} \right)^{-\delta} e^{n - \delta} e^\delta e^{\pi \sqrt{n} \xi}. 
	\end{align*}
	Clearly, \eqref{eq:wn3ufn3} is met. 
	At $t =\frac{1 + \delta}{n}$, we have
	\begin{align*}
		\eta(t) = \frac{1 + \delta}{n}  \beta - \frac{1 + \delta}{n} \log(\frac{1 + \delta}{n}) + \frac{1 + \delta}{n},
	\end{align*}
	and thus
	\begin{align*}
		e^{n \textnormal{Re}(\psi_d(x,y) )} & \leq \left(\frac{p_J}{1 - p_J - p_{J'}}\right)^{n-\delta} \left(\frac{p_{J'}}{1 - p_J - p_{J'}}\right)^{1+\delta} \\
		& \quad \left(1 - \frac{\delta}{n} \right)^{\delta - n} \left( \frac{1 + \delta}{n} \right)^{-1-\delta} e^{n - \delta} e^{1 +\delta} e^{\pi \sqrt{n} \xi}. 
	\end{align*}
	Clearly, \eqref{eq:wn3ufn3} is met.  This completes the proof of Lemma \ref{lem:ejjprime_wn3}. 
\end{proof}

We now turn to examine the integral of $h_n(x,y)$ over $\hat{\gamma}_{n, i} \times \hat{\gamma}_{n, j}$ for a few representative cases in $i, j \in \{1, 2, 3, 4\}$. 

\begin{lemma}
	\label{lem:ejjprime_rn12}
	There exists positive constants $C$ and $\theta <1$ such that
	\begin{align*}
		\left\vert \int_{\hat{\gamma}_{n, 1} \times \hat{\gamma}_{n, 2}} h_n(x,y) \, dx dy \right\vert \leq C e^{2\pi \sqrt{n}\xi} \left(\frac{e \theta}{n(1-p_J -p_{J'})}\right)^n.
	\end{align*}
\end{lemma}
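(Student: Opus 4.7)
The plan is to follow the template of Lemma \ref{lem:ejjprime_wn1}. First I would apply the triangle inequality to reduce the integral to the product of $\max |h_n(x,y)|$ over $\hat{\gamma}_{n, 1} \times \hat{\gamma}_{n, 2}$ and the measure of that contour product, which is bounded above by $\tfrac{2\xi}{\sqrt{n}}$. Writing $h_n(x,y) = \phi(x,y) \cot(n\pi x)\cot(n\pi y)$, the cotangent factors are handled quickly: Lemma \ref{lem:cotan} gives $|\cot(n\pi x)|$ uniformly bounded on $\hat{\gamma}_{n, 1}$ (where $\textnormal{Im}(x) = \xi/\sqrt{n}$), and since $\textnormal{Re}(y) = \delta/n \notin \mathbb{Z}/n$ along $\hat{\gamma}_{n, 2}$, Lemma \ref{lem:cotan2} yields $|\cot(n\pi y)| \leq 1 + 2/\sqrt{1-\cos(2\pi\delta)^2}$, also bounded uniformly in $n$. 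Hence it suffices to show
\begin{align*}
\max_{\hat{\gamma}_{n, 1} \times \hat{\gamma}_{n, 2}} |\phi(x, y)| \leq C_0\, e^{2\pi\sqrt{n}\xi} \left(\frac{e\theta}{n(1-p_J-p_{J'})}\right)^n
\end{align*}
for some $C_0 > 0$ and $\theta < 1$ independent of $n$.

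To bound $|\phi(x,y)|$, I would parametrize $x = t_x + \tfrac{\xi}{\sqrt n}\, i$ with $t_x \in [\delta/n, 1-\delta/n]$ and $y = \tfrac{\delta}{n} + s\, i$ with $s \in [-\xi/\sqrt n, \xi/\sqrt n]$. The key difficulty is that $\textnormal{Re}(n(1-x-y)) = n - nt_x - \delta$ ranges over $[0, n-2\delta]$ and may be arbitrarily small near $t_x = 1 - \delta/n$, so Stirling cannot be applied directly there. Mirroring the treatment of $W_{n, 3}^\pm$ in Lemma \ref{lem:ejjprime_wn3}, I would split the horizontal segment into the region where $\textnormal{Re}(1-x-y)$ is bounded below by a fixed positive constant (say $t_x \leq 1 - c/n$ for a suitable $c$), where Stirling's approximation \eqref{eq:gammaapprox} applies to all three Gamma factors; and the complementary sliver near $t_x = 1 - \delta/n$, where the recursive formula \eqref{eq:gammarecursive} rewrites $1/\Gamma(n(1-x-y)+1)$ as $(n(1-x-y)+1)(n(1-x-y)+2)/\Gamma(n(1-x-y)+3)$, allowing Stirling to be applied to the shifted Gamma and absorbing the linear factor into a polynomial prefactor.

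In either subregion, applying Lemma \ref{lem:realpart} to the real parts of $nx\log(nx)$, $ny\log(ny)$, and $(n-nx-ny+1)\log(n-nx-ny+1)$ converts the imaginary-part contributions into an overall $e^{2\pi\sqrt{n}\xi}$ factor and produces an estimate of the shape $|\phi(x,y)| \leq \rho(x,y)\, e^{2\pi\sqrt{n}\xi}\, e^{n\nu(t_x)}\, e^n / n^n$, where $\rho$ is polynomial in $n$ and $\nu$ is an explicit real-variable function on $[\delta/n, 1-\delta/n]$. Maximizing $\nu$ by elementary calculus (locating its unique critical point, if any, and otherwise comparing the endpoints) yields a leading exponential of the form $((1 - p_{J'})/(1 - p_J - p_{J'}))^n$ at the interior critical point, and analogous ratios at $t_x = \delta/n$ and $t_x = 1-\delta/n$. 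Since $1 - p_{J'} < 1$ strictly (as $p_{J'} > 0$), one can take $\theta = 1 - p_{J'}$ and obtain the required bound after absorbing the $(1-p_J-p_{J'})^{-n}$ denominator.

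The main obstacle is the bookkeeping: writing down $\nu$ carefully on each subregion, verifying by elementary calculus that the three candidate maxima (interior critical point and two endpoints) each give an exponential base smaller than $1/(1-p_J-p_{J'})$ after multiplying by the reference factor $e^n/(n^n(1-p_J-p_{J'})^n)$, and confirming that the resulting subexponential prefactors remain at most polynomial in $n$ times $e^{2\pi\sqrt{n}\xi}$. The endpoint $t_x = 1 - \delta/n$ will be the most delicate, because the Gamma singularity and the boundary of the maximization coincide there; the same $W_{n,3}^-$ versus $W_{n,3}^+$ dichotomy used in Lemma \ref{lem:ejjprime_wn3} provides the template for pushing it through.
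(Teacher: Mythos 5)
Your proposal follows essentially the same route as the paper's proof: triangle inequality plus the bounds of Lemmas \ref{lem:cotan} and \ref{lem:cotan2} on the cotangent factors, reduction to bounding $\max\vert\phi\vert$, a split of the contour product according to whether $\textnormal{Re}(x)+\textnormal{Re}(y)$ stays below $1$ (Stirling on all three Gamma factors, Lemma \ref{lem:realpart}, and a one-variable maximization yielding the base $\tfrac{1-p_{J'}}{1-p_J-p_{J'}}$) versus the sliver near $\textnormal{Re}(x)+\textnormal{Re}(y)\approx 1$, which the paper likewise handles by the recursive formula \eqref{eq:gammarecursive} as in the $W_{n,3}^-$ case of Lemma \ref{lem:ejjprime_wn3}. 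The argument is correct and matches the paper's.
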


\begin{proof}
	From the triangle inequality, we get
	\begin{align*}
		\left\vert \int_{\hat{\gamma}_{n, 1} \times \hat{\gamma}_{n, 2}} h_n(x,y) \, dx dy \right\vert & \leq \int_{\hat{\gamma}_{n, 1} \times \hat{\gamma}_{n, 2}} \vert h_n(x,y) \vert \, dx dy \\
		& \leq \max_{\hat{\gamma}_{n, 1} \times \hat{\gamma}_{n, 2}} \vert \phi(x,y) \vert \vert \cot(n\pi x) \vert \vert \cot(n\pi y) \vert \int_{\hat{\gamma}_{n, 1} \times \hat{\gamma}_{n, 2}} 1 \, \vert dx dy \vert. 
	\end{align*}
	From Lemma \ref{lem:cotan}, we see that for all $x \in \hat{\gamma}_{n, 1}$, 
	\begin{align*}
		\vert \cot(n\pi x) \vert  \sim 1.
	\end{align*} 
	From  Lemma \ref{lem:cotan2}, we see that for all $y \in \hat{\gamma}_{n, 2}$,
	\begin{align*}
		\vert \cot(n\pi y) \vert \leq  1 + \tfrac{2}{\sqrt{1-\cos(2\pi \delta)^2}}.
	\end{align*}
	Since 
	\begin{align*}
		\int_{\hat{\gamma}_{n, 1} \times \hat{\gamma}_{n, 2}} 1 \, \vert dx dy \vert \leq  2 \frac{\xi}{\sqrt{n}}, 
	\end{align*}
	it is therefore enough to prove that
	there exists positive constants $C_0$ and $\theta <1$ such that
	\begin{align*}
		\max_{\hat{\gamma}_{n, 1} \times \hat{\gamma}_{n, 2}} \vert \phi(x,y) \vert \leq C_0 e^{2\pi \sqrt{n}\xi} \left(\frac{e \theta}{n(1-p_J -p_{J'})}\right)^n.
	\end{align*}
	Let us divide $\hat{\gamma}_{n, 1} \times \hat{\gamma}_{n, 2}$ into two parts
	\begin{align*}
		\left(\hat{\gamma}_{n, 1} \times \hat{\gamma}_{n, 2}\right)^- = \left\{ (x,y) \in \hat{\gamma}_{n, 1} \times \hat{\gamma}_{n, 2} \mid \textnormal{Re}(x) + \textnormal{Re}(y) \leq 1 - \frac{1}{2n} \right\}, \\
		\left(\hat{\gamma}_{n, 1} \times \hat{\gamma}_{n, 2}\right)^+ = \left\{ (x,y) \in \hat{\gamma}_{n, 1} \times \hat{\gamma}_{n, 2} \mid \textnormal{Re}(x) + \textnormal{Re}(y) > 1 - \frac{1}{2n} \right\}.
	\end{align*}
	
	First, let $(x,y) \in \left(\hat{\gamma}_{n, 1} \times \hat{\gamma}_{n, 2}\right)^-$. Then, we write
	\begin{align*}
		x &= t + \frac{\xi}{\sqrt{n}} i, \\
		y &= \frac{\delta}{n} + si
	\end{align*}
	with $ \frac{\delta}{n} \leq t \leq 1 - \frac{1}{2n} - \frac{\delta}{n}$ and $-\frac{\xi}{\sqrt{n}}  \leq s \leq \frac{\xi}{\sqrt{n}}$. 
	Using Stirling's approximation \eqref{eq:gammaapprox}, we get that
	\begin{align*}
		\vert \phi(x,y) \vert = \frac{e^n}{n^n} \rho(x,y) e^{n \textnormal{Re}(\psi(x,y))}
	\end{align*}
	for some $\rho(x,y)$ bounded above by a polynomial of $n$ and
	\begin{align*}
		\psi(x,y) = \alpha x + \beta y - x\log x - y\log y - (1-x-y)\log(1-x-y).
	\end{align*}
	Hence,
	it suffices to prove that there exists positive constants $C_1$ and $\theta <1$ such that
	\begin{align}
		\max_{\left(\hat{\gamma}_{n, 1} \times \hat{\gamma}_{n, 2}\right)^-} e^{n\textnormal{Re}(\psi(x,y))} \leq C_1  e^{2\pi \sqrt{n}\xi}  \left(\frac{\theta}{1-p_J -p_{J'}}\right)^n. \label{eq:gamma12ufn2}
	\end{align}
	From Lemma \ref{lem:realpart}, we obtain that
	\begin{align*}
		\textnormal{Re}(\psi(x,y)) & \leq \textnormal{Re}(\alpha x + \beta y) -\textnormal{Re}(x)\log(\textnormal{Re}(x)) - \textnormal{Re}(y)\log(\textnormal{Re}(y))  \\
		 & \quad -\textnormal{Re}(1-x-y)\log(\textnormal{Re}(1-x-y)) + 
		 \frac{\pi }{2} \left( \vert \textnormal{Im}(x) \vert + \vert \textnormal{Im}(y) \vert + \vert \textnormal{Im}(1-x-y) \vert \right) \\
		 & \leq \alpha t + \beta \frac{\delta}{n} - t\log(t) - \frac{\delta}{n}  \log(\frac{\delta}{n} ) - \left(1-t-\frac{\delta}{n}  \right)\log(1-t-\frac{\delta}{n} )  +\frac{2\pi \xi}{\sqrt{n}}.
	\end{align*}
	Let us define 
	\begin{align*}
		\nu(t) = \alpha t - t \log(t) - \left(1-t-\frac{\delta}{n}  \right)\log(1-t-\frac{\delta}{n} ).
	\end{align*}
	From elementary calculus, we see that $\nu(t)$ has a unique critical point at
	\begin{align*}
		t_0 = \frac{p_J}{1-p_{J'}} \left(1-t-\frac{\delta}{n}  \right)
	\end{align*}
	by solving
	\begin{align*}
		\nu'(t) = \alpha + \log(1-t-\frac{\delta}{n} ) - \log(t) = 0.
	\end{align*}
	Since 
	\begin{align*}
		\nu''(t) = \frac{1 - \frac{\delta}{n} }{t \left(t - \left(1-\frac{\delta}{n} \right)\right)} < 0 
	\end{align*}
	for all $ t \leq 1-\frac{\delta}{n}$, we know that the critical point is in fact a global maximum. Therefore, 
	\begin{align*}
		\nu(t) \leq - \left(  1-\frac{\delta}{n} \right) \log( 1-\frac{\delta}{n} ) -  \left(  1-\frac{\delta}{n} \right) \log \left( \frac{ 1- p_J -p_{J'}}{1-p_{J'}}\right),
	\end{align*}
	and hence,
	\begin{align*}
	e^{n\textnormal{Re}(\psi(x,y))} \leq \left( \frac{p_J}{1- p_J -p_{J'}}\right)^\delta \left(\frac{\delta}{n} \right)^{-\delta} \left(1 - \frac{\delta}{n} \right)^{\delta - n } \left( \frac{1- p_{J'}}{1- p_J -p_{J'} }\right)^{n-\delta} e^{2\pi \xi \sqrt{n}}.
	\end{align*}
	Clearly, \eqref{eq:gamma12ufn2} is met. 
	
	The maximum value of the function  $\vert \phi(x,y) \vert$ over  $ \left(\hat{\gamma}_{n, 1} \times \hat{\gamma}_{n, 2}\right)^+$ can be analyzed using similar techniques as in the case of $W_{n,3}^-$ of Lemma \ref{lem:ejjprime_wn3}. 
\end{proof}

\begin{lemma}
	\label{lem:ejjprime_rn14}
	There exists positive constants $C$ and $\theta <1$ such that
	\begin{align*}
		\left\vert \int_{\hat{\gamma}_{n, 1} \times \hat{\gamma}_{n, 4}} h_n(x,y) \, dx dy \right\vert \leq C e^{2\pi \sqrt{n}\xi} \left(\frac{e \theta}{n(1-p_J -p_{J'})}\right)^n.
	\end{align*}
\end{lemma}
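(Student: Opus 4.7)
The plan is to adapt the approach of Lemma \ref{lem:ejjprime_rn12} together with the $W_{n,3}$ analysis in Lemma \ref{lem:ejjprime_wn3}. Referring to Figure \ref{fig:hat_gamma_n}, I take $\hat{\gamma}_{n,1}$ to be the top horizontal segment, parametrized as $x = t_1 + \tfrac{\xi}{\sqrt{n}} i$ with $\tfrac{\delta}{n} \leq t_1 \leq 1 - \tfrac{\delta}{n}$, and $\hat{\gamma}_{n,4}$ to be the right vertical segment, $y = 1 - \tfrac{\delta}{n} + s i$ with $-\tfrac{\xi}{\sqrt{n}} \leq s \leq \tfrac{\xi}{\sqrt{n}}$. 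After applying the triangle inequality and noting that the area of this product contour is bounded by $2\xi/\sqrt{n}$, it suffices to bound $|\phi(x,y)|\,|\cot(n\pi x)|\,|\cot(n\pi y)|$ uniformly. Lemma \ref{lem:cotan} gives $|\cot(n\pi x)| \sim 1$ on $\hat{\gamma}_{n,1}$, and Lemma \ref{lem:cotan2} yields $|\cot(n\pi y)| \leq 1 + 2/\sqrt{1-\cos(2\pi\delta)^2}$ on $\hat{\gamma}_{n,4}$, since $n\,\textnormal{Re}(y) = n - \delta$ is bounded away from the integers modulo $1$. The whole estimate therefore reduces to showing $\max|\phi(x,y)| \leq C_0\,e^{2\pi\sqrt{n}\xi}\bigl(\tfrac{e\theta}{n(1-p_J-p_{J'})}\bigr)^n$ on $\hat{\gamma}_{n,1}\times \hat{\gamma}_{n,4}$ for some $\theta<1$.

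The essential difficulty is that $\textnormal{Re}(x)+\textnormal{Re}(y) = t_1 + 1 - \tfrac{\delta}{n} \geq 1$, so $\textnormal{Re}(n(1-x-y)+1) \leq 1+\delta - nt_1$ is typically non-positive and Stirling's approximation for $\Gamma(n(1-x-y)+1)$ cannot be applied directly. Mirroring the $W_{n,3}$ splitting, I will decompose the contour into
\begin{align*}
(\hat{\gamma}_{n,1}\times \hat{\gamma}_{n,4})^- &= \{(x,y) : t_1 \in [\tfrac{\delta}{n},\tfrac{1+\delta}{n}]\}, \\
(\hat{\gamma}_{n,1}\times \hat{\gamma}_{n,4})^+ &= \{(x,y) : t_1 \in (\tfrac{1+\delta}{n},1-\tfrac{\delta}{n}]\}.
\end{align*}
On the first piece I will apply the recursive identity \eqref{eq:gammarecursive} a bounded number of times to shift $\Gamma(n(1-x-y)+1)$ so that the argument has real part at least $1$, where Stirling applies; combining with Stirling for $\Gamma(nx+1)$, $\Gamma(ny+1)$ and with Lemma \ref{lem:realpart}, the estimate reduces to supremizing a single-variable function on the interval $[\tfrac{\delta}{n},\tfrac{1+\delta}{n}]$ of length $O(1/n)$, whose maximum is attained at an endpoint. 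On the second piece I will instead invoke Euler's reflection formula \eqref{eq:euler} (after one recursive shift) to convert $1/\Gamma(n(1-x-y)+1)$ into $\sin(\pi n(x+y-1))\,\Gamma(n(x+y-1)+1)/\bigl(\pi n(x+y-1)\bigr)$, after which every Gamma factor lies in the Stirling regime and the dominant exponent is controlled by $\psi(x,y) = \alpha x + \beta y - x\log x - y\log y + (x+y-1)\log(x+y-1)$.

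The hard part is verifying that the endpoint values of the resulting single-variable function $\nu(t_1)$ beat the target decay rate. Since $\textnormal{Re}(y)$ is pinned at $1-\tfrac{\delta}{n}$, the analysis is the mirror of the $W_{n,3}^{\pm}$ computations in Lemma \ref{lem:ejjprime_wn3} with the roles of $x$ and $y$ (and correspondingly $p_J$ and $p_{J'}$) interchanged. In the $(\hat{\gamma}_{n,1}\times \hat{\gamma}_{n,4})^+$ case, a direct computation of $\nu''(t_1)$ shows it has fixed sign on the interval, so the supremum is attained at $t_1 = \tfrac{1+\delta}{n}$ or $t_1 = 1-\tfrac{\delta}{n}$, where one reads off explicit exponentials in $\tfrac{p_J}{1-p_J-p_{J'}}$ and $\tfrac{p_{J'}}{1-p_J-p_{J'}}$. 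The required contraction factor $\theta<1$ emerges from the bound $p_J+p_{J'}<\tfrac{1}{2}$ of Lemma \ref{lem:pJ}, which forces ratios such as $\tfrac{p_Jp_{J'}}{(1-p_J-p_{J'})^2}$ strictly below $1$; the $e^{2\pi\sqrt{n}\xi}$ factor is produced by the $\tfrac{\pi}{2}|\textnormal{Im}|$ terms in Lemma \ref{lem:realpart}. The $(\hat{\gamma}_{n,1}\times \hat{\gamma}_{n,4})^-$ case is easier because its parameter interval has length $O(1/n)$, so the necessary exponential decay follows directly from the endpoint evaluation. Assembling both estimates yields the claimed bound.
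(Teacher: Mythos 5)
Your proposal is correct and follows essentially the same route as the paper: the paper's proof of Lemma \ref{lem:ejjprime_rn14} likewise reduces, via the triangle inequality and the $\cot$ estimates of Lemmas \ref{lem:cotan} and \ref{lem:cotan2}, to bounding $\max_{\hat{\gamma}_{n,1}\times\hat{\gamma}_{n,4}}\vert\phi(x,y)\vert$, and then declares the rest ``entirely analogous to Lemma \ref{lem:ejjprime_wn3}.'' Your splitting at $\textnormal{Re}(x)+\textnormal{Re}(y)=1+\tfrac{1}{n}$, the use of the recursion \eqref{eq:gammarecursive} plus Stirling on one piece and Euler's reflection formula \eqref{eq:euler} on the other, and the role-swap of $x,y$ (hence $p_J,p_{J'}$) relative to $W_{n,3}^{\pm}$, with $\theta<1$ coming from $p_J+p_{J'}<\tfrac12$ of Lemma \ref{lem:pJ}, is precisely the intended argument.
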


\begin{proof}
	The proof proceeds similarly to that of Lemma \ref{lem:ejjprime_rn12}. From the triangle inequality and the approximations of the $\cot$ function, it suffices to prove that there exists positive $C$ and $\theta <1$ such that
	\begin{align*}
		\max_{\hat{\gamma}_{n, 1} \times \hat{\gamma}_{n, 4}} \vert \phi(x,y) \vert  \leq C e^{2\pi \sqrt{n}\xi} \left(\frac{e \theta}{n(1-p_J -p_{J'})}\right)^n.
	\end{align*}
	The rest of the proof is entirely analogous to Lemma \ref{lem:ejjprime_wn3}, and we omit the details to avoid repetition.
\end{proof}

\begin{lemma}
	\label{lem:ejjprime_rn22}
	There exists positive constants $C$ and $\theta <1$ such that
	\begin{align*}
		\left\vert \int_{\hat{\gamma}_{n, 2} \times \hat{\gamma}_{n, 2}} h_n(x,y) \, dx dy \right\vert \leq C e^{2\pi \sqrt{n}\xi} \left(\frac{e \theta}{n(1-p_J -p_{J'})}\right)^n.
	\end{align*}
\end{lemma}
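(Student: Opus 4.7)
The plan is to bound the integral by the product of the area of $\hat{\gamma}_{n,2} \times \hat{\gamma}_{n,2}$ and the maximum of the integrand, and then show that the maximum of $|h_n(x,y)|$ on this region satisfies the stated bound. This mirrors the strategy used in Lemmas \ref{lem:ejjprime_wn1} and \ref{lem:ejjprime_rn12}, but the situation here is simpler because both $x$ and $y$ sit on the left vertical segment of the contour.

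First I would apply the triangle inequality to write
\begin{align*}
\left\vert \int_{\hat{\gamma}_{n, 2} \times \hat{\gamma}_{n, 2}} h_n(x,y) \, dx dy \right\vert \le \max_{\hat{\gamma}_{n, 2} \times \hat{\gamma}_{n, 2}} \vert \phi(x,y) \vert \, \vert \cot(n \pi x) \vert \, \vert \cot(n \pi y) \vert \int_{\hat{\gamma}_{n, 2} \times \hat{\gamma}_{n, 2}} 1 \,\vert dx dy \vert.
\end{align*}
The area is bounded by $4\xi^2/n$. Since every $(x,y) \in \hat{\gamma}_{n,2} \times \hat{\gamma}_{n,2}$ satisfies $\textnormal{Re}(x) = \textnormal{Re}(y) = \tfrac{\delta}{n}$, Lemma \ref{lem:cotan2} bounds both cotangent factors by the constant $1 + 2/\sqrt{1-\cos(2\pi\delta)^2}$. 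Thus it suffices to show that
\begin{align*}
\max_{\hat{\gamma}_{n, 2} \times \hat{\gamma}_{n, 2}} \vert \phi(x,y) \vert \leq C_0 \, e^{2\pi \sqrt{n}\xi} \left(\frac{e \theta}{n(1-p_J -p_{J'})}\right)^n
\end{align*}
for some $C_0 > 0$ and $\theta \in (0,1)$.

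Next I would parameterize $(x,y) \in \hat{\gamma}_{n,2} \times \hat{\gamma}_{n,2}$ by $x = \tfrac{\delta}{n} + s_1 i$, $y = \tfrac{\delta}{n} + s_2 i$ with $|s_j| \leq \tfrac{\xi}{\sqrt{n}}$. Because $\textnormal{Re}(1 - x - y) = 1 - \tfrac{2\delta}{n} > 0$ for $n$ large, Stirling's approximation \eqref{eq:gammaapprox} applies directly to all three Gamma factors in \eqref{def:ejjprime_phifun}. This yields
\begin{align*}
\vert \phi(x,y) \vert \sim \frac{e^n}{n^n} \, \rho(x,y) \, e^{n \, \textnormal{Re}(\psi(x,y))}
\end{align*}
for some $\rho$ bounded above by a polynomial in $n$, where
\begin{align*}
\psi(x,y) = \alpha x + \beta y - x\log x - y \log y - (1-x-y)\log(1-x-y).
\end{align*}

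Finally I would apply Lemma \ref{lem:realpart} to each logarithmic term to obtain
\begin{align*}
\textnormal{Re}(\psi(x,y)) \leq \alpha \tfrac{\delta}{n} + \beta \tfrac{\delta}{n} - 2\tfrac{\delta}{n}\log\!\left(\tfrac{\delta}{n}\right) - \left(1-\tfrac{2\delta}{n}\right)\log\!\left(1-\tfrac{2\delta}{n}\right) + \tfrac{2\pi \xi}{\sqrt{n}},
\end{align*}
since $|\textnormal{Im}(x)|, |\textnormal{Im}(y)|, |\textnormal{Im}(1-x-y)| \leq 2\xi/\sqrt{n}$. Exponentiating and multiplying by $n$, the first three terms contribute at most a polynomial factor in $n$, the last term produces $e^{2\pi \xi \sqrt{n}}$, and the fourth term contributes a factor asymptotically equivalent to $e^{-n + 2\delta}$, which combined with $e^n/n^n$ gives $\sim e^{2\delta}/n^n$. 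Since $1-p_J - p_{J'}$ is bounded away from zero (Lemma \ref{lem:pJ}), there exists $\theta \in (0,1)$ such that the resulting estimate is dominated by $C_0 \, e^{2\pi \sqrt{n}\xi} \left(\tfrac{e\theta}{n(1-p_J-p_{J'})}\right)^n$, completing the proof.

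The main obstacle here is simply bookkeeping: carefully tracking polynomial versus exponential factors coming from Stirling and from the imaginary parts via Lemma \ref{lem:realpart}, so that the final estimate has exactly the form claimed. The analysis is actually easier than in Lemma \ref{lem:ejjprime_wn1} because the saddle point analysis is not needed; the real parts of $x$ and $y$ are pinned to $\tfrac{\delta}{n}$, so there is no interior critical point to worry about.
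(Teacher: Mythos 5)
Your proposal follows essentially the same route as the paper's proof: triangle inequality together with the (small) area of $\hat{\gamma}_{n,2}\times\hat{\gamma}_{n,2}$, the uniform cotangent bound of Lemma \ref{lem:cotan2} for both factors (the right tool here, since both segments cross the real axis), Stirling's approximation \eqref{eq:gammaapprox} for all three Gamma factors, and Lemma \ref{lem:realpart} to control $\textnormal{Re}(\psi)$ on the segment $\textnormal{Re}(x)=\textnormal{Re}(y)=\tfrac{\delta}{n}$, ending with a polynomial-versus-exponential comparison. One arithmetic slip in your final step: the fourth term exponentiates as $e^{-n\left(1-\frac{2\delta}{n}\right)\log\left(1-\frac{2\delta}{n}\right)}=\left(1-\tfrac{2\delta}{n}\right)^{-(n-2\delta)}\approx e^{2\delta}$, a bounded constant, not $e^{-n+2\delta}$; hence the correct estimate is $\vert\phi(x,y)\vert\lesssim \textnormal{poly}(n)\,e^{2\pi\sqrt{n}\xi}\,e^{n}/n^{n}$ (as in the paper), not $\textnormal{poly}(n)\,e^{2\pi\sqrt{n}\xi}/n^{n}$. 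The stated bound still follows, but the operative fact in the last comparison is $p_J+p_{J'}>0$, i.e.\ $1-p_J-p_{J'}<1$, which allows a choice $\theta\in(1-p_J-p_{J'},1)$ so that $\left(\theta/(1-p_J-p_{J'})\right)^{n}$ absorbs the polynomial factor; your appeal to Lemma \ref{lem:pJ} for $1-p_J-p_{J'}$ being ``bounded away from zero'' is not the relevant property here.
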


\begin{proof}
	Again, using the triangle inequality, we get
	\begin{align*}
		\left\vert \int_{\hat{\gamma}_{n, 2} \times \hat{\gamma}_{n, 2}} h_n(x,y) \, dx dy \right\vert & \leq \int_{\hat{\gamma}_{n, 2} \times \hat{\gamma}_{n, 2}} \vert h_n(x,y) \vert \, dx dy \\
		& \leq \max_{\hat{\gamma}_{n, 2} \times \hat{\gamma}_{n, 2}} \vert \phi(x,y) \vert \vert \cot(n\pi x) \vert \vert \cot(n\pi y) \vert \int_{\hat{\gamma}_{n, 2} \times \hat{\gamma}_{n, 2}} 1 \, \vert dx dy \vert. 
	\end{align*}
	From Lemma \ref{lem:cotan2}, we obtain for all $(x,y) \in \hat{\gamma}_{n, 2} \times \hat{\gamma}_{n, 2}$
	\begin{align*}
		\vert \cot(n\pi x) \vert \leq  1 + \tfrac{2}{\sqrt{1-\cos(2\pi \delta)^2}}, \\
		\vert \cot(n\pi y) \vert \leq  1 + \tfrac{2}{\sqrt{1-\cos(2\pi \delta)^2}}.
	\end{align*}
	Since
	\begin{align*}
		\int_{\hat{\gamma}_{n, 2} \times \hat{\gamma}_{n, 2}} 1 \, \vert dx dy \vert \leq  2, 
	\end{align*}
	it suffices to prove that there exists positive constants $C_0$ and $\theta <1$ such that
	\begin{align*}
		\max_{\hat{\gamma}_{n, 2} \times \hat{\gamma}_{n, 2}} \vert \phi(x,y) \vert \leq C_0 e^{2\pi \sqrt{n}\xi} \left(\frac{e \theta}{n(1-p_J -p_{J'})}\right)^n.
	\end{align*}
	
	Let $(x,y) \in \hat{\gamma}_{n, 2} \times \hat{\gamma}_{n, 2}$. We write
	\begin{align*}
		x = \frac{\delta}{n} - si, \\
		y = \frac{\delta}{n} - ti.
	\end{align*}
	with $-\frac{\xi}{\sqrt{n}} \leq s, t \leq \frac{\xi}{\sqrt{n}}.$
	Then, from the Stirling's approximation \eqref{eq:gammaapprox} of the Gamma function, we get that
	\begin{align*}
		\left\vert \phi(x,y) \right\vert \sim e^{n - n\log(n)} \rho_n(x,y) e^{n\textnormal{Re}(\psi(x,y))}, 
	\end{align*}
	where
	\begin{align*}
		\rho_n(x,y) &= \frac{\vert \sqrt{x} - \sqrt{p_J} \vert \vert \sqrt{y} - \sqrt{p_{J'}}\vert}{(\sqrt{2\pi})^3 \vert \sqrt{xy} \sqrt{n(1-x-y)}\vert)},
	\end{align*}
	and 
	\begin{align*}
		\psi(x,y) & = \alpha x + \beta y - x\log(x) - y \log(y) - (1-x-y) \log(1-x-y).
	\end{align*}
	It is straightforward that $\rho_n(x,y)$ is bounded above by some polynomial of $n$. Hence, it suffices to prove that there exist positive constants $C_1$ and $\theta <1$ such that
	\begin{align}
		e^{n\textnormal{Re}(\psi(x,y))} \leq C_1 e^{2\pi \sqrt{n}\xi} \left(\frac{\theta}{1-p_J -p_{J'}}\right)^n. \label{eq:rjjuseful1}
	\end{align}
	From Lemma \ref{lem:realpart}, we get that 
	\begin{align*}
		\textnormal{Re}(\psi(x,y))
		& \leq  \textnormal{Re}(\alpha x + \beta y) - \textnormal{Re}(x)\log(\textnormal{Re}(x) ) - \textnormal{Re}(y)\log(\textnormal{Re}(y)) \\
		& \quad - \textnormal{Re}(1-x-y)\log( \textnormal{Re}(1-x-y)) \\
		& \quad + \frac{\pi}{2} \left( \vert \textnormal{Im}(x) \vert + \vert \textnormal{Im}(y) \vert + \vert \textnormal{Im}(1-x-y) \vert \right).
	\end{align*}
	Since $x = \frac{\delta}{n} - ti$, $y = \frac{\delta}{n} - si$ with $-\frac{\xi}{\sqrt{n}} \leq t,s \leq \frac{\xi}{\sqrt{n}}$, 
	\begin{align*}
		\textnormal{Re}(\psi(x,y))
		& \leq \frac{\delta}{n} (\alpha + \beta) - \frac{2\delta}{n}\log(\frac{\delta}{n}) - \left(1- \frac{2\delta}{n} \right)\log(1- \frac{2\delta}{n}) +  2\pi \frac{\xi}{\sqrt{n}}.
	\end{align*}
	Hence,
	\begin{align*}
		e^{n\textnormal{Re}(\psi(x,y))} \leq \frac{e^{\delta(\alpha + \beta) + 2\pi \sqrt{n} \xi}}{ \left( \frac{\delta}{n}\right)^{2 \delta} \left(1- \tfrac{2\delta}{n} \right)^{n - 2\delta}} = \frac{n^{2\delta} \left(1- \tfrac{2\delta}{n} \right)^{2\delta} e^{\delta(\alpha + \beta)}}{ \delta^{2\delta}} \frac{e^{2\pi \sqrt{n} \xi}}{\left(1- \tfrac{2\delta}{n} \right)^{n}}.
	\end{align*}
	Clearly, \eqref{eq:rjjuseful1} is satisfied and this completes the proof. 
\end{proof}

	\section{Analysis of the efficiency gains}
\subsection{Comparison between GBS-P and MC}
\foreach \N in {3,4,5,6}{
	\foreach \K in {2,3,4,5,6,7,8}{
		\IfFileExists{fig/GBSP2/haf-N_\N-K_\K.png}{%
			\begin{figure}[H]
				\centering
				\includegraphics[width=\linewidth]{fig/GBSP2/haf-N_\N-K_\K.png}
				\caption{GBS-P vs MC with $N = \N$ and $K = \K$.}
			\end{figure}
		}{}
	}
}

\subsection{Comparison bewteen GBS-I and MC}

\foreach \N in {3,4,5,6}{
	\foreach \K in {2,3,4,5,6,7,8}{
		\IfFileExists{fig/GBSI2/hafsq-N_\N-K_\K.png}{%
			\begin{figure}[H]
				\centering
				\includegraphics[width=\linewidth]{fig/GBSI2/hafsq-N_\N-K_\K.png}
				\caption{GBS-I vs MC with $N = \N$ and $K = \K$.}
			\end{figure}
		}{}
	}
}

\section{Convergence analysis}

\subsection{Computing $P^{\textnormal{GBS-P}}_{\Haf}$}

\foreach \N in {3,4,5,6}{
	\foreach \K in {2,3,4,5,6,7,8}{
		\IfFileExists{fig/GBSP/haf-N_\N-K_\K.png}{%
			\begin{figure}[H]
				\centering
				\includegraphics[width=0.5\linewidth]{fig/GBSP/haf-N_\N-K_\K.png}
				\caption{Convergence plot of $P^{\textnormal{GBS-P}}_{\Haf}$ with $N = \N$ and $K = \K$.}
			\end{figure}
		}{}
	}
}

\subsection{Computing $P^{\textnormal{GBS-I}}_{\Haf^2}$}

\foreach \N in {3,4,5,6}{
	\foreach \K in {2,3,4,5,6,7,8}{
		\IfFileExists{fig/GBSI/hafsq-N_\N-K_\K.png}{%
			\begin{figure}[H]
				\centering
				\includegraphics[width=0.5\linewidth]{fig/GBSI/hafsq-N_\N-K_\K.png}
				\caption{Convergence plot of $P^{\textnormal{GBS-I}}_{\Haf^2}$ with $N = \N$ and $K = \K$.}
			\end{figure}
		}{}
	}
}

	\clearpage
	\bibliographystyle{plain}
	\bibliography{haf}
	
\end{document}